\newcommand{\floor}[1]{\left\lfloor #1 \right\rfloor}
\newcommand{\maps}{\rightarrow}
\newcommand{\R}{\mathbb{R}}
\newcommand{\Z}{\mathbb{Z}}
\newcommand{\E}{\mathbb{E}}
\newcommand{\Sym}{\mathbb{S}}
\newcommand{\G}{C(X)}
\newcommand{\spn}[1]{\text{span}\left(#1\right)}
\newcommand{\St}[1]{\left\lbrace #1 \right\rbrace}
\newcommand{\order}[1]{$O\left(#1\right)$}
\newcommand{\rb}[1]{\left( #1 \right)}
{	
	\newtheorem{assumption}{Assumption}
}
{	
	\newtheorem{theorem}{Theorem}
}
{	
	\newtheorem{proposition}{Proposition}
}
{	
	\newtheorem{corollary}{Corollary}
}
{	
	\newtheorem{remark}{Remark}
}
{	
	
}
{
	
}
{
	\newtheorem{example}{Example}
}
{
	\newtheorem{definition}{Definition}
}
\begin{document}

\title{Tighter bounds on transient moments of stochastic chemical systems}

\author{Flemming Holtorf}
\email{holtorf@mit.edu}
\author{Paul I. Barton}%
\altaffiliation{Corresponding Author}
\email{pib@mit.edu}
\affiliation{$^{\text{1)}}$Department of Chemical Engineering, Massachusetts Institute of Technology, Cambridge,
	Massachusetts 02139, USA
}%


\date{5 June 2020}

\begin{abstract}
	The use of approximate solution techniques for the Chemical Master Equation is a common practice for the analysis of stochastic chemical systems. Despite their widespread use, however, many such techniques rely on unverifiable assumptions and only a few provide mechanisms to control the approximation error quantitatively. Addressing this gap, Dowdy and Barton [\textit{The Journal of Chemical Physics}, \textbf{149(7)}, 074103 (2018)] proposed a method for the computation of guaranteed bounds on the moment trajectories associated with stochastic chemical systems described by the Chemical Master Equation, thereby providing a general framework for error quantification. Here, we present an extension of this method. The key contribution is a new hierarchy of convex necessary moment conditions crucially reflecting the temporal causality and other regularity conditions that are inherent to the moment trajectories associated with stochastic processes described by the Chemical Master Equation. Analogous to the original method, these conditions generate a hierarchy of semidefinite programs that furnishes monotonically improving bounds on the trajectories of the moments and related statistics. Compared to its predecessor, the presented hierarchy produces bounds that are at least as tight and it often enables the computation of dramatically tighter bounds as it enjoys superior scaling properties and the arising semidefinite programs are highly structured. We analyze the properties of the presented hierarchy in detail, discuss some aspects of its practical implementation and demonstrate its merits with several examples.
\end{abstract}

\maketitle
\section{Tighter Bounds on the Transient Moments of Stochastic Chemical Systems} \label{chap:chemical_kinetics}
\subsection{Introduction}
The analysis of systems undergoing chemical reactions lies at the heart of many scientific and engineering activities. While deterministic models have proved adequate for the analysis of systems at the macroscopic scale, they often fall short for meso- and microscopic systems; in particular for those that feature low molecular counts. In this regime, the complex and chaotic motion of molecules reacting upon collision causes effectively stochastic fluctuations of the molecular counts that are large compared to the mean and, as a consequence, can have a profound effect on the system's characteristics -- a situation frequently encountered in cellular biology \cite{arkin1998stochastic, elowitz2002stochastic, liu2004fluctuations, artyomov2007purely}. In the context of the continuously growing capabilities of synthetic biology, this fact motivates the use of stochastic models for the identification, design and control of biochemical reaction networks. However, while in these applications stochastic models provide the essential fidelity relative to their deterministic counterparts, their analysis is generally more involved.  

Stochastic chemical systems are canonically modeled as jump processes. Specifically, the system state, as encoded by the molecular counts of the individual chemical species, is modeled to change only discretely in response to reaction events as triggered by the arrivals of Poisson processes whose rates depend on the underlying reaction mechanism. The so-called Chemical Master Equation (CME) describes how the probability distribution of the state of such a process evolves over time. To that end, the CME tracks the probability to observe the system in any reachable state over time. This introduces a major challenge for the analysis of stochastic chemical systems which routinely feature millions or even infinitely many reachable states, rendering a direct solution of the CME intractable. As a consequence, sampling techniques such as Gillespie's Stochastic Simulation Algorithm \cite{gillespie1976general,gillespie1977exact} have become the most prominent approach for the analysis of systems described by the CME. Although these techniques perform remarkably well across a wide range of problems, they are inadequate in certain settings. Most notably, they do not scale well for stiff systems, generally do not provide hard error bounds and the evaluation of sensitivity information is challenging \cite{gillespie2013perspective}. Specifically, the two latter shortcomings limit their utility in the context of identification, design and control. Alternatives such as the finite state projection algorithm \cite{munsky2006finite} come with guaranteed error bounds and straightforward sensitivity evaluation, however, suffer generally more severely from high dimensionality.

From a practical perspective, stochastic reaction networks are often sufficiently characterized by only a few low-order moments of the associated distribution of the state, for example through means and variances. In that case, tractability of the CME may be recovered by solving for the moments of its solution directly. The dynamics of a finite sequence of moments associated with the distribution described by the CME, however, do not generally form a closed, square system, hence do not admit a solution by simple numerical integration. Numerous moment closure approximations \cite{ale2013general, keeling2000multiplicative, nasell2003extension, smadbeck2013closure} have been proposed to remedy this problem. A major shortcoming of moment closure approximations, however, is that they generally rely on unverifiable assumptions about the underlying distribution and therefore introduce an uncontrolled error. In fact, it is well-known that their application can lead to unphysical results such as spurious oscillations and negative mean molecular counts \cite{schnoerr2015comparison,schnoerr2014validity,grima2012study}. 
Addressing this shortcoming, several authors have recently proposed methods for the computation of rigorous, theoretically guaranteed bounds for the moments (or related statistics) associated with stochastic reaction networks; such bounding schemes have been proposed for the steady \cite{dowdy2018bounds,ghusinga2017exact,kuntz2019bounding,sakurai2017convex} and transient setting \cite{dowdy2018dynamic,sakurai2018bounding, del2016control, backenkohler2019bounding}, and their utility for the design of biochemical systems has been demonstrated \cite{sakurai2018optimization}. The key insight underpinning these methods, rooted in real algebraic geometry, is that the moment sequence associated with the true solution of the CME must satisfy a rich set of algebraic conditions reflecting the support and dynamics of the true distribution. Crucially, these conditions only depend on the problem data but do not require explicit knowledge of the true solution. The use of mathematical programming to identify a truncated moment sequence which minimizes (maximizes) a given moment or related statistic of interest subject to these conditions then furnishes a valid lower (upper) bound on the true value. While such bounds provide a mechanism to quantify errors or verify the consistency of approximation techniques, they are even frequently found to be sufficiently tight to be used directly as a proxy for the true solution\cite{sakurai2017convex,ghusinga2017exact,dowdy2018bounds,kuntz2019bounding}.  

In this work, we extend the bounding scheme proposed by \citet{dowdy2018dynamic} for transient moments of the solutions of the CME. To that end, we introduce new necessary moment conditions that improve the tightness of the semidefinite relaxations on which Dowdy and Barton's approach is based. These necessary moment conditions reflect crucially the temporal causality that is inherent to solutions of the CME. The conditions lend themselves to be organized in a hierarchy that provides a mechanism to trade-off computational cost for higher-quality bounds. Moreover, we show that the conditions exhibit favorable scaling properties and structure when compared to the conditions employed in the original method. 

This article is organized as follows. In Section \ref{sec:preliminaries}, we introduce definitions and assumptions, formally define the moment bounding problem, and review essential preliminaries. Section \ref{sec:tighter_bounds} is devoted to the development and analysis of the proposed hierarchy of necessary moment conditions. In Section \ref{sec:practicalities}, we discuss certain aspects pertaining to the use of these conditions for computation of moment bounds in practice. The potential of the developed methodology is demonstrated with several examples in Sections \ref{sec:examples} and \ref{sec:bounding_mechanisms} before we conclude with some open questions in Section \ref{sec:conclusion}.

\section{Preliminaries}\label{sec:preliminaries}
\subsection{Notation}
We denote scalars with lowercase symbols without emphasis while vectors and matrices are denoted by bold lower- and uppercase symbols, respectively. Throughout, vectors are assumed to be column vectors. Generic sets are denoted by uppercase symbols without emphasis. For special or commonly used sets we use the standard notation. For example, for the (non-negative) $n$-dimensional reals and integers, we use the usual notation of $\mathbb{R}^n$ ($\mathbb{R}^n_+$) and $\mathbb{Z}^n$ ($\mathbb{Z}^n_+$), respectively. Similarly, we refer to the set of symmetric and symmetric positive semidefinite (psd) $n$-by-$n$ matrices with $\mathbb{S}^n$ and $\mathbb{S}^n_+$, respectively, and use the usual shorthand notation $\bm{A}\succeq\bm{B}$ for $\bm{A}-\bm{B}\in\mathbb{S}_+^n$. The set of $n$-dimensional vector and symmetric matrix polynomials with real coefficients (of degree at most $k$) in the variables $\bm{x} = [x_1 \ \dots \ x_N]^\top$ will be denoted by $\mathbb{R}^n[\bm{x}]$ ($\mathbb{R}^n_k[\bm{x}]$) and $\mathbb{S}^n[\bm{x}]$ ($\mathbb{S}^n_k[\bm{x}]$), respectively. In order to concisely denote multivariate monomials, we employ the multi-index notation: for a monomial in $n$ variables corresponding to the multi-index $\bm{j} = [j_1\,\dots\,j_n]^{\top} \in \mathbb{Z}_+^n$, we write $\bm{x}^{\bm{j}} = \prod_{i=1}^{n} x_i^{j_i}$. The indicator function of a set $A$ is denoted by $\mathds{1}_{A}$. Lastly, we denote the set of $n$ times continuously differentiable functions on an interval $I\subset\mathbb{R}$ by $\mathcal{C}^n(I)$ while the set of absolutely continuous functions will be denoted by $\mathcal{AC}(I)$. The remaining symbols will be defined as they are introduced. 

\subsection{Problem Statement, Definitions \& Assumptions}
We consider a chemical system featuring $n$ chemical species $S_1,\dots,S_n$ undergoing $n_R$ different reactions. The system state $\bm{x}$ is encoded by the molecular counts of the individual species, i.e., $\bm{x} = [x_1\, \dots\, x_n]^{\top} \in \mathbb{Z}^n_+$. It changes in response to reaction events according to the stoichiometry:
\begin{align*}
    \nu^{-}_{1,r}S_1 + \cdots + \nu^{-}_{n,r}S_N \rightarrow \nu^{+}_{1,r} S_1 + \cdots + \nu^{+}_{n,r} S_n, \quad r = 1, \dots, n_R.  
\end{align*}
Thus, the system state changes by $\bm{\nu}_r = [\nu_{1,r}^+-\nu_{1,r}^- \ \dots  \ \nu_{n,r}^+-\nu_{n,r}^-]^{\top}\in \mathbb{Z}^n$ in response to reaction $r$. We will restrict ourselves to the framework of stochastic chemical kinetics for modeling such systems.

The notion of stochastic chemical kinetics treats the position and velocities of all molecules in the system as random variables; reactions are assumed to occur at collisions with a prescribed probability. Consequently, the evolution of the system state is a continuous-time jump process. Here, we will assume that this jump process can be described by the Chemical Master Equation (CME).
\begin{assumption}\label{asmpt:CME}
	Let $P_{\pi}(\bm{x},t)$ be the probability to observe the system in state $\bm{x}$ at time $t$ given the distribution $\pi$ of the initial state of the system. Then, $P_{\pi}(\bm{x},t)$ satisfies
	\begin{multline}
		\frac{\partial P_{\pi}}{\partial t}(\bm{x},t) = \sum_{r=1}^{n_R} a_r(\bm{x}-\bm{\nu}_r) P_{\pi}(\bm{x}-\bm{\nu}_r,t) - a_r(\bm{x}) P_{\pi}(\bm{x}, t), \\ P_{\pi}(\cdot,0) = \pi \tag{CME} \label{eq:CME} 
	\end{multline}
	where $a_r$ denotes the propensity of reaction $r$, i.e., in state $\bm{x}$, $a_r(\bm{x})dt$ quantifies the probability that reaction $r$ occurs in $[t,t+dt)$ as $dt\rightarrow 0$. 
\end{assumption}
Moreover, we will restrict our considerations to the case of polynomial reaction propensities.
\begin{assumption}\label{asmpt:prop}
	The reaction propensities $a_r$ in \eqref{eq:CME} are polynomials.
\end{assumption}
To ensure the moment trajectories remain well-defined at all times, we will further assume that the stochastic process is well-behaved in the following sense.
\begin{assumption}\label{asmpt:boundedness}
	The number of reaction events occurring in the system within finite time is finite with probability 1. 
\end{assumption}
A consequence of Assumption \ref{asmpt:boundedness} is that the continuous-time jump processes associated with \eqref{eq:CME} is regular \cite{resnick1992adventures}, i.e., it does not explode in finite time. We wish to emphasize that Assumptions \ref{asmpt:CME} -- \ref{asmpt:boundedness} are rather weak; Assumptions \ref{asmpt:CME} and \ref{asmpt:prop} are in line with widely accepted microscopic models \cite{gillespie1992rigorous} while Assumption \ref{asmpt:boundedness} should intuitively be satisfied for any practically relevant system for which the CME is a reasonable modeling approach. Furthermore, Assumption \ref{asmpt:boundedness} is formally necessary for \eqref{eq:CME} to be valid on $\R_+$ \cite{resnick1992adventures}. For a detailed, physically motivated derivation of the CME alongside discussion of the underlying assumptions and potential relaxations thereof, the interested reader is referred to \citet{gillespie1992rigorous}.

Instead of studying the probability distribution $P_{\pi}$ as a description of the system behavior, in this paper we will focus on its moments defined as follows.
\begin{definition}\label{def:moment}
	Let $X$ be the reachable set of the system, i.e., $X = \{ \bm{x} \in \mathbb{Z}_+^n \mid\exists t \geq 0 :  P_{\pi}(\bm{x},t) > 0 \}$, and $\bm{j} \in \mathbb{Z}_+^n$ be a multi-index. The $\bm{j}$\textsuperscript{th} moment of $P_{\pi}(\cdot,t)$ is defined as $y_{\bm{j}}(t) = \sum_{x\in X} \bm{x}^{\bm{j}} P_{\pi}(\bm{x},t)$. $y_{\bm{j}}$ is said to be of order $|\bm{j}| = \sum_{i=1}^n j_i$. The function $y_{\bm{j}}(\cdot)$ is called the trajectory of the $\bm{j}$\textsuperscript{th} moment.  
\end{definition}
Additionally, it will prove useful to introduce the following notion of generalized moments.
\begin{definition}\label{def:generalized_moment}
	Let $y_{\bm{j}}$ be as in Definition \ref{def:moment} and $t_T > 0$. Consider a uniformly bounded Lebesgue integrable function $g:[0,t_T] \rightarrow \mathbb{R}$. The $\bm{j}$\textsuperscript{th} generalized moment of $P_{\pi}$ with respect to $g$ is defined by $z_{\bm{j}}(g;t) = \int_{0}^{t} g(\tau)y_{\bm{j}}(\tau)d\tau$ for $t\in [0,t_T]$. We say $g$ is a test function and generates $z_{\bm{j}}(g;t)$. 
\end{definition}

Under Assumptions \ref{asmpt:CME} and \ref{asmpt:prop}, it is well-known that the dynamics of the $\bm{j}$\textsuperscript{th} moment are described by a linear time-invariant ordinary differential equation (ODE) of the form 
\begin{align}
\frac{dy_{\bm{j}}}{dt}(t)  = \sum_{|\bm{k}|\leq |\bm{j}| + q} c_{\bm{k}} y_{\bm{k}}(t) = \bm{c}^{\top} \bm{y}(t) \label{eq:single_moment}
\end{align}
where $q = \max_{1 \leq r \leq n_R} \text{deg}(a_r) - 1$. The coefficient vector $\bm{c}$ can be readily computed from the reaction propensities and stoichiometry; see for example \citet{gillespie2009moment} for details. For $q>0$, it is clear from \eqref{eq:single_moment} that the dynamics of moments of a certain order in general depend on moments of a higher order. This issue is commonly termed the {\em moment closure problem}. If we denote by $\bm{y}_L$ the vector of ``lower'' order moments up to a specified order, say $m$, and by $\bm{y}_H$ the vector of ``higher'' order moments of order $m+1$ to $m+q$, it is clear from \eqref{eq:single_moment} that we obtain a linear time-invariant ODE system of the form
\begin{align}
	\frac{d\bm{y}_L}{dt}(t) = \bm{A}_L \bm{y}_L(t) + \bm{A}_H \bm{y}_H(t) \nonumber 
\end{align}
with $\bm{A}_L \in \mathbb{R}^{n_L\times n_L}$ and $\bm{A}_H \in\mathbb{R}^{n_L\times n_H}$ where $n_L={n+m \choose n}$ and $n_H = {n+m+q \choose n} - n_L$ denote the number of lower and higher order moments, respectively. For the sake of a more concise notation, throughout we will often omit these subscripts and instead write 
\begin{align}
	\bm{K} \frac{d\bm{y}}{dt}(t) = \bm{A} \bm{y}(t)  \tag{mCME} \label{eq:mCME}
\end{align}
where $\bm{A} = \left[\bm{A}_L \,\, \bm{A}_H\right]$, $\bm{K} = \left[\bm{I}_{n_L \times n_L} \,\, \bm{0}_{n_L \times n_H}\right]$ and $\bm{y}= \left[\bm{y}_L^{\top}\, \bm{y}_H^{\top}\right]^{\top}$. 

In the presence of the moment closure problem, it is clear from the setup of Equation \eqref{eq:mCME} that it does not provide sufficient information to determine uniquely the moment trajectories associated with the solution of \eqref{eq:CME}. In the following, we therefore address the question of how to compute hard, theoretically guaranteed bounds on the true moment trajectory $y_{\bm{j}}(\cdot)$ associated with the solution of \eqref{eq:CME} in this setting. To that end, we build on the work of \citet{dowdy2018dynamic} who have recently proposed an approach to answer this question. In broad strokes, they generate upper and lower bounds by optimizing a moment sequence truncated at a given order subject to a set of {\em necessary moment conditions}, i.e., conditions that the true moment trajectories are guaranteed to satisfy. By increasing the truncation order, the bounds can be successively improved. Our contribution is an extension of Dowdy and Barton's work in the form of a hierarchy of new necessary moment conditions. We show that these conditions provide additional, more scalable bound tightening mechanisms beyond increasing the truncation order and moreover give rise to highly structured optimization problems that can potentially be solved more efficiently than the unstructured problems arising in Dowdy and Barton's method. 

\subsection{Necessary Moment Conditions}\label{sec:NMC}
The bounding method proposed by \citet{dowdy2018dynamic} hinges on necessary moment conditions which restrict the set of potential solutions of \eqref{eq:mCME} as much as possible, yet allow efficient computation. Necessary moment conditions in the form of affine equations and linear matrix inequalities (LMI) have proved to fit that bill. Conditions of this form are of particular practical value as they allow for the computation of the desired bounds via semidefinite programming (SDP). As shown by \citet{dowdy2018dynamic} such affine equations arise from the system dynamics while the LMIs reflect constraints on the support of the underlying probability distribution. In the following, we will sketch their derivation and summarize the key properties that will be leveraged in Section \ref{sec:tighter_bounds} to construct additional necessary moment conditions and establish their properties.

\subsubsection{Linear Matrix Inequalities}
As claimed above, the fact that the solution of the CME $P_{\pi}(\cdot,t)$ is a non-negative measure on $\mathbb{R}^n$ and supported only on $X$ implies that its truncated moment sequences satisfy certain LMIs \cite{lasserre2001global,sakurai2017convex,ghusinga2017exact,dowdy2018bounds,kuntz2019bounding}. The following argument reveals this fact: Consider a polynomial $f \in \mathbb{R}[\bm{x}]$ that is non-negative on $X$; further, let $\bm{b}$ be a vector polynomial obtained by arranging the elements of the monomial basis of the polynomials up to degree $d = \lfloor \frac{m + q - \text{deg}(f)}{2} \rfloor$ in a vector. 
Then, the following generalized inequality where $\E$ denotes the expectation with respect to $P_{\pi}(\cdot, t)$ follows immediately  
\begin{multline*}
\E\big[f\bm{b}\bm{b}^{\top} \big] \succeq  P_{\pi}(\hat{\bm{x}},t)f(\hat{\bm{x}})\bm{b}(\hat{\bm{x}})\bm{b}(\hat{\bm{x}})^{\top} \succeq \bm{0}, \\
\forall (\hat{\bm{x}},t) \in \mathbb{R}^n\times \R_+.
\end{multline*}
It is easy to verify that the above relation can be concisely written as an LMI involving the moment trajectory of $P_\pi$. Concretely, we can write
\begin{align}
	\bm{M}_f(\bm{y}(t)) \succeq \bm{0} \tag{LMI} \label{eq:LMI} 
\end{align}
where $\bm{M}_f:\mathbb{R}^{n_L+n_H} \to \mathbb{S}^{{n+d\choose d}}$ is a {\em linear} map. The precise structure of $\bm{M}_f$ depends on $f$ and is immaterial for all arguments presented in this paper; however, the interested reader is referred to \citet{lasserre2010moments} or \citet{dowdy2018bounds} for a detailed and formal description of the structure of $\bm{M}_f$. As clear from the above argument, the construction of valid LMIs of the form \eqref{eq:LMI} relies merely on polynomials that are non-negative on $X$. For stochastic chemical systems, natural choices of such polynomials include $f(\bm{x}) = 1$ and $f(\bm{x}) = x_i$ for $i = 1,\dots,n$,\cite{sakurai2017convex,ghusinga2017exact,dowdy2018bounds,kuntz2019bounding} reflecting that $P_{\pi}$ is non-negative and in particular not supported on states with negative molecular counts, respectively. More generally, the support of $P_{\pi}(\cdot, t)$ on any basic closed semialgebraic set can be reflected this way, most importantly including the special cases of polyhedra and bounded integer lattices. To account for this flexibility while simplifying notation, we will make use of the following definition and shorthand notation.
\begin{definition}\label{def:cone}
	Let $f_0, \dots, f_{n_p}$ be polynomials that are non-negative on the reachable set $X$. The convex cone described by these LMIs is denoted by $C(X)$, i.e., $C(X) = \{ \bm{y} \in \mathbb{R}^{n_L+n_H} \mid \bm{M}_{f_i}(\bm{y}) \succeq \bm{0}, \ i = 0,\dots,n_p \}$.
\end{definition}

Lastly, we note that the validity of LMIs of the form \eqref{eq:LMI} carries over to the generalized moments that are generated by {\em non-negative} test functions. To see this, observe that the linearity of $\bm{M}_{f}$ implies that
\begin{align*}
	\bm{M}_{f}(\bm{z}(g;t))  = \int_{0}^{t} g(\tau) \bm{M}_{f}(\bm{y}(\tau)) \, d\tau 
\end{align*}
holds. Now assuming $g$ is non-negative on $\R_+$ and applying Jensen's inequality to the extended convex indicator function of the positive semidefinite cone, $\mathds{1}^{\infty}_{\mathbb{S}_+}$, therefore yields
\begin{align*}
    0 \leq \mathds{1}^{\infty}_{\mathbb{S}_+}\rb{\bm{M}_{f}(\bm{z}(g;t))} \leq  \int_{0}^t g(\tau) \mathds{1}^{\infty}_{\mathbb{S}_+}\rb{\bm{M}_{f}(\bm{y}(\tau))} \, d\tau = 0
\end{align*}
and hence $\bm{M}_{f}(\bm{z}(g;t)) \succeq \bm{0}$ must hold for any $t \geq 0$ in analogy to \eqref{eq:LMI}.

\subsubsection{Affine Constraints}
As noted in the beginning of this section, the moment dynamics \eqref{eq:mCME} give rise to affine constraints that the moments and generalized moments must satisfy. To see this, consider a test function $g \in \mathcal{AC}([0,t_T])$ and final time $t_f \leq t_T$. Then, as proposed by \citet{dowdy2018dynamic}, integrating $\int_0^{t_f} g(t)\frac{d\bm{y}_L}{dt}(t) \, dt$ by parts yields the following set of affine equations
\begin{align}
	\bm{K}\left(g(t_f)\bm{y}(t_f) - g(0) \bm{y}(0)\right)= \bm{A} \bm{z}(g;t_f) + \bm{K} \bm{z}(g';t_f). \label{eq:affine_dynamics}
\end{align}
We wish to emphasize here that the above constraints are vacuous if $\bm{z}(g;t)$ and $\bm{z}(g';t)$ are no further restricted. This observation motivates necessary restrictions on $g$ to generate ``useful" generalized moments. Recalling the discussion in Section \ref{sec:NMC}, one may be tempted to argue that $g$ and $g'$ shall be non-negative (or non-positive) on $[0,t_f]$ so that the generated generalized moments satisfy LMIs of the form \eqref{eq:LMI}. In fact, \citet{dowdy2018dynamic} as well as \citet{sakurai2018bounding} demonstrate that this is indeed a reasonable strategy; they use exponential and monomial test functions, respectively. However, in principle a wider range of test functions can be used. We defer the discussion of this issue to Section \ref{sec:tighter_bounds}.

\section{Tighter Bounds}\label{sec:tighter_bounds}
\subsection{An Optimal Control Perspective}
Some of the conservatism in the original method of \citet{dowdy2018dynamic} stems from the fact that the moments are only constrained in an {\em integral} or {\em weak} sense, i.e., $\bm{z}(g;t_f) = \int_0^{t_f} g(\tau) \bm{y}(\tau) \, d\tau$ is constrained as opposed to $\bm{y}(t)$ for all $t \in [0,t_f]$. This is potentially a strong relaxation as in fact the entire trajectory must satisfy the necessary moment conditions. Moreover, by Assumption \ref{asmpt:boundedness}, the moment trajectories remain bounded at all times, which, taken together with the fact that they satisfy the ODE \eqref{eq:mCME}, shows that they are guaranteed to be infinitely differentiable. Using these two additional pieces of information, we argue that the following continuous-time optimal control problem provides an elementary starting point for addressing the question of how to bound the moment trajectories associated with a stochastic chemical system evaluated at a given time point $t_f$: 
\begin{align}
    \inf_{\bm{y} \in \mathcal{C}^{\infty}(\R_+)} \quad &y_{\bm{j}}(t_f) \tag{OCP} \label{eq:OCP}\\
    \text{s.t.} \quad &\frac{d\bm{y}_L}{dt}(t) = \bm{A}_L\bm{y}_L(t) + \bm{A}_H\bm{y}_H(t),\quad \forall t \in \R_+, \nonumber \\
    &\bm{y}(0) = \bm{y}_{0}, \nonumber \\
    &\bm{y}(t) \in C(X), \quad \forall t \in \R_+. \nonumber 
\end{align}
Here, the ``lower'' order moments $\bm{y}_L$ act as the state variables while the ``higher'' order moments $\bm{y}_H$ can be viewed as control inputs. Although the infinite dimensional nature of Problem \eqref{eq:OCP} leaves it with little immediate practical relevance, this representation is conceptually informative. In fact, it is not hard to verify that the method proposed by \citet{dowdy2018dynamic} provides a systematic way to construct tractable relaxations of \eqref{eq:OCP} in the form of SDPs. However, Dowdy and Barton's method does in no way reflect the dependence of $\bm{y}(t_f)$ on past values of $\bm{y}(t)$ other than $\bm{y}(0)$ nor the fact that $\bm{y} \in \mathcal{C}^{\infty}(\R_+)$. As we will show in the following, these observations motivate new necessary moment conditions giving rise to a hierarchy of tighter SDP relaxations of Problem \eqref{eq:OCP} than those constructed by Dowdy and Barton's method\cite{dowdy2018dynamic}.

\subsection{A New Hierarchy of Necessary Moment Conditions}\label{sec:hierarchies}
In this section, we present the key contribution of this article -- a new hierarchy of convex necessary moment conditions that reflect the temporal causality and regularity conditions inherent to the moment trajectories associated with the distribution described by the CME. To provide some intuition for these results, we will first discuss some special cases of the proposed conditions which permit a clear interpretation. To that end, recall that the moment trajectory $y_{\bm{j}}(\cdot)$ must be infinitely differentiable on $\mathbb{R}_+$ as all moment trajectories remain bounded by Assumption \ref{asmpt:boundedness} and obey the linear time-invariant dynamics \eqref{eq:mCME}. As a consequence, the Taylor polynomial
\begin{align*}
	\bm{T}_l(y_{\bm{j}};t_1,t_2) &= \sum_{k=0}^l \frac{(t_2 - t_1)^k}{k!} \frac{d^ky_{\bm{j}}}{dt^k}(t_1) 
\end{align*}
and remainder
\begin{align*}
	\bm{R}_l(y_{\bm{j}};t_1,t_2) & = \frac{1}{l!} \int^{t_2}_{t_1} (t_2-t)^{l} \frac{d^{l+1} y_{\bm{j}}}{dt^{l+1}}(t) \, dt
\end{align*}
are well-defined for any $0 \leq t_1 \leq t_2 < +\infty$ and order $l \geq 0$. A key observation here is that if $|\bm{j}|$ and $l$ are sufficiently small\footnote{if $|\bm{j}|$ or $l$ grow too large, the Taylor polynomial or remainder may depend on moments of higher order than $m+q$ but the dependence will still be linear}, then $\bm{T}_l(y_{\bm{j}};t_1,t_2)$ and $\bm{R}_l(y_{\bm{j}};t_1,t_2)$ depend linearly on $\bm{y}(t_1)$ and $\bm{z}(g_l;t_2)$ with $g_l(t)=\mathds{1}_{[t_1,t_2]}(t)(t_2-t)^l$, respectively. Formally, we can write  
\begin{align}
    \begin{array}{l}
         \bm{T}_l(y_{\bm{j}};t_1,t_2)  =\bm{c}_{l,\bm{j}}(t_1,t_2)^\top \bm{y}(t_1) \\[1em]
         \bm{R}_l(y_{\bm{j}};t_1,t_2) = \bm{d}_{l,\bm{j}}(t_1,t_2)^\top\bm{z}(g_l;t_2) 
    \end{array} \label{eq:taylor_coeff}
\end{align}
for an appropriate choice of the coefficient vectors.

Overall, this observation suggests to employ conditions of the form
\begin{align*}
	y_{\bm{j}}(t_2) = \bm{T}_l(y_{\bm{j}};t_1,t_2) + \bm{R}_l(y_{\bm{j}};t_1,t_2) 
\end{align*}
at different time points along the trajectory as necessary moment conditions. In fact these conditions achieve exactly what we set out to do: they establish a connection between $y_{\bm{j}}(t_2)$ and its past using the smoothness properties of the trajectory $y_{\bm{j}}$. Further, it is straightforward to see that analogous conditions are readily obtained for any generalized moment generated by a sufficiently smooth test function. The above conditions hence appear to be a promising starting point. From a practical perspective, however, they merely suggest a particular choice of test functions as revealed by the following proposition.
	\begin{proposition}\label{prop:increment_conditions}
		Let $0 \leq t_1 \leq t_2 < + \infty$ and $n_I \leq \floor{\frac{m}{q}}$. Further, consider test functions of the form $g_l(t) = \mathds{1}_{[t_1,t_2]}(t)(t_2-t)^l$. If $\bm{y}_{t_1}, \bm{y}_{t_2} \in \R^{n_L+n_H}$ and $\bm{z}_{g_l,t_2} \in \R^{n_L+n_H}$ satisfy 
		\begin{align}
		\bm{K}\left(g_l(t_2) \bm{y}_{t_2} - g_l(t_1)\bm{y}_{t_1} \right) = \bm{A} \bm{z}_{g_l,t_2} - l\bm{K}\bm{z}_{g_{l-1},t_2} \label{eq:taylor_refined}
		\end{align}
		for $l = 0,\dots,n_I$, then $\bm{y}_{t_1}, \bm{y}_{t_2}$ and $\bm{z}_{g_l,t_2}$ also satisfy 
		\begin{align}
		    y_{\bm{j},t_2} = \bm{c}_{l,\bm{j}}(t_1,t_2)^\top\bm{y}_{t_1} + \bm{d}_{l,\bm{j}}(t_1,t_2)^\top\bm{z}_{g_l,t_2} \label{eq:taylor}
		\end{align} 
		for $l=0,\dots,n_I$ and $\bm{j}$ such that $|\bm{j}|\leq m-lq$, where $\bm{c}_{l,\bm{j}}$ and $\bm{d}_{l,\bm{j}}$ are defined as in \eqref{eq:taylor_coeff}. 
	\end{proposition}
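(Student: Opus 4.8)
The plan is to show that the recursion \eqref{eq:taylor_refined}, unrolled from $l=0$ up to the given order, is exactly the by-parts expansion of the Taylor-with-integral-remainder identity, so that once \eqref{eq:taylor_refined} holds for all $l=0,\dots,n_I$ the desired relations \eqref{eq:taylor} follow by direct substitution. The two ingredients I would isolate at the outset are: (i) the exact Taylor identity $y_{\bm{j}}(t_2) = \bm{T}_l(y_{\bm{j}};t_1,t_2) + \bm{R}_l(y_{\bm{j}};t_1,t_2)$, valid for every $l\ge 0$ because $y_{\bm{j}}$ is $\mathcal{C}^\infty$ (guaranteed by Assumption \ref{asmpt:boundedness} together with \eqref{eq:mCME}); and (ii) the linear-representability claim \eqref{eq:taylor_coeff}, i.e.\ that for $|\bm{j}|\le m-lq$ the derivatives $\tfrac{d^k y_{\bm{j}}}{dt^k}(t_1)$ for $k\le l$ are linear functionals of $\bm{y}(t_1)$, and the remainder $\bm{R}_l$ is a linear functional of $\bm{z}(g_l;t_2)$. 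Point (ii) is really a bookkeeping statement about \eqref{eq:single_moment}: differentiating $y_{\bm{j}}$ once raises the relevant moment order by at most $q$, so $k$ differentiations stay within order $m$ as long as $|\bm{j}|+kq\le m$, which is the stated hypothesis; this is where the constraint $n_I\le\lfloor m/q\rfloor$ enters.

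\textbf{Step 1.} I would first record the exact Taylor identity and the linear representations \eqref{eq:taylor_coeff}, defining $\bm{c}_{l,\bm{j}}(t_1,t_2)$ by expanding $\bm{T}_l$ via $\tfrac{d^k y_{\bm{j}}}{dt^k}(t_1) = (\bm{A}\text{-action})^k$ applied to $\bm{y}(t_1)$ componentwise (picking off the $\bm{j}$-row), and $\bm{d}_{l,\bm{j}}(t_1,t_2)$ by writing $\bm{R}_l(y_{\bm{j}};t_1,t_2) = \tfrac{1}{l!}\int_{t_1}^{t_2}(t_2-t)^l \tfrac{d^{l+1}y_{\bm{j}}}{dt^{l+1}}(t)\,dt$ and substituting $\tfrac{d^{l+1}y_{\bm{j}}}{dt^{l+1}} = (\bm{A}\text{-action})^{l+1}\bm{y}$, so the integrand is a linear functional of $\bm{y}(t)$ against the weight $(t_2-t)^l = g_l(t)$ on $[t_1,t_2]$, hence a linear functional of $\bm{z}(g_l;t_2)$. \textbf{Step 2.} I would then verify that \eqref{eq:taylor_refined} is precisely the integration-by-parts relation \eqref{eq:affine_dynamics} specialized to the test function $g_l$, noting $g_l'(t) = -l\,g_{l-1}(t)$ on $(t_1,t_2)$ and $g_l(t_1)=(t_2-t_1)^l$, $g_l(t_2)=0$ for $l\ge 1$ while $g_0\equiv 1$; this identifies \eqref{eq:taylor_refined} as exactly the weak dynamic constraint that any true moment trajectory satisfies, and conversely that abstract vectors $\bm{y}_{t_1},\bm{y}_{t_2},\bm{z}_{g_l,t_2}$ satisfying \eqref{eq:taylor_refined} satisfy the same algebra. \textbf{Step 3.} The heart of the argument is an induction on $l$: assuming \eqref{eq:taylor_refined} holds for $0,\dots,l$, I would show that repeatedly applying \eqref{eq:taylor_refined} telescopes the integral-remainder term $\bm{K}\bm{z}_{g_{l-1},t_2}$ of one order into the next, reproducing the finite Taylor sum $\bm{T}_l$ against $\bm{y}_{t_1}$ plus the residual integral against $\bm{z}_{g_l,t_2}$ — i.e.\ the manipulation is the discrete counterpart of iterating integration by parts on $\int_{t_1}^{t_2}(t_2-t)^l y_{\bm{j}}^{(l+1)}(t)\,dt$. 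Matching coefficients row by row against the definitions from Step 1 then yields \eqref{eq:taylor} for each $l=0,\dots,n_I$ and each admissible $\bm{j}$.

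\textbf{The main obstacle} I anticipate is not conceptual but notational: carrying out Step 3 cleanly requires tracking how the coefficient vectors $\bm{c}_{l,\bm{j}}$ and $\bm{d}_{l,\bm{j}}$ transform as $l$ increments, and in particular making precise the claim that the $\bm{K}$-projection in \eqref{eq:taylor_refined} does not lose information needed to close the recursion — because the lower-order block $\bm{y}_L$ is exactly the part whose dynamics are governed by \eqref{eq:mCME}, and one must check that every derivative $\tfrac{d^k y_{\bm{j}}}{dt^k}(t_1)$ appearing in $\bm{T}_l$ is reconstructible from $\bm{K}$-rows of iterated $\bm{A}$ applied to $\bm{y}_{t_1}$ when $|\bm{j}|+kq\le m$. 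So I would be careful to state the admissibility bound $|\bm{j}|\le m-lq$ as the precise condition under which the $l$-th order data stays inside the truncation, and to note that $n_I\le\lfloor m/q\rfloor$ guarantees a nonempty set of such $\bm{j}$ at every level. A secondary point worth making explicit is that all the integrals defining $\bm{z}(g_l;\cdot)$ are finite — this is where Assumption \ref{asmpt:boundedness} is used a second time, via the uniform boundedness of $y_{\bm{j}}$ needed for $g_l$ to be an admissible test function in the sense of Definition \ref{def:generalized_moment}.
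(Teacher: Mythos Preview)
Your approach is correct and matches what the paper's proof (deferred to the SI) does, as signaled by the Remark immediately following the proposition: the argument is a purely algebraic induction showing that the relations \eqref{eq:taylor_refined} for $l=0,\dots,n_I$ telescope into \eqref{eq:taylor}, with the order constraint $|\bm{j}|\le m-lq$ being precisely what guarantees that each substitution of $(\bm{A}\bm{z}_{g_{l-1},t_2})_{\bm{k}}$ via the next-level relation stays within the $\bm{K}$-rows. One minor over-emphasis to trim: your ingredient (i) and the second invocation of Assumption \ref{asmpt:boundedness} are not needed for the proposition as stated, since the claim concerns \emph{arbitrary} vectors $\bm{y}_{t_1},\bm{y}_{t_2},\bm{z}_{g_l,t_2}\in\mathbb{R}^{n_L+n_H}$ satisfying \eqref{eq:taylor_refined}; once the coefficient vectors $\bm{c}_{l,\bm{j}},\bm{d}_{l,\bm{j}}$ are read off from the structure of $\bm{A}$ (your Step 1), the proof is pure linear algebra and no analytic facts about the true trajectories enter.
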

\begin{proof}
	The proof is deferred to Supplementary Information (SI). 
\end{proof}
\begin{remark}
	Note that Condition \eqref{eq:taylor_refined} is analogous to Condition \eqref{eq:affine_dynamics} as obtained for the test function $g_l$ with a shifted origin, hence it is a necessary moment condition. Further, we wish to emphasize that Condition \eqref{eq:taylor_refined} is in general more stringent than Condition \eqref{eq:taylor} as is made clear in the proof.
\end{remark}
Beyond a specific choice of test functions, the above considerations motivate a broader strategy to generate necessary moment conditions that reflect causality. This strategy can be summarized as ``discretize and constrain''. Instead of imposing Condition \eqref{eq:affine_dynamics} on the entire time horizon $[0,t_f]$ as proposed by \citet{dowdy2018dynamic}, the time horizon can be partitioned into $n_{\mathsf{T}}$ subintervals $[t_{i-1}, t_i]$ with $0= t_0< t_1 < \cdots < t_{n_{\mathsf{T}}} = t_f$ on which analogous conditions obtained from integrating $\int_{t_{i-1}}^{t_i} g(\tau) \bm{K} \frac{d\bm{y}}{dt}(\tau) \, d\tau$ by parts can be imposed:
\begin{multline}
	\bm{K}\left(g(t_i) \bm{y}(t_i) - g(t_{i-1})\bm{y}(t_{i-1}) \right) = \\ \bm{A} (\bm{z}(g;t_i) - \bm{z}(g;t_{i-1}) )+ \bm{K}(\bm{z}(g';t_i) - \bm{z}(g';t_{i-1})). \label{eq:increment_affine}
\end{multline} 
While by itself this does not provide any restriction over Condition \eqref{eq:affine_dynamics}, the following observation makes it worthwhile: the generalized moments generated by a non-negative test function $g$ form a monotonically increasing sequence with respect to the convex cone $\G$. This follows immediately from the definition of $\bm{z}(g;t)$ and Jensen's inequality as described in Section \ref{sec:NMC}; more formally, 
\begin{align}
	\bm{z}(g;t_i) - \bm{z}(g;t_{i-1}) \in \G, \quad i = 1,\dots,n_{\mathsf{T}} \label{eq:increment_conic}
\end{align}
are necessary moment conditions. The Conditions \eqref{eq:increment_affine} \& \eqref{eq:increment_conic} are generally a non-trivial restriction of the Conditions \eqref{eq:affine_dynamics} \& $\bm{z}(g;t_f) \in \G$ as employed by \citet{dowdy2018dynamic}. To see this, simply observe that we recover Equation \eqref{eq:affine_dynamics} by summing the Equations \eqref{eq:increment_affine} over $i=1,\dots,n_{\mathsf{T}}$ and likewise obtain 
\begin{align*}
    \bm{z}(g;t_f) = \sum_{i=1}^{n_{\mathsf{T}}}\bm{z}(g;t_i) - \bm{z}(g;t_{i-1}) \in \G,
\end{align*}
using that $\G$ is a convex cone and $\bm{z}(g;0) = \bm{0}$ by definition. 

The above described strategies lend themselves to generalization in terms of a hierarchy of necessary moment conditions. This generalization can be performed in several equivalent ways. Next we will present one such generalization utilizing a concept which we refer to as iterated generalized moments: 
\begin{definition} \label{def:iterated_generalized_moments}
	Let $z_{\bm{j}}(g;t)$ be the $\bm{j}$\textsuperscript{th} generalized moment as per Definition \ref{def:generalized_moment}. Then, the iterated generalized moment of Level $l \geq 0$ is defined by
	\begin{align}
		z^l_{\bm{j}}(g;t) = \begin{cases}
						\int_{0}^t z_{\bm{j}}^{l-1}(g;\tau)d\tau, &\quad l \geq 1 \\
						g(t) y_{\bm{j}}(t), &\quad l=0
				   \end{cases}.\nonumber 
	\end{align}
\end{definition}
For the sake of simplified notation and analysis, it will further prove useful to introduce the following left and right integral operators $I_{L},I_{R}: \mathcal{C}(\R^2) \maps \mathcal{C}(\R^2)$ given by
\begin{align*}
	(I_L f)(t_1,t_2) = \int_{t_1}^{t_2} f(t_1, t) \, dt\\
	(I_R f)(t_1,t_2) = \int_{t_1}^{t_2} f(t, t_2) \, dt.
\end{align*}
For vector-valued functions, $I_L$ and $I_R$ shall be understand as being applied componentwise.  

With these two concepts in hand, the following proposition formalizes the proposed hierarchy of necessary moment conditions.
\begin{proposition}\label{prop:integral_hierarchy}
	Let $ t_T > 0$ and consider a non-negative test function $g \in \mathcal{AC}([0,t_T])$. Further, let $\bm{y}$ be the truncated sequence of moment trajectories associated with the solution of \eqref{eq:CME}, and $\bm{z}^l$ the corresponding iterated generalized moments. Then, the following conditions hold for any $l \geq 1$:
	\begin{enumerate}[(i)]
		\item For any $t\in [0,t_T]$
		\begin{multline*}
			\bm{A}\bm{z}^l(g;t) + \bm{K} \bm{z}^l(g';t) =\\ \bm{K} \left(\bm{z}^{l-1}(g;t) - \frac{t^{l-1}}{(l-1)!} g(0) \bm{y}(0)\right)
		\end{multline*} 
		\item Let $\bm{f}(x,y) = \bm{z}^1(g;y) - \bm{z}^1(g;x)$. Then, for any $0 \leq t_1 \leq t_2 \leq t_T$ and $k \in \St{0,\dots,l-1}$ 
		\begin{align*}
			(I_L^{l-1-k}I^{k}_R \bm{f})(t_1,t_2) \in \G.
		\end{align*}
	\end{enumerate}
\end{proposition}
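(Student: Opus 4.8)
The plan is to prove both parts by a short induction on the level $l$, in each case bottoming out at a fact already available from Section~\ref{sec:NMC}: part~(i) at the integration-by-parts identity \eqref{eq:affine_dynamics}, and part~(ii) at the Jensen-type observation from Section~\ref{sec:NMC} that an integral of $C(X)$-valued data against a non-negative weight stays in $C(X)$. Throughout I will use that, by Assumption~\ref{asmpt:boundedness}, $\bm{y}$ is $\mathcal{C}^{\infty}$ and bounded on $[0,t_T]$, and that $g\in\mathcal{AC}([0,t_T])$ is bounded with $g'\in L^1([0,t_T])$; this ensures that all iterated integrals defining $\bm{z}^l(g;\cdot)$ and $\bm{z}^l(g';\cdot)$ for $l\geq 1$, as well as the bivariate functions arising in part~(ii), are well defined and continuous.

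For part~(i), the base case $l=1$ states $\bm{A}\bm{z}^1(g;t)+\bm{K}\bm{z}^1(g';t)=\bm{K}\left(g(t)\bm{y}(t)-g(0)\bm{y}(0)\right)$; since $\bm{z}^1(g;\cdot)=\bm{z}(g;\cdot)$ and $\bm{z}^1(g';\cdot)=\bm{z}(g';\cdot)$ by Definition~\ref{def:iterated_generalized_moments}, this is precisely \eqref{eq:affine_dynamics} evaluated with $t_f=t$. For the induction step I would integrate the level-$l$ identity over $[0,t]$: Definition~\ref{def:iterated_generalized_moments} gives $\int_0^t\bm{z}^l(g;\tau)\,d\tau=\bm{z}^{l+1}(g;t)$ and $\int_0^t\bm{z}^l(g';\tau)\,d\tau=\bm{z}^{l+1}(g';t)$, while $\int_0^t\frac{\tau^{l-1}}{(l-1)!}\,d\tau=\frac{t^l}{l!}$; pulling the constant matrices $\bm{A}$ and $\bm{K}$ through the integral then reproduces the claimed identity at level $l+1$. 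No real obstacle is expected here beyond keeping the indices straight.

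For part~(ii), I would first record that $\bm{f}(x,y)=\bm{z}^1(g;y)-\bm{z}^1(g;x)=\int_x^y g(\tau)\bm{y}(\tau)\,d\tau$, so that $\bm{f}(x,y)\in C(X)$ for every $(x,y)$ in the triangle $\Omega:=\{(x,y): 0\leq x\leq y\leq t_T\}$, by the Jensen argument of Section~\ref{sec:NMC} applied to $g\geq 0$ and $\bm{y}(\tau)\in C(X)$. The heart of the proof is then the observation that each of $I_L$ and $I_R$ maps the set of functions that are $C(X)$-valued on $\Omega$ into itself: for such an $\bm{h}$ and $(x,y)\in\Omega$, the vector $(I_R\bm{h})(x,y)=\int_x^y\bm{h}(t,y)\,dt$ is an integral over $t\in[x,y]$ of the $C(X)$-valued quantities $\bm{h}(t,y)$ (note that $(t,y)\in\Omega$ since $0\leq x\leq t\leq y\leq t_T$), hence lies in $C(X)$ by the same Jensen argument applied to each defining LMI; symmetrically for $I_L$. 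Starting from $\bm{f}$ and applying $I_R$ a total of $k$ times followed by $I_L$ a total of $l-1-k$ times then yields $(I_L^{l-1-k}I_R^k\bm{f})(t_1,t_2)\in C(X)$ for every $(t_1,t_2)\in\Omega$, which is the claim.

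I expect the only genuinely delicate points --- and hence the main obstacle --- to lie in the bookkeeping of part~(ii): one must verify that every evaluation appearing under the repeated integrals has both arguments in $\Omega$ (which is immediate, since at each stage $t\in[x,y]\subseteq[0,t_T]$), and one must appeal to the principle that an integral of $C(X)$-valued functions against a non-negative weight stays in $C(X)$ --- this is exactly the extended-convex-indicator/Jensen computation already spelled out in Section~\ref{sec:NMC}, and it transfers verbatim because $C(X)$ is a closed convex cone cut out by the LMIs $\bm{M}_{f_i}(\cdot)\succeq\bm{0}$ with linear $\bm{M}_{f_i}$. If a more explicit form is desired, repeated use of Fubini's theorem gives $(I_L^{l-1-k}I_R^k\bm{f})(t_1,t_2)=\frac{1}{(l-1-k)!\,k!}\int_{t_1}^{t_2}(t_2-\tau)^{l-1-k}(\tau-t_1)^k g(\tau)\bm{y}(\tau)\,d\tau$, from which membership in $C(X)$ is manifest since the weight is non-negative on $[t_1,t_2]$, and which additionally shows the result is independent of the order in which $I_L$ and $I_R$ are composed.
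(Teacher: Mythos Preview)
Your proposal is correct and follows essentially the same route as the paper: part~(i) is obtained by starting from \eqref{eq:affine_dynamics} (the $l=1$ case) and integrating the level-$l$ identity once to get level $l+1$, and part~(ii) is the same Jensen-based induction showing that each application of $I_L$ or $I_R$ preserves $C(X)$-valuedness on the triangle $\{0\le x\le y\le t_T\}$, with the base case $\bm{f}(x,y)\in C(X)$. Your closing Fubini computation giving the explicit kernel $(t_2-\tau)^{l-1-k}(\tau-t_1)^{k}g(\tau)/[(l-1-k)!\,k!]$ is a pleasant extra that the paper defers to the supplementary material, but the core argument is the same.
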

\begin{proof}
	It is easily verified that Condition (i) is obtained from integrating Equation \eqref{eq:affine_dynamics} $l-1$ times. Validity of Condition (ii) follows by a similar inductive argument: Since $\bm{y}(t) \in \G$ for all $t \in [0,t_T]$, it follows by non-negativity of $g$ on $[0,t_T]$ and Jensen's inequality that
	\begin{align*}
		\bm{z}^1(g;t_2) - \bm{z}^1(g;t_1) = \int_{t_1}^{t_2} g(t) \bm{y}(t)\, dt  \in \G
	\end{align*}
	for any $0 \leq t_1 \leq t_2 \leq t_T$. Now suppose Condition (ii) is satisfied for $l-1$. Then, it follows by Jensen's inequality that for any $0 \leq t_1 \leq t_2 \leq t_T$ and $k=0,\dots,l-2$
	\begin{align*}
		(I_L^{l-1-k} I_R^{k}\bm{f})(t_1,t_2) = \int_{t_1}^{t_2} (I_L^{l-2-k} I_R^{k}\bm{f})(t_1,t) \, dt \in \G.
	\end{align*}	
	For $k = l-1$, an analogous argument applies.
\end{proof}
Before we proceed, a few remarks are in order to contextualize this result. 
\begin{remark}\label{rmk:dowdy_special_case}
	Choosing $l=1$, $t_1=0$ and $t_2 = t_f$ reproduces the necessary moment conditions proposed by \citet{dowdy2018dynamic}.
\end{remark}
\begin{remark}
     Regarding Condition (ii), one might be tempted to argue that any permutation of the operator products $I_L$ and $I_R$ of length $l-1$ applied to $\bm{f}(x,y) = \bm{z}^1(g;y) - \bm{z}^1(g;x)$ gives rise to a new valid necessary moment condition. It can be confirmed, however, that $I_L$ and $I_R$ commute such that Condition (ii) is invariant under permutation of $I_L$ and $I_R$; a more detailed discussion of this claim can be found in the SI.
\end{remark}
\begin{remark}
	We wish to emphasize that Conditions (i) and (ii) depend affinely on the iterated generalized moments up to Level $l$ evaluated at $t_1$ and $t_2$, respectively. Accordingly, they preserve the computational advantages of the original necessary moment conditions. To avoid notational clutter in the remainder of this article, however, we will disguise this fact and concisely denote the left-hand-side of Condition (ii) by 
	\begin{align*}
		\Omega_{l,k}\left(\Set{\bm{z}^i(g;t_1)}_{i=1}^{l}, \Set{\bm{z}^i(g;t_2)}_{i=1}^{l}, t_1, t_2\right). 
	\end{align*}
	 An explicit algebraic expression for $\Omega_{l,k}$ is provided in the SI.
\end{remark}
\begin{remark}
	For the $\bm{0}$-th order moments, additional constraints arise from the definition as 
	\begin{align*}
		z_{\bm{0}}^l(g;t) = \begin{cases}
			\int_0^t z_{\bm{0}}^{l-1}(g;\tau)d\tau, \ &l \geq 1\\
			g(t), \ &l=0
		\end{cases} 
	\end{align*}
	can be evaluated explicitly. 
\end{remark}
It is crucial to mention that Condition (i) in Proposition \ref{prop:increment_conditions} is effectively unrestrictive unless $\bm{z}^l(g';t)$ can be further constrained. The following proposition provides a concrete guideline which test functions allow to circumvent this issue.
\begin{proposition}\label{prop:test_functions}
	Let $F$ be a finite set of test functions such that $\spn{F}$ is closed under differentiation. Then, for any $f \in F$, there exists a linear map $\Gamma_f$ such that Condition (i) in Proposition \ref{prop:integral_hierarchy} is equivalent to
	\begin{align*}
	    \Gamma_f\left(\St{\bm{z}^l(g;t)}_{g\in F} \right) = \bm{K} \left(\bm{z}^{l-1}(f;t) -\frac{t^{l-1}}{(l-1)!} f(0) \bm{y}(0) \right).
	\end{align*}
\end{proposition}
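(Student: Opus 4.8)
The plan is to obtain the reformulation by a direct substitution, resting on two facts: that the map sending a test function to its iterated generalized moment is \emph{linear} in the test function, and that closure of $\spn{F}$ under differentiation lets the derivative $f'$ of any $f \in F$ be re-expressed through the members of $F$. Together these rewrite Condition (i) of Proposition \ref{prop:integral_hierarchy}, instantiated at each $f \in F$, entirely in terms of the iterated generalized moments generated by the functions in $F$, which is exactly what the proposition asserts.

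First I would record the linearity. By Definition \ref{def:generalized_moment}, $z_{\bm j}(g;t) = \int_0^t g(\tau) y_{\bm j}(\tau)\, d\tau$ is linear in $g$; an induction on the level $l$, using that $z_{\bm j}^l(g;\cdot) = \int_0^{(\cdot)} z_{\bm j}^{l-1}(g;\tau)\, d\tau$ in Definition \ref{def:iterated_generalized_moments} is a linear operation, shows that $g \mapsto z_{\bm j}^l(g;t)$ is linear for every $\bm j$, $t$, and $l \geq 0$, and hence so is the assembled map $g \mapsto \bm z^l(g;t)$ into $\R^{n_L+n_H}$.

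Next, fix $f \in F$. Since $f \in \spn{F}$ and, by hypothesis, $\spn{F}$ is closed under differentiation, $f' \in \spn{F}$, so there exist scalars $\St{\alpha_{f,g}}_{g\in F}$ with $f' = \sum_{g \in F} \alpha_{f,g}\, g$; by the linearity just established this gives $\bm z^l(f';t) = \sum_{g\in F} \alpha_{f,g}\, \bm z^l(g;t)$. Applying Condition (i) of Proposition \ref{prop:integral_hierarchy} to the test function $f$ — which is legitimate since (i) follows by repeated integration of the affine identity \eqref{eq:affine_dynamics} and so holds for any $f \in \mathcal{AC}([0,t_T])$ regardless of sign — yields $\bm A \bm z^l(f;t) + \bm K \bm z^l(f';t) = \bm K\big(\bm z^{l-1}(f;t) - \tfrac{t^{l-1}}{(l-1)!} f(0)\bm y(0)\big)$, whose right-hand side is already the right-hand side claimed in the proposition. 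Substituting the expansion of $\bm z^l(f';t)$ and using $f \in F$, the left-hand side becomes $\bm A \bm z^l(f;t) + \bm K \sum_{g\in F}\alpha_{f,g}\,\bm z^l(g;t)$, which is a linear function of the tuple $\St{\bm z^l(g;t)}_{g\in F}$; taking $\Gamma_f$ to be this linear map gives the asserted identity. Since the substitution is reversible, the two conditions are equivalent, and as $\Gamma_f$ is built only from $\bm A$, $\bm K$ and the coefficients $\alpha_{f,g}$, it is the same map for every level $l$.

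I do not anticipate a genuine obstacle — the argument is essentially bookkeeping. The two minor points needing care are: verifying that linearity in the test function propagates through every iteration level (the routine induction above), and handling a linearly dependent $F$, for which the $\alpha_{f,g}$ are non-unique — this is harmless, since only their existence is used, or equivalently one may first replace $F$ by a basis of $\spn{F}$, which alters neither the span nor the hierarchy. It is also worth noting that $\bm z^{l-1}(f;t)$ appearing on the right is itself one of the moments $\St{\bm z^{l-1}(g;t)}_{g\in F}$, so the full family of equations stays within the $F$-generated iterated moments; the conceptual message — that closure of $\spn{F}$ under differentiation is precisely what prevents $\bm z^l(f';t)$ from entering as an uncontrolled new unknown, thereby making Condition (i) informative — is already articulated in the remark preceding the proposition.
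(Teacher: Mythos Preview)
Your proposal is correct and matches the paper's intended argument; in fact, the paper omits the proof as ``elementary'' and instead gives Example~\ref{ex:test_functions}, which carries out precisely your substitution (expressing $f'$ in $\spn{F}$ and using linearity of $g \mapsto \bm z^l(g;t)$) in the special case of exponential test functions. Your write-up simply makes explicit the routine induction for linearity and the general coefficient expansion $f' = \sum_{g\in F}\alpha_{f,g}\,g$ that the example illustrates concretely.
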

We omit the elementary proof of Proposition \ref{prop:test_functions} and instead provide a concrete example that shows how to construct the maps $\Gamma_f$ for a given set of exponential test functions. 
\begin{example}\label{ex:test_functions}
    Let $F = \Set{e^{\rho_i t}}_{i=1}^{n_f}$ for some fixed $\rho_i \in \mathbb{R}$. Clearly $\spn{F}$ is closed under differentiation as for any $c_i \in \R^{n_f}$, we have
    \begin{align*}
        \frac{d}{dt}\left( \sum_{i=1}^{n_f} c_i e^{\rho_i t} \right) = \sum_{i=1}^{n_f} c_i \rho_i e^{\rho_i t}.
    \end{align*}
    Now let $f(t) = e^{\rho_i t}$. Comparing with Condition (i) in Proposition \ref{prop:integral_hierarchy} shows that the map $\Gamma_f$ is defined by
    \begin{align*}
	    \Gamma_f\left(\St{\bm{z}^l(g;t)}_{g\in F} \right) = \bm{A}\bm{z}^l(f;t) + \bm{K} \bm{z}^l(f';t).
    \end{align*}
    By definition of the iterated generalized moments and the fact that $f'(t) = \rho_i f(t)$, it follows further that $\bm{z}^l(f';t) = \rho_i \bm{z}^l(f;t)$. Thus, 
    \begin{align*}
        \Gamma_f\left(\St{\bm{z}^l(g;t)}_{g\in F} \right)  = \left(\bm{A} - \rho_i \bm{K}\right) \bm{z}^l(f;t). 
    \end{align*}
\end{example} 
Example \ref{ex:test_functions} indicates the significance of the hypotheses of Proposition \ref{prop:test_functions}. In particular, it emphasizes that the closedness of $\spn{F}$ under differentiation is precisely what is needed in order to guarantee that the associated necessary moment conditions described in Proposition \ref{prop:integral_hierarchy} are ``self-contained''; that is, the conditions only depend on generalized moments as generated by test functions in $F$. It is further noteworthy that there exist rich function classes beyond exponentials that can be used to assemble test function sets that satisfy the hypotheses of Proposition \ref{prop:test_functions}, for example polynomials and trignometric functions. 

Another issue is that we require non-negativity of the test functions in Proposition \ref{prop:integral_hierarchy}. This problem can be alleviated by a simple reformulation and shift of the time horizon in Proposition \ref{prop:integral_hierarchy}. For example, if a test function $g$ is non-negative on $[0,t_+]$ and non-positive on $[t_+,t_T]$, we can simply consider the two test functions $g_+(t) = \mathds{1}_{[0,t_+]}(t) g(t)$ and $g_-(t) = -\mathds{1}_{[t_+,t_T]}(t) g(t)$ in place of $g$ and impose the necessary moment conditions on the intervals $[0,t_+]$ and $[t_+,t_T]$, respectively. This construction naturally extends to test functions with any finite number of sign changes.

We conclude this section by establishing some compelling properties of the hierarchy of necessary moment conditions put forward in Proposition \ref{prop:integral_hierarchy}. On the one hand, Conditions (i) and (ii) in Proposition \ref{prop:integral_hierarchy} include the conditions considered in Proposition \ref{prop:increment_conditions} as special cases. So in particular, they enforce consistency with higher-order Taylor expansions of the true moment trajectories as discussed in the beginning of this section. The following corollary to Proposition \ref{prop:integral_hierarchy} formalizes this claim.
\begin{corollary}\label{cor:implications} 
	Let $n_I \in\mathbb{Z}_+$ and $t_T > 0$ be fixed. Further, suppose $g \in \mathcal{AC}([0,t_T])$ is non-negative, and let $\bm{y}$ and $\bm{z}^l(g;\cdot)$ be arbitrary functions such that $\bm{z}^l(g;\cdot)$ is linear in the first argument and $\bm{z}^0(g;t)= g(t)\bm{y}(t)$ holds. Fix $0 \leq t_1 \leq t_2 \leq  t_T$ and define $h_l(t) = \mathds{1}_{[t_1,t_2]}(t)(t_2-t)^l$ for $l=0,1,\dots,n_I$. If Conditions (i) and (ii) of Proposition \ref{prop:integral_hierarchy} are satisfied by $\St{\bm{z}^l(g;t_i)}_{l=0}^{n_I+1}$ for $i=1,2$, then there exist functions $\bm{z}(h_lg;\cdot)$ that are linear in the first argument, and satisfy
		\begin{multline*}
			\bm{K}\left(h_l(t_2)g(t_2)\bm{y}(t_2) - h_l(t_1)g(t_1)\bm{y}(t_1) \right) = \\
		    \bm{A} \bm{z}(h_l g;t_2)  + \bm{K}\bm{z}((h_lg)';t_2)
		\end{multline*}
		and
		\begin{align*}
			\bm{z}(h_l g; t_2) \in \G 
		\end{align*}
		for all $l \in \St{0,\dots,n_I}$. 
\end{corollary}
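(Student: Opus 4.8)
The plan is to produce, for each $l\in\St{0,\dots,n_I}$, an explicit generalized moment $\bm{z}(h_lg;\cdot)$ built linearly from the available data $\bm{z}^0(g;\cdot),\dots,\bm{z}^{n_I+1}(g;\cdot)$ (along with the $g'$-moments $\bm{z}^j(g';\cdot)$ that already enter Condition~(i) of Proposition~\ref{prop:integral_hierarchy}), and then to check it against Conditions~(i) and (ii). The bridge between the two asserted identities is the Cauchy formula for repeated integration: writing $\bm{f}(x,y)=\bm{z}^1(g;y)-\bm{z}^1(g;x)$, one has $l!\,(I_L^l\bm{f})(t_1,t_2)=\int_{t_1}^{t_2}(t_2-\tau)^lg(\tau)\bm{y}(\tau)\,d\tau=\int_0^{t_2}h_l(\tau)g(\tau)\bm{y}(\tau)\,d\tau$, which is exactly the value that $\bm{z}(h_lg;t_2)$ should take. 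Moreover, the short induction on $l$ that produces the closed forms $\Omega_{\cdot,\cdot}$ discussed after Proposition~\ref{prop:integral_hierarchy} identifies $(I_L^l\bm{f})(t_1,t_2)$ with the affine expression $\bm{z}^{l+1}(g;t_2)-\sum_{i=1}^{l+1}\tfrac{(t_2-t_1)^{l+1-i}}{(l+1-i)!}\bm{z}^i(g;t_1)$, i.e.\ with the quantity $\Omega_{l+1,0}$ appearing on the left of Condition~(ii) of Proposition~\ref{prop:integral_hierarchy} at Level $l+1$ with $k=0$.

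Accordingly, I would take $\bm{z}(h_lg;\cdot)$ to be the function $t\mapsto l!\bigl(\bm{z}^{l+1}(g;t)-\sum_{i=1}^{l+1}\tfrac{(t-t_1)^{l+1-i}}{(l+1-i)!}\bm{z}^i(g;t_1)\bigr)$, take $\bm{z}(h_lg';\cdot)$ given by the same formula with $g$ replaced by $g'$, and set $\bm{z}((h_lg)';\cdot):=-l\,\bm{z}(h_{l-1}g;\cdot)+\bm{z}(h_lg';\cdot)$, mirroring the classical identity $(h_lg)'=-l\,h_{l-1}g+h_lg'$ on the open interval $(t_1,t_2)$. All of these are linear in the test function by construction. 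The conic condition is then immediate: at $t=t_2$, $\bm{z}(h_lg;t_2)=l!\,\Omega_{l+1,0}\left(\St{\bm{z}^i(g;t_1)}_{i=1}^{l+1},\St{\bm{z}^i(g;t_2)}_{i=1}^{l+1},t_1,t_2\right)$; Condition~(ii) at Level $l+1$ with $k=0$ (applicable since $l+1\le n_I+1$) states that the $\Omega_{l+1,0}$ factor lies in $\G$, and since $\G$ is a convex cone and $l!>0$, $\bm{z}(h_lg;t_2)\in\G$.

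For the affine identity, I would substitute the definitions into $\bm{A}\bm{z}(h_lg;t_2)+\bm{K}\bm{z}((h_lg)';t_2)$ and regroup so that the combination $\bm{A}\bm{z}^j(g;t_i)+\bm{K}\bm{z}^j(g';t_i)$ appears for $j=1,\dots,l+1$ and $i\in\St{1,2}$; each such combination is then rewritten, via Condition~(i) of Proposition~\ref{prop:integral_hierarchy}, as $\bm{K}\bm{z}^{j-1}(g;t_i)-\tfrac{t_i^{j-1}}{(j-1)!}g(0)\bm{K}\bm{y}(0)$. After reindexing, the sums over $\bm{z}^i(g;t_1)$ telescope to the single surviving term $-(t_2-t_1)^l\bm{K}\bm{z}^0(g;t_1)$; the terms in $\bm{z}^i(g;t_2)$ with $i\ge1$ cancel except when $l=0$, where $\bm{K}\bm{z}^0(g;t_2)$ remains, so that in all cases the part depending on $\bm{y}(t_1)$ and $\bm{y}(t_2)$ reduces to $\bm{K}\bigl(h_l(t_2)\bm{z}^0(g;t_2)-h_l(t_1)\bm{z}^0(g;t_1)\bigr)$ (using $h_l(t_2)=0$ for $l\ge1$, $h_0\equiv1$ at the endpoints, and $h_l(t_1)=(t_2-t_1)^l$). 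Finally, the scalars multiplying $g(0)\bm{K}\bm{y}(0)$ sum, by the binomial identity $\sum_{j=0}^l\binom{l}{j}(t_2-t_1)^{l-j}t_1^{j}=t_2^{l}$, to cancel exactly the $-t_2^{l}g(0)\bm{K}\bm{y}(0)$ term produced at $t_2$. Since $\bm{z}^0(g;t)=g(t)\bm{y}(t)$, the right-hand side becomes $\bm{K}\bigl(h_l(t_2)g(t_2)\bm{y}(t_2)-h_l(t_1)g(t_1)\bm{y}(t_1)\bigr)$, which is precisely the claimed relation.

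The computational ingredients --- the induction underlying $\Omega_{l+1,0}$ and the telescoping --- are routine. \emph{The main obstacle} is twofold. First, the boundary bookkeeping: one must interpret $\bm{z}((h_lg)';t_2)$ consistently through the classical derivative of $h_lg$ on the \emph{open} interval $(t_1,t_2)$, so that the jump of $h_lg$ at $t_1$ (and, when $l=0$, also at $t_2$) is carried exclusively by the boundary values $h_l(t_1),h_l(t_2)$ on the left-hand side of the affine identity, exactly as in the derivation of Condition~\eqref{eq:taylor_refined}; the case $l=0$ must be tracked separately because there $h_0$ does not vanish at $t_2$. Second, the binomial cancellation of the $g(0)\bm{y}(0)$ contributions is the conceptual crux: it is precisely this cancellation that trades the global reference to the initial state $\bm{y}(0)$ in Condition~(i) for a \emph{local} reference to $\bm{y}(t_1)$ and $\bm{y}(t_2)$, which is the sense in which Conditions~(i)--(ii) of Proposition~\ref{prop:integral_hierarchy} encode the causal Taylor/increment conditions of Proposition~\ref{prop:increment_conditions}.
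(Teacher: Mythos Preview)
Your proposal is correct and follows what is almost certainly the paper's intended route: the paper defers the proof to the SI, but its Remark after Proposition~\ref{prop:integral_hierarchy} points to an explicit closed form for $\Omega_{l,k}$, and the Remark after the present corollary singles out the identification $\bm{z}(h_0g;t_2)=\bm{z}^1(g;t_2)-\bm{z}^1(g;t_1)$ --- both of which are precisely your building blocks. Your definition $\bm{z}(h_lg;t_2)=l!\,\Omega_{l+1,0}$, the use of Condition~(ii) at Level $l+1$ with $k=0$ for the conic part, the substitution of Condition~(i) to collapse $\bm{A}\bm{z}^j(g;\cdot)+\bm{K}\bm{z}^j(g';\cdot)$ into $\bm{K}\bm{z}^{j-1}(g;\cdot)$ plus an initial-state term, the telescoping to $-(t_2-t_1)^l\bm{K}\bm{z}^0(g;t_1)$, and the binomial cancellation of the $g(0)\bm{y}(0)$ contributions all check out; the separate treatment of $l=0$ is also handled correctly.
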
 
\begin{proof}
	The proof is deferred to the SI.
\end{proof}
\begin{remark}
    To see the connection to Condition \eqref{eq:taylor_refined} in Proposition \ref{prop:increment_conditions}, simply consider the case where $g(t) = 1$. Moreover, note that Corollary \ref{cor:implications} also shows that necessary moment conditions of the form of \eqref{eq:increment_affine} \& \eqref{eq:increment_conic} are implied as they are recovered for $l=0$ since we can simply identify $\bm{z}(h_0g;t_2)$ with $\bm{z}^1(g;t_2) - \bm{z}^1(g;t_1)$.  
\end{remark}

On the other hand, the proposed necessary moment conditions display benign scaling behavior in the following sense:
\begin{corollary}\label{cor:linear_scaling}
	Let $0 \leq t_1 \leq t_2 \leq t_3 < +\infty$ and $n_I$ be a fixed positive integer. Suppose $\St{\bm{z}^s}_{s=1}^{n_I}$ is a set of functions such that
	\begin{align*}
	    \Omega_{l,k}(\Set{\bm{z}^s(t_{i})}_{s=1}^{l}, \Set{\bm{z}^s(t_{i+1})}_{s=1}^{l}, t_i, t_{i+1}) \in \G
	\end{align*}
	for all $i \in \{1,2\}$ and $k,l \in \Z_+$ such that $k < l \leq n_I$. Then, 
	\begin{align*}
	\Omega_{l,k}(\Set{\bm{z}^s(t_{1})}_{s=1}^{l}, \Set{\bm{z}^s(t_{3})}_{s=1}^{l}, t_1, t_{3}) \in \G
	\end{align*}
	holds for all $k,l \in \Z_+$ such that $k < l \leq n_I$. 
\end{corollary}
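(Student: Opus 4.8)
The plan is to reduce the corollary to a single explicit algebraic identity that writes $\Omega_{l,k}$ on the interval $[t_1,t_3]$ as a \emph{non-negative} linear combination of the $\Omega$-quantities on $[t_1,t_2]$ and on $[t_2,t_3]$ that are assumed to lie in $\G$; since $\G$ is a convex cone, membership on $[t_1,t_3]$ is then immediate. Throughout I abbreviate $\Omega_{l,k}(\{\bm{z}^s(a)\}_{s\le l},\{\bm{z}^s(b)\}_{s\le l},a,b)$ by $\Omega_{l,k}(a,b)$. First I would make $\Omega_{l,k}$ concrete on \emph{honest} iterated generalized moments: evaluating the nested operators $I_L^{l-1-k}I_R^{k}$ of Condition (ii) in Proposition \ref{prop:integral_hierarchy} applied to $\bm{f}(x,y)=\bm{z}^1(g;y)-\bm{z}^1(g;x)=\int_x^y g(\tau)\bm{y}(\tau)\,d\tau$ by repeated use of Fubini's theorem gives, for $0\le a\le b$,
\begin{align*}
	\Omega_{l,k}(a,b) = \int_a^b g(\tau)\,\bm{y}(\tau)\,\frac{(\tau-a)^k}{k!}\,\frac{(b-\tau)^{l-1-k}}{(l-1-k)!}\,d\tau,
\end{align*}
which is precisely the affine-in-the-iterated-moments quantity whose explicit form is recorded in the SI.

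Next I would derive the concatenation identity. Splitting the integral above at $t_2$ and expanding the ``foreign'' polynomial factor on each piece binomially in a sign-controlled way — on $[t_1,t_2]$ writing $(t_3-\tau)^{l-1-k}=\big((t_2-\tau)+(t_3-t_2)\big)^{l-1-k}$ and on $[t_2,t_3]$ writing $(\tau-t_1)^{k}=\big((\tau-t_2)+(t_2-t_1)\big)^{k}$ — one recognizes each resulting integral over $[t_1,t_2]$ as $\Omega_{k+p+1,k}(t_1,t_2)$ for $p=0,\dots,l-1-k$ and each integral over $[t_2,t_3]$ as $\Omega_{l-k+q,q}(t_2,t_3)$ for $q=0,\dots,k$, yielding
\begin{multline*}
	\Omega_{l,k}(t_1,t_3) = \sum_{p=0}^{l-1-k}\frac{(t_3-t_2)^{l-1-k-p}}{(l-1-k-p)!}\,\Omega_{k+p+1,k}(t_1,t_2)\\
	+\sum_{q=0}^{k}\frac{(t_2-t_1)^{k-q}}{(k-q)!}\,\Omega_{l-k+q,q}(t_2,t_3).
\end{multline*}
The hypothesis $t_1\le t_2\le t_3$ enters twice here: it makes $(t_2-\tau),(t_3-t_2),(\tau-t_2),(t_2-t_1)$ non-negative on the relevant intervals, so every coefficient above is non-negative; and a direct index check shows every $\Omega$-term on the right has first index at most $l\le n_I$ and strictly larger than its second index, so it is one of the quantities assumed to lie in $\G$.

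It then remains to lift this identity from honest iterated generalized moments to the abstract functions $\bm{z}^s$ of the statement. Both sides are affine in $\big(\{\bm{z}^s(t_1)\}_{s\le l},\{\bm{z}^s(t_2)\}_{s\le l},\{\bm{z}^s(t_3)\}_{s\le l}\big)$ with coefficients that are fixed polynomials in $t_1,t_2,t_3$ (this is the content of the remark following Proposition \ref{prop:integral_hierarchy}), so it suffices to verify it on an affinely spanning set of such triples; fixing a nontrivial $g$ and building $\bm{y}$ successively on $[0,t_1]$, $[t_1,t_2]$ and $[t_2,t_3]$ — using that the weights $\{(t_i-\tau)^{s-1}\}_{s=1}^{l}$ are linearly independent, so each block $\{\bm{z}^s(g;t_i)\}_{s\le l}$ can be prescribed arbitrarily given the earlier ones — shows that the honest triples already exhaust $(\mathbb{R}^{n_L+n_H})^{3l}$, whence the identity holds for all arguments. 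Combining it with convexity and the cone property of $\G$ and the non-negativity of the coefficients gives $\Omega_{l,k}(t_1,t_3)\in\G$ for all $k<l\le n_I$, as claimed. The Fubini reduction and the binomial bookkeeping are routine; the only genuinely delicate point is this transfer to abstract $\bm{z}^s$ — i.e., checking that the identity, phrased through the precise SI-definition of $\Omega_{l,k}$, really is valid for arbitrary, not necessarily consistent, arguments — and that is exactly what the spanning argument secures.
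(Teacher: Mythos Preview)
Your argument is correct. The integral representation of $\Omega_{l,k}$ for honest iterated generalized moments is right, the binomial splitting at $t_2$ yields exactly the concatenation identity you state, the index check confirms every term on the right-hand side has level at most $l\le n_I$ and satisfies the strict inequality on its second index, and the coefficients are manifestly non-negative under $t_1\le t_2\le t_3$. The conclusion then follows from $\G$ being a convex cone.

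The paper's own proof is deferred to the Supplementary Information, so a line-by-line comparison is not possible from the main text. That said, your approach is the canonical one: the corollary is precisely designed to record this conic concatenation property, and the only sensible route is to exhibit $\Omega_{l,k}(t_1,t_3)$ as a non-negative combination of the assumed $\Omega$-quantities on the subintervals. The one place where your proof may differ in style from the SI is the lifting step. Since the SI provides an explicit algebraic expression for $\Omega_{l,k}$ in terms of $\{\bm{z}^s(t_1)\}_{s\le l}$ and $\{\bm{z}^s(t_2)\}_{s\le l}$, the identity can be checked by direct substitution and polynomial bookkeeping, bypassing any appeal to honest moments. Your spanning argument is a clean alternative that avoids that computation, and it is valid: with $t_1<t_2<t_3$ strictly, the independence of the kernels $\{(t_i-\tau)^{s-1}\}_{s=1}^l$ on each subinterval does let you realize arbitrary triples $(\{\bm{z}^s(t_1)\},\{\bm{z}^s(t_2)\},\{\bm{z}^s(t_3)\})$. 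The degenerate cases $t_1=t_2$ or $t_2=t_3$ are not covered by the spanning argument as written, but they follow immediately either by continuity of the polynomial coefficients or by inspection of your identity (one of the two sums collapses). This is a cosmetic gap at most.
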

\begin{proof}
	The proof is deferred to the SI.
\end{proof}
In essence, Corollary \ref{cor:linear_scaling} shows that imposing the proposed necessary moment conditions at multiple time points along a time horizon scales linearly with the number of time points considered. 

\subsection{An Augmented Semidefinite Program} \label{sec:SDP}
In this section, we construct an SDP based on Proposition \ref{prop:integral_hierarchy} whose optimal value furnishes bounds on the moment solutions of \eqref{eq:CME} at a given time point $t_f \in [0,t_T]$. To that end, we consider the truncation order $m$ to be fixed and the following user choices as known:
\begin{enumerate}[(i)]
	\item $\mathsf{T} = \St{t_1,\dots,t_{n_{\mathsf{T}}}}$ -- A finite, ordered set of time points such that $0 < t_1 < t_2 < \dots < t_{n_{\mathsf{T}}} \leq t_T$ and $t_f \in \mathsf{T}$. 
	\item $\mathsf{F} = \St{g_1,\dots,g_{n_{\mathsf{F}}}}$ -- A finite set of test functions that satisfies the hypotheses of Propositions \ref{prop:integral_hierarchy} and \ref{prop:test_functions}. 
	\item $n_I$ -- A non-negative integer controlling the hierarchy level in Proposition \ref{prop:integral_hierarchy}.
\end{enumerate}
These quantities parametrize a spectrahedron $\mathsf{S}(\mathsf{F},\mathsf{T}, n_I)$ described by the necessary moment conditions of Proposition \ref{prop:integral_hierarchy} as imposed for all test functions in $\mathsf{F}$, at all time points in $\mathsf{T}$ and for all hierarchy Levels up to $n_I$. In the formulation of $\mathsf{S}(\mathsf{F},\mathsf{T},n_I)$, however, we use a slightly different but equivalent formulation of Condition (i) of Proposition \ref{prop:integral_hierarchy}. The reason for this modification is that it results in weakly coupled conditions that allow the resultant SDPs to be decomposed in a natural way as we will discuss in Section \ref{sec:forgetting}. Details on this reformulation can be found in the SI. 
$\mathsf{S}(\mathsf{F},\mathsf{T},n_I)$ is explicitly stated below; for the sake of concise notation we introduced the shorthand $n(t)$ for the left neighboring point of any $t \in \mathsf{T}$, i.e., $n(t_i) = t_{i-1}$ for $i = 2,\dots,n_{\mathsf{T}}$ and $n(t_1) = 0$. 
\begin{widetext}
    \begin{align*}
		&\mathsf{S}(\mathsf{F},\mathsf{T},n_I) = &\St{  \{\bm{y}_t\}, \{\bm{z}^{l}_{g,t}\} 
         \left\rvert	
		\begin{array}{l}
			\bm{z}^0_{g,t} = g(t) \bm{y}_t, \quad \forall (g,t) \in \mathsf{F}\times \mathsf{T},\\[1em]
			\bm{y}_t \in \G, \quad \forall t \in \mathsf{T},\\[1em] 
			\Gamma_g\left(\{\bm{z}_{f,t}^{l}\}_{f \in \mathsf{F}}\right) = \begin{cases}
		        \bm{K} \left(\bm{z}^{l-1}_{g,t} - \frac{t^{l-1}}{(l-1)!}g(0) \bm{y}_0\right), &\text{if } t = t_1 \\
			    \left(\frac{t}{n(t)}\right)^{l-1}\Gamma_g \left(\{\bm{z}_{f,n(t)}^{l}\}_{f \in \mathsf{F}}\right) + \bm{K} \left(\bm{z}^{l-1}_{g,t} - \left(\frac{t}{n(t)}\right)^{l-1} \bm{z}_{g,n(t)}^{l-1} \right), &\text{if } t\neq t_1 
			\end{cases}, \\
			\qquad  \qquad  \qquad \qquad  \forall (g,t,l) \in \mathsf{F} \times \mathsf{T} \times \St{1,\dots,n_I},\\[1em]
		    \Omega_{l,k}\left(\{\bm{z}^s_{g,n(t)}\}_{s=1}^{l},\{\bm{z}^s_{g,t}\}_{s=1}^{l},n(t),t\right) \in \G, \\
			\qquad  \qquad  \qquad \qquad   \forall (g,t) \in \mathsf{F} \times \mathsf{T} \text{ and } \forall k,l \in \Z_+ \text{ such that } k < l \leq n_I			
		\end{array}\right.} 
    \end{align*}
    \vspace*{\fill}
\end{widetext}

By construction, the set $\mathsf{S}(\mathsf{F},\mathsf{T},n_I)$ contains the sequences $\{\bm{y}(t):{t\in\mathsf{T}}\}$ and $\{ \bm{z}^l(g;t) : (g,t,l) \in \mathsf{F} \times \mathsf{T} \times \{0,\dots,n_I\}\}$ as generated by the true moment trajectories associated with the solution of \eqref{eq:CME}. Another piece of information that can be used to further restrict the set of candidates for the true moment solutions to \eqref{eq:CME} is information about the moments of the initial distribution. We know for example from the definition that any iterated generalized moment $\bm{z}^l_g(t)$ for $l\geq 1$ must vanish at $t=0$. Moreover, one usually has specific information about the initial distribution of the system state, hence also about $\bm{y}(0)$. Here, we assume that the initial moments and iterated generalized moments are confined to a spectrahedral set denoted by $\mathsf{S}_0(\mathsf{F},n_I)$. In the common setting in which the moments of the initial distribution $\bm{y}_0$ are known exactly, $\mathsf{S}_0(\mathsf{F},n_I)$ would be given by
\begin{multline*}
    \mathsf{S}_0(\mathsf{F},n_I) = \\\St{\bm{y}_0, \{\bm{z}^{l}_{g,0}\} \left \rvert 
	\begin{array}{l}
	\bm{y}_0 = \bm{y}(0), \\
	\bm{z}^l_{g,0} = \bm{0}, \ \forall (g,l) \in \mathsf{F}\times \Set{1,\dots,n_I}
	\end{array}\right.
}. 
\end{multline*}
Albeit adding the corresponding constraints to the description of $\mathsf{S}(\mathsf{T},\mathsf{F},n_I)$ appears natural, we deliberately choose to reflect this piece of information separately. Our motivation for this distinction is twofold: On the one hand, we want to emphasize that the presented approach naturally extends to the setting of uncertain or imperfect knowledge of the moments of the initial distribution of the system. Specifically, if the moments of the initial distribution are not known exactly, however, known to be confined to a spectrahedral set, the proposed bounding procedure applies without modification. 
On the other hand, we will argue in Section \ref{sec:forgetting} that, based on this specific feature, the arising optimization problems lend themselves to be decomposed. The distinction in notation made here will simplify our exposition there.

The following Theorems finally summarize the key feature of the proposed methodology -- the ability to generate a sequence of monotonically improving bounds on the moment trajectories associated with the solution of \eqref{eq:CME}. Theorem \ref{thm:sdp} shows that these bounds can be practically obtained via solution of a hierarchy of SDPs. 
\begin{theorem}\label{thm:sdp}
	Let $\bm{y}(t)$ be as in Definition \ref{def:moment} and $t_f \in \mathsf{T}$. Consider a multi-index $|\bm{i}| \leq m$ and define   
	\begin{align}
		y_{\bm{i},t_f}^* = \inf_{\{\bm{y}_t\}, \{\bm{z}^{l}_{g,t}\}} \qquad  &y_{\bm{i},t_f} \tag{SDP} \label{eq:SDP}\\
		\text{s.t.}\qquad & (\{\bm{y}_t\}, \{\bm{z}^{l}_{g,t}\}) \in \mathsf{S}(\mathsf{F},\mathsf{T},n_I) \nonumber \\
		&(\bm{y}_0,\{\bm{z}^{l}_{g,0}\}) \in \mathsf{S}_0(\mathsf{F},n_I) \nonumber
	\end{align} 
	Then, $y_{\bm{i},t_f}^* \leq y_{\bm{i}}(t_f)$.
\end{theorem}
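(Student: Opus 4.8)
The plan is to show that the SDP in Theorem~\ref{thm:sdp} is a \emph{relaxation} of the true bounding problem: namely, that the moment and iterated-generalized-moment sequences associated with the true solution of \eqref{eq:CME} constitute a feasible point of \eqref{eq:SDP} with objective value exactly $y_{\bm{i}}(t_f)$. Since \eqref{eq:SDP} takes an infimum over a set containing this point, its optimal value can only be less than or equal to $y_{\bm{i}}(t_f)$, which is precisely the claim. So the entire argument reduces to verifying feasibility of the ``true'' sequence.

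\textbf{Step 1: construct the candidate point.} By Assumption~\ref{asmpt:boundedness} and the moment dynamics \eqref{eq:mCME}, all moment trajectories $y_{\bm{k}}(\cdot)$ up to order $m+q$ are well-defined and infinitely differentiable on $\R_+$. For each $t \in \mathsf{T}$ set $\bm{y}_t := \bm{y}(t)$, the truncated moment sequence of $P_\pi(\cdot,t)$; for each $(g,t,l) \in \mathsf{F}\times\mathsf{T}\times\{0,\dots,n_I\}$ set $\bm{z}^l_{g,t} := \bm{z}^l(g;t)$, the corresponding iterated generalized moment from Definition~\ref{def:iterated_generalized_moments}. These are finite real quantities because $g$ is uniformly bounded on $[0,t_T]$ and the moment trajectories are continuous hence bounded on the compact interval.

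\textbf{Step 2: verify membership in $\mathsf{S}(\mathsf{F},\mathsf{T},n_I)$.} This is where the earlier results do the work. The defining relation $\bm{z}^0_{g,t}=g(t)\bm{y}_t$ holds by Definition~\ref{def:iterated_generalized_moments}. The conic condition $\bm{y}_t\in\G$ holds because $P_\pi(\cdot,t)$ is a non-negative measure supported on $X$ (the LMI discussion of Section~\ref{sec:NMC}). The $\Gamma_g$ equations are exactly the reformulated version of Condition~(i) of Proposition~\ref{prop:integral_hierarchy}, which holds for the true iterated generalized moments by that proposition combined with Proposition~\ref{prop:test_functions} (valid because $\mathsf{F}$ satisfies the stated hypotheses); one must check that the telescoped/recursive ``$t\neq t_1$'' branch is the correct restatement obtained by subtracting the instance of Condition~(i) at $n(t)$ from the instance at $t$, after the scaling by $(t/n(t))^{l-1}$ accounting for the explicit $t^{l-1}/(l-1)!$ factor — this is the bookkeeping that the SI reformulation handles, which I would cite rather than redo. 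Finally, $\Omega_{l,k}(\dots)\in\G$ for the true sequence on each subinterval $[n(t),t]$ is precisely Condition~(ii) of Proposition~\ref{prop:integral_hierarchy}, applied with $t_1=n(t)$, $t_2=t$.

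\textbf{Step 3: verify membership in $\mathsf{S}_0(\mathsf{F},n_I)$ and conclude.} The true sequence has $\bm{y}_0=\bm{y}(0)$ by construction, and $\bm{z}^l(g;0)=\bm{0}$ for $l\geq 1$ directly from Definition~\ref{def:iterated_generalized_moments} (each iterated integral starts at $0$). Hence the true sequence is feasible for \eqref{eq:SDP}, its objective value is $y_{\bm{i}}(t_f)$ since $t_f\in\mathsf{T}$ and $\bm{y}_{t_f}=\bm{y}(t_f)$, and taking the infimum gives $y_{\bm{i},t_f}^*\leq y_{\bm{i}}(t_f)$. \emph{The main obstacle} is not conceptual but notational: one must carefully match the recursive ``$t\neq t_1$'' branch of the $\Gamma_g$ constraint in $\mathsf{S}(\mathsf{F},\mathsf{T},n_I)$ with the non-recursive Condition~(i) of Proposition~\ref{prop:integral_hierarchy}, i.e., confirm that the substitution $\bm{z}^{l-1}(g;t)-\frac{t^{l-1}}{(l-1)!}g(0)\bm{y}(0) = \big(\bm{z}^{l-1}(g;t)-(t/n(t))^{l-1}\bm{z}^{l-1}(g;n(t))\big) + (t/n(t))^{l-1}\big(\bm{z}^{l-1}(g;n(t))-\frac{n(t)^{l-1}}{(l-1)!}g(0)\bm{y}(0)\big)$ reproduces exactly the displayed two-case formula; this identity, together with the explicit linear-algebraic forms of $\Gamma_g$ and $\Omega_{l,k}$, is what the SI supplies, and the proof body need only invoke it.
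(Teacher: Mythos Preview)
Your proposal is correct and follows essentially the same approach as the paper: construct the candidate point from the true moment trajectories and iterated generalized moments, then invoke Proposition~\ref{prop:integral_hierarchy} to establish feasibility in $\mathsf{S}(\mathsf{F},\mathsf{T},n_I)$. The paper's proof is in fact much terser than yours---it is two sentences that simply set $\bm{y}_t=\bm{y}(t)$, $\bm{z}^l_{g,t}=\bm{z}^l(g;t)$ and cite Proposition~\ref{prop:integral_hierarchy}---so your additional verification of the $\Gamma_g$ reformulation and $\mathsf{S}_0$ membership is more explicit than what the paper itself provides.
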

\begin{proof}
	Set $\bm{y}_t = \bm{y}(t)$ for all $t \in \mathsf{T}$ and $\bm{z}^{l}_{g,t} = \bm{z}^{l}(g;t)$ for all $(g,t,l) \in \mathsf{F}\times\mathsf{T}\times \St{0,\dots, n_I}$ with $\bm{z}^l(g;t)$ as in Definition \ref{def:iterated_generalized_moments}.
	 By Proposition \ref{prop:integral_hierarchy} $(\{\bm{y}_t\}, \{\bm{z}^{l}_{g,t}\}) \in \mathsf{S}(\mathsf{F},\mathsf{T},n_I)$, and the result follows.
\end{proof}
\begin{remark}
	For equal truncation orders and choice of $\G$, Remark \ref{rmk:dowdy_special_case} implies that the bounds obtained from \eqref{eq:SDP} are at least as tight as those obtained by the approach of \citet{dowdy2018dynamic}.
\end{remark}
\begin{remark}
	The lower bound $\bm{y}_{\bm{i},t_f}^*$ can be evaluated using off-the-shelf solvers for SDPs such as MOSEK \cite{andersen2000mosek}, SeDuMi \cite{sturm1999using}, or SDPT3 \cite{toh1999sdpt3}.
\end{remark}
\begin{remark}
	Similar problems as \eqref{eq:SDP} can be formulated to bound properties that can be described in terms of moments of non-negative measures on the reachable set; examples include variances \cite{dowdy2018bounds}, the volume of a confidence ellipsoids \cite{sakurai2018optimization}, and the value that the probability measure assigns to a semialgebraic set \cite{dowdy2018bounds}. 
\end{remark}
The formulation of \eqref{eq:SDP} provides several mechanisms to improve the bounds by adjusting the parameters $\mathsf{T}$, $\mathsf{F}$ and $n_I$. Theorem \ref{thm:mono} shows that appropriate adjustments lead to a sequence of monotonically improving bounds. 
\begin{theorem}\label{thm:mono}
	Let $y_{\bm{i},t_f}^*$ be defined as in Theorem \ref{thm:sdp}. Let $\tilde{t} \in [0,t_T]$ and define $\tilde{\mathsf{T}} = \mathsf{T}\cup \St{\tilde{t}}$. Further, let $\tilde{g}$ be an absolutely continuous function that is non-negative on $[0,t_T]$ and define $\tilde{\mathsf{F}} = \mathsf{F} \cup \St{\tilde{g}}$. Then, 
	\begin{align}
		y_{\bm{i},t_f}^* \leq\inf_{\{\bm{y}_t\}, \{\bm{z}^{l}_{g,t}\}} \qquad  &y_{\bm{i},t_f}\label{eq:SDP_refined} \\
		\text{s.t.}\qquad & (\{\bm{y}_t\}, \{\bm{z}^{l}_{g,t}\}) \in \mathsf{S}(\tilde{\mathsf{F}},\tilde{\mathsf{T}},n_I) \nonumber \\
		&(\bm{y}_0,\{\bm{z}^{l}_{g,0}\}) \in \mathsf{S}_0(\tilde{\mathsf{F}},n_I). \nonumber 
	\end{align}
	Likewise,
	\begin{align}
		y_{\bm{i},t_f}^* \leq\inf_{\{\bm{y}_t\}, \{\bm{z}^{l}_{g,t}\}} \qquad  &y_{\bm{i},t_f}\label{eq:SDP_refined_int_order} \\
		\text{s.t.}\qquad & (\{\bm{y}_t\}, \{\bm{z}^{l}_{g,t}\}) \in \mathsf{S}(\mathsf{F},\mathsf{T},n_I+1) \nonumber \\
		&(\bm{y}_0,\{\bm{z}^{l}_{g,0}\}) \in \mathsf{S}_0(\mathsf{F},n_I+1). \nonumber
	\end{align}
\end{theorem}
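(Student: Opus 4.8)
The plan is to prove both inequalities with one device: I would show that, after deleting the coordinates introduced by the refinement, every point feasible for the refined semidefinite program maps to a point feasible for \eqref{eq:SDP} while leaving the objective coordinate $y_{\bm{i},t_f}$ untouched. Then the set of objective values attainable by the refined program is contained in that attainable by \eqref{eq:SDP}, so its infimum is no smaller than $y_{\bm{i},t_f}^*$. Concretely this reduces to checking, constraint by constraint, that the defining conditions of $\mathsf{S}(\mathsf{F},\mathsf{T},n_I)$ and $\mathsf{S}_0(\mathsf{F},n_I)$ survive the projection.

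For \eqref{eq:SDP_refined_int_order} (raising the hierarchy level), the check is immediate: given a point feasible for $\mathsf{S}(\mathsf{F},\mathsf{T},n_I+1)$ and $\mathsf{S}_0(\mathsf{F},n_I+1)$, I would discard the level-$(n_I+1)$ coordinates $\bm{z}^{n_I+1}_{g,t}$. Every constraint in the description of $\mathsf{S}(\mathsf{F},\mathsf{T},n_I)$ — the identities $\bm{z}^0_{g,t}=g(t)\bm{y}_t$, the memberships $\bm{y}_t\in\G$, the $\Gamma_g$-identities for $l\in\{1,\dots,n_I\}$, and the $\Omega_{l,k}$-memberships for $k<l\le n_I$ — already occurs verbatim among the constraints of $\mathsf{S}(\mathsf{F},\mathsf{T},n_I+1)$ and refers only to coordinates of level at most $n_I$; the same holds for $\mathsf{S}_0$. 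Hence the truncated point is feasible for \eqref{eq:SDP} and the claim follows.

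For \eqref{eq:SDP_refined} I would treat the two enlargements of the data separately and compose the reductions. Enlarging $\mathsf{F}$ by $\tilde g$ only adds the coordinates $\{\bm{z}^l_{\tilde g,t}\}$, the constraints mentioning them, and the initial conditions $\bm{z}^l_{\tilde g,0}=\bm{0}$; deleting those coordinates leaves the constraints of $\mathsf{S}(\mathsf{F},\mathsf{T},n_I)$ and $\mathsf{S}_0(\mathsf{F},n_I)$ exactly in place. Enlarging $\mathsf{T}$ by $\tilde t$ is the substantive case. If $\tilde t\in\mathsf{T}$, or if $\tilde t>t_{n_{\mathsf{T}}}$, nothing at the old time points changes; otherwise $\tilde t$ falls strictly between some consecutive $n(t_j)$ and $t_j$ (with $n(t_1)=0$, the quantities at the origin understood as supplied by $\mathsf{S}_0$), so that in $\tilde{\mathsf{T}}$ the left-neighbour map reads $\tilde n(\tilde t)=n(t_j)$, $\tilde n(t_j)=\tilde t$, and is otherwise unchanged. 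Every constraint of $\mathsf{S}(\mathsf{F},\mathsf{T},n_I)$ not referencing the pair $(n(t_j),t_j)$ is therefore still present. For the $\Omega_{l,k}$-memberships over $[n(t_j),t_j]$, feasibility for $\mathsf{S}(\tilde{\mathsf{F}},\tilde{\mathsf{T}},n_I)$ supplies the memberships over $[n(t_j),\tilde t]$ and over $[\tilde t,t_j]$, and Corollary \ref{cor:linear_scaling}, applied with the triple $n(t_j)\le\tilde t\le t_j$, recovers the membership over $[n(t_j),t_j]$. For the affine $\Gamma_g$-identity at $t_j$ with neighbour $n(t_j)$, I would substitute the $\Gamma_g$-identity at $\tilde t$ (neighbour $n(t_j)$) into the one at $t_j$ (neighbour $\tilde t$): the scaling factors of the form $(t/n(t))^{l-1}$ compose multiplicatively into $(t_j/n(t_j))^{l-1}$, the two $\bm{K}\bm{z}^{l-1}_{g,\tilde t}$ terms cancel, and what remains is precisely the $\mathsf{S}(\mathsf{F},\mathsf{T},n_I)$ identity at $t_j$; in the sub-case $n(t_j)=0$ the leftover collapses to $-\tfrac{t_j^{l-1}}{(l-1)!}g(0)\bm{K}\bm{y}_0$, which matches the first branch in the definition of $\mathsf{S}$. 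Since $\mathsf{S}_0$ is untouched, the projection is feasible for \eqref{eq:SDP}.

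The main obstacle is the bookkeeping around inserting an interior time point: one must verify that the recursively chained affine constraints telescope cleanly across the split interval — in particular that the boundary branch $t=t_1$ behaves correctly when $\tilde t$ is placed before $t_1$ — and that Corollary \ref{cor:linear_scaling} supplies exactly the conic constraint that a coarser partition would otherwise lose. Everything else is routine index-chasing. It is worth noting that the argument presupposes the refined program to be well posed, i.e. that $\tilde{\mathsf{F}}$ still meets the hypotheses of Propositions \ref{prop:integral_hierarchy} and \ref{prop:test_functions}; this is the intended reading, and, for instance, adjoining an exponential test function to a set of exponentials preserves this property.
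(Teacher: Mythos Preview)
Your proposal is correct and follows the same projection idea as the paper: delete the coordinates associated with the refinement and check that what remains is feasible for \eqref{eq:SDP}. The paper's proof is considerably terser—it simply asserts that removing the extra variables yields a feasible point—whereas you carefully verify the one nontrivial step it glosses over, namely that after excising an interior time point the $\Omega_{l,k}$-memberships over the merged interval are recovered via Corollary~\ref{cor:linear_scaling} and the chained $\Gamma_g$-identities telescope correctly.
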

\begin{proof}
	\eqref{eq:SDP_refined} is obvious if $\tilde{t} \in \mathsf{T}$ and $\tilde{g} \in \mathsf{F}$. If $\tilde{t} \notin \mathsf{T}$ and/or $\tilde{g} \notin \mathsf{F}$, any feasible point of the right-hand-side of \eqref{eq:SDP_refined} can be used to construct a feasible point of \eqref{eq:SDP}; simply remove the decision variables that correspond to time point $\tilde{t}$ and/or test function $\tilde{g}$. Similarly, removing the iterated generalized moments of Level $n_I+1$ of the right-hand-side of \eqref{eq:SDP_refined_int_order} yields a feasible point for \eqref{eq:SDP}.  
\end{proof}
\begin{remark}
	Increasing the truncation order also gives rise to montonically improving bounds. For the sake of brevity, we omit a formal statement and proof here as many easily adapted results of this type exist; see for example Corollary 6 in \citet{kuntz2019bounding}.
\end{remark}

We conclude this section with a brief discussion of the scalability of \eqref{eq:SDP}. Table \ref{tab:scaling} summarizes how the number of variables, affine constraints and LMIs as well as their dimension scales with $\mathsf{F}$, $\mathsf{T}$, $n_I$ and the truncation order $m$. The results demonstrate the value of the proposed formulation if the number of species $n$ in the system under investigation is large. In that case, the bound tightening mechanisms offered by adjusting $\mathsf{F}$, $\mathsf{T}$ and $n_I$ scale much more moderately than increasing the truncation order. Furthermore, it should be emphasized that the invariance of LMI size with respect to $\mathsf{F}$, $\mathsf{T}$ and $n_I$ is a very desirable property to achieve scalability of SDP hierarchies in practice \cite{ahmadi2017optimization,ahmadi2019construction}. Lastly, it is worth noting that moment-based SDPs are notorious for becoming numerically ill-conditioned as the truncation order increases. Thus, the presented hierarchy provides a mechanism to circumvent this issue to some extent. 
\begin{table*}
\centering
\caption{Scaling of \eqref{eq:SDP}}\label{tab:scaling}
\begin{tabularx}{0.9\textwidth}{p{0.5cm} X X X X}%
	& \textbf{\#variables} & \textbf{\#affine \newline constraints} & \textbf{\#LMI} & \textbf{LMI size} \\[0.25em]
	\hline
	$\mathsf{F}$ & \order{|\mathsf{F}|} & \order{|\mathsf{F}|} & \order{|\mathsf{F}|} & \order{1} \\[0.5em]
	$\mathsf{T}$ &  \order{|\mathsf{T}|} & \order{|\mathsf{T}|} & \order{|\mathsf{T}|} &  \order{1}\\[0.5em]
	$n_I$ & \order{n_I}& \order{n_I} & \order{n_I^2}&  \order{1}\\[0.5em]
	$m$ & \order{{m+q+n \choose n}} & \order{{m+q+n \choose n}} & \order{1} & \order{{\floor{(m+q)/2}+n \choose n}}\\[0.5em]
	\hline
\end{tabularx}
\end{table*}

\section{Practical Considerations}\label{sec:practicalities}
\subsection{Leveraging Causality for Decomposition} \label{sec:forgetting}
Techniques for the efficient numerical integration of ODEs hinge fundamentally on the causality that is inherent to the solution of ODEs. Specifically, causality enables the original problem, namely integration over a long time horizon, to be decomposed into a sequence of simpler, more tractable subproblems, each corresponding to integration over only a small fraction of the horizon. In this section, we discuss how the structure of the presented optimization problems can be exploited in a similar spirit. Additionally, we show that such exploitation of structure gives rise to a mechanism for trading off tractability and scalability. 

Suppose we are interested in computing moment bounds at the end of a {\em long} time horizon $[0,t_f]$. In light of the arguments made in Section \ref{sec:hierarchies}, it is reasonable to expect that the set $\mathsf{T}$ should ideally be populated with a large number of time points in this setting. Accordingly, solving the resultant optimization problem in one go may become prohibitively costly, even despite the benign scaling of the SDP size with respect to $|\mathsf{T}|$. As alluded to in the beginning of this section, this limitation may be circumvented by decomposing the problem into a sequence of simpler subproblems each of which cover only a fraction of the time horizon. To that end, suppose that $\mathsf{T} = \St{t_1, \dots, t_{n_{\mathsf{T}}}}$ is ordered with $t_{n_{\mathsf{T}}}=t_f$, and let $t_0 = 0$. Further consider the subsets $\mathsf{T}_1, \dots, \mathsf{T}_{n_{\mathsf{T}}}$ of $\mathsf{T}$ such that $\mathsf{T}_k = \St{t_k}$. We now define 
\begin{widetext}
\begin{align}
	\mathsf{S}_{k} = \begin{cases}
	\mathsf{S}_0(\mathsf{F},n_I), &\text{if } k = 0,\\[1em]
	\left\lbrace\bm{y}_{t_k}, \{\bm{z}^{l}_{g,t_k}\} \left\rvert \begin{array}{l} 
	                                                                   \exists (\bm{y}_{t_{k-1}}, \{\bm{z}^{l}_{g,t_{k-1}}\}) \in \mathsf{S}_{k-1}  \text{ such that }  \\
	                                                                   (\{\bm{y}_{t_{k-1}}, \bm{y}_{t_k}\}, \{\bm{z}^{l}_{g,t_{k-1}}, \bm{z}^{l}_{g,t_k}\}) \in \mathsf{S}(\mathsf{F},\mathsf{T}_k, n_I) 
                                                                 \end{array}
                                                     \right\rbrace\right., &\text{if } k \geq 1.
	\end{cases} \nonumber 
\end{align}
\end{widetext}
At this point, it is worth emphasizing the meaning of each $\mathsf{S}_k$ and how its construction directly exploits the way we formulated the necessary moment conditions in $\mathsf{S}(\mathsf{F},\mathsf{T},n_I)$. To that end, note that each condition in $\mathsf{S}(\mathsf{F},\mathsf{T},n_I)$ only links variables corresponding to adjacent time points. As a consequence, the set $\mathsf{S}(\mathsf{F},\mathsf{T}_k,n_I)$ constrains only the variables $(\{\bm{y}_{t_{k-1}}, \bm{y}_{t_k}\}, \{\bm{z}^{l}_{g,t_{k-1}}, \bm{z}^{l}_{g,t_k}\})$. By construction of $\mathsf{S}_k$, we project out the variables $(\bm{y}_{t_{k-1}}, \{\bm{z}^{l}_{g,t_{k-1}}\})$ while imposing their membership in $\mathsf{S}_{k-1}$. It follows by induction that $\mathsf{S}_k$ precisely describes the projection of $\mathsf{S}(\mathsf{F}, \cup_{i=1}^k \mathsf{T}_{i}, n_I)$ onto the variables $(\bm{y}_{t_{k}}, \{\bm{z}^{l}_{g,t_{k}}\})$ under the condition that $(\bm{y}_0, \{\bm{z}_{g,0}\}) \in \mathsf{S}_0(\mathsf{F},n_I)$. By this argument, it follows that the original problem \eqref{eq:SDP} is equivalent to the following reduced space formulation:
\begin{align}
\inf_{\bm{y}_{t_f}, \{\bm{z}^{l}_{g,t_f}\}} \qquad &y_{\bm{i},t_f} \nonumber \\
\text{s.t.} \qquad  &(\bm{y}_{t_f}, \{\bm{z}^{l}_{g,t_f}\}) \in \mathsf{S}_{n_{\mathsf{T}}}. \nonumber 
\end{align}
where all decision variables that correspond to time points before $t_f$ have been projected out. It should be clear that the above optimization problem provides a computational advantage over the original problem only if the set $\mathsf{S}_{n_{\mathsf{T}}}$ can be represented, or at least tightly approximated, in a ``simple'' way. To that end, we suggest to successively compute conic outer approximations of the projections $\mathsf{S}_k$ according to Algorithm \ref{alg:successive_oa}.
\begin{algorithm}[H]
	\caption{Successive Overapproximation}\label{alg:successive_oa}
	\begin{algorithmic}[1]
		\Procedure{sO}{$\mathsf{S}_0,\mathsf{S}_1,\dots,\mathsf{S}_{n_{\mathsf{T}}}$}
		\State Set $\tilde{\mathsf{S}}_0 = \mathsf{S}_0$
		\For {$k = 1,\dots,n_{\mathsf{T}}$}
			\State Compute conic overapproximation
			\begin{align}
				\tilde{\mathsf{S}}_{k} \supset \left\lbrace\bm{y}_{t_k}, \{\bm{z}^l_{g,t_k}\} \left\lvert \begin{array}{l} \exists (\bm{y}_{t_{k-1}}, \{\bm{z}^{l}_{g,t_{k-1}}\}) \in \tilde{\mathsf{S}}_{k-1} \text{ such that} \\
					\begin{pmatrix} \St{\bm{y}_{t_{k-1}}, \bm{y}_{t_{k}}, }\\ \St{\bm{z}^{l}_{g,t_{k-1}},\bm{z}^{l}_{g,t_{k}}}\end{pmatrix} \in \mathsf{S}_k
				\end{array}\right\rbrace\right. \nonumber 
			\end{align}
		\EndFor
		\State \Return $\tilde{\mathsf{S}}_{n_{\mathsf{T}}}$
		\EndProcedure
	\end{algorithmic}
\end{algorithm}

Note that Algorithm \ref{alg:successive_oa} parallels the decomposition approach taken in classical numerical integration of ODEs: the task of finding moment bounds over the entire time horizon $[0,t_f]$ is decomposed into a sequence of smaller subproblems corresponding to finding moment bounds over smaller subintervals of the horizon and each subproblem is solved by using the solution of the previous subproblem as input data. In other words, Algorithm \ref{alg:successive_oa} propagates the moment bounds forward in time, successively subinterval by subinterval, in the same way as a numerical integrator propagates values of the state of a dynamical system forward in time.

We conclude this section with some final remarks. First, we would like to emphasize that the specific choices of $\mathsf{T}$ and $\mathsf{T}_k$ made in this section are made purely for clarity of exposition. In general, $t_f$ need not be the last element in $\mathsf{T}$ and the partition can be chosen as coarse as desired, i.e., each $\mathsf{T}_k$ can comprise multiple time points. In that case, however, Algorithm \ref{alg:successive_oa} needs to be adjusted accordingly. Second, computing and representing the conic overapproximations in Algorithm \ref{alg:successive_oa} may be expensive, in particular if many moments are considered. For example, computing a polyhedral outer approximation of the positive semidefinite cone is known to converge exponentially slowly in the worst-case \cite{braun2015approximation}. Second-order cone approximations perform better empirically \cite{ahmadi2019dsos,ahmadi2017optimization} and theoretically \cite{bertsimas2020polyhedral}, however, are more expensive to compute and represent. On the other hand, it may not be necessary to find overapproximations that are globally tight but only near the optimal solution of the original problem. Finally, with decisions on accuracy of the overapproximation and the coarseness of the partition of $\mathsf{T}$ required in Algorithm \ref{alg:successive_oa}, one is left with mechanisms to trade-off accuracy and computational cost.

\subsection{Quantifying Approximation Quality} \label{sec:poly_approx}
A natural question that arises from the formulation of problem \eqref{eq:SDP} is how to choose the parameters required for its construction, i.e., the sets $\mathsf{F}$ and $\mathsf{T}$, and the level $n_I$ of the proposed constraint hierarchy. We will show that an approximation of Problem \eqref{eq:OCP} can provide useful guidance for these choices. Specifically, using \eqref{eq:OCP} as a baseline, we show that an approximation of problem \eqref{eq:OCP} provides rigorous information on the best attainable bounds given the truncation order $m$ is fixed. To that end, recall that problem \eqref{eq:OCP} requires optimization over an infinite dimensional vector space, namely $\mathcal{C}^\infty(\R_+)$. To overcome this challenge, we will make two restrictions. On the one hand, we will restrict our considerations to a compact interval $[0,t_T]$ and, on the other hand, we will restrict the search space to the set of univariate polynomials up to a fixed but arbitrary maximum degree $d \in \mathbb{Z}_+$. Note that the latter restriction is in some sense arbitrarily weak as $\mathbb{R}[t]$ is dense in $\mathcal{C}^\infty([0,t_T])$ \cite{rudin1964principles}.

The above discussed restrictions enable the use of the following result to construct a tractable approximation of \eqref{eq:OCP}.
\begin{proposition}[Proposition 2 in \citet{ahmadi2018time}] \label{prop:sos_matrix}
	Let $m,d$ be positive integers. If $d$ is odd, let $r=k=\frac{d+1}{2}m$. Otherwise, let $k=(\frac{d}{2}+1)m$, and $r=\frac{dm}{2}$. Then, there exist two linear maps $\alpha:\Sym^k \maps \Sym^{m}[t]$ and $\beta:\Sym^r \maps \Sym^m[t]$ such that the matrix polynomial $\bm{X} \in \Sym^m[t]$ satisfies $\bm{X}(t) \succeq \bm{0}$ on $[0,1]$ if and only if there exist two matrices $\bm{Q}_\alpha \in \Sym^k_+$ and $\bm{Q}_\beta \in \Sym^r_+$ such that $\bm{X} = \alpha(\bm{Q}_\alpha) + \beta(\bm{Q}_\beta)$.
\end{proposition}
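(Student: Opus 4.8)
The plan is to recognize Proposition~\ref{prop:sos_matrix} as the matrix-valued analogue of the classical Markov--Lukács theorem characterizing univariate polynomials that are nonnegative on an interval, and to build the linear maps $\alpha$ and $\beta$ from the Gram-matrix encoding of matrix sums of squares. The ($\Leftarrow$) direction is immediate and I would dispatch it first: if $\bm{X} = \alpha(\bm{Q}_\alpha) + \beta(\bm{Q}_\beta)$ with $\bm{Q}_\alpha,\bm{Q}_\beta \succeq \bm{0}$, then by construction each summand is a fixed multiplier polynomial (one of $1$, $t$, $1-t$, $t(1-t)$, according to the parity of $d$) times a matrix sum of squares; a matrix sum of squares is positive semidefinite for every $t\in\R$, these multipliers are nonnegative on $[0,1]$, and a sum of nonnegative scalar multiples of positive semidefinite matrices is positive semidefinite, so $\bm{X}(t)\succeq\bm{0}$ on $[0,1]$.

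For the ($\Rightarrow$) direction I would first invoke the matrix analogue of the scalar fact that nonnegative univariate polynomials are sums of squares: a matrix polynomial $\bm{P}\in\Sym^m[u]$ with $\bm{P}(u)\succeq\bm{0}$ for all $u\in\R$ admits a representation $\bm{P}(u) = (\bm{b}_\ell(u)\otimes\bm{I}_m)^\top \bm{Q}\,(\bm{b}_\ell(u)\otimes\bm{I}_m)$ with $\bm{Q}\succeq\bm{0}$, where $\bm{b}_\ell(u) = [1\ u\ \cdots\ u^\ell]^\top$ and $2\ell = \deg\bm{P}$ (this goes back to spectral factorization results of Gohberg and Krein and has been used in the present context by Scherer--Hol and Kojima). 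I would then reduce positive semidefiniteness of $\bm{X}$ (of degree at most $d$) on $[0,1]$ to positive semidefiniteness on all of $\R$: writing $d=2s$ (even) one seeks $\bm{X}(t) = \bm{\Sigma}_0(t) + t(1-t)\bm{\Sigma}_1(t)$ with $\bm{\Sigma}_0,\bm{\Sigma}_1$ matrix sums of squares of degree at most $2s$ and $2s-2$, while for $d = 2s+1$ (odd) one seeks $\bm{X}(t) = t\,\bm{\Sigma}_0(t) + (1-t)\bm{\Sigma}_1(t)$ with $\bm{\Sigma}_0,\bm{\Sigma}_1$ matrix sums of squares of degree at most $2s$. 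Either representation follows from the matrix Markov--Lukács theorem, which one can derive by a rational substitution mapping $\R$ onto $[0,1)$ (for instance $t = u^2/(1+u^2)$), clearing denominators to obtain an everywhere-positive-semidefinite matrix polynomial in $u$, applying the factorization above, and pulling the decomposition back while tracking degrees.

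With such a representation in hand the maps and the dimension bookkeeping are routine. In the even case set $\alpha(\bm{Q}_\alpha)(t) = (\bm{b}_{s}(t)\otimes\bm{I}_m)^\top\bm{Q}_\alpha(\bm{b}_{s}(t)\otimes\bm{I}_m)$ and $\beta(\bm{Q}_\beta)(t) = t(1-t)\,(\bm{b}_{s-1}(t)\otimes\bm{I}_m)^\top\bm{Q}_\beta(\bm{b}_{s-1}(t)\otimes\bm{I}_m)$; then $\bm{Q}_\alpha$ has size $(s+1)m = (\tfrac{d}{2}+1)m = k$ and $\bm{Q}_\beta$ has size $sm = \tfrac{dm}{2} = r$. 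In the odd case set $\alpha(\bm{Q}_\alpha)(t) = t\,(\bm{b}_{s}(t)\otimes\bm{I}_m)^\top\bm{Q}_\alpha(\bm{b}_{s}(t)\otimes\bm{I}_m)$ and $\beta(\bm{Q}_\beta)(t) = (1-t)\,(\bm{b}_{s}(t)\otimes\bm{I}_m)^\top\bm{Q}_\beta(\bm{b}_{s}(t)\otimes\bm{I}_m)$; then both Gram matrices have size $(s+1)m = \tfrac{d+1}{2}m = k = r$. Linearity of $\alpha$ and $\beta$ is clear, since for a fixed monomial vector $\bm{b}$ the map $\bm{Q}\mapsto(\bm{b}\otimes\bm{I}_m)^\top\bm{Q}(\bm{b}\otimes\bm{I}_m)$ is linear and multiplication by a fixed polynomial preserves linearity; and the Markov--Lukács representation supplies exactly the $\bm{Q}_\alpha,\bm{Q}_\beta\succeq\bm{0}$ with $\bm{X}=\alpha(\bm{Q}_\alpha)+\beta(\bm{Q}_\beta)$.

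The main obstacle is establishing the matrix Markov--Lukács representation with \emph{precisely} the stated degree bounds. The scalar statement is the classical Lukács theorem, but its matrix version rests on the nontrivial spectral factorization theorem for univariate matrix polynomials that are positive semidefinite on the real line, and on keeping careful control of degrees through the interval-to-line substitution so that the Gram matrices come out with sizes exactly $k$ and $r$. Everything else --- the ($\Leftarrow$) direction, the explicit form of $\alpha$ and $\beta$, the linearity, and the size counts --- is bookkeeping once that representation is available.
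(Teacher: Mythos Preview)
The paper does not supply its own proof of this proposition; it is quoted verbatim as Proposition~2 of \citet{ahmadi2018time} and the reader is referred there for ``an explicit description of $\alpha$ and $\beta$ alongside a simple proof.'' Your sketch is correct and is precisely the argument one finds in that reference: the matrix Markov--Luk\'acs representation (even case $\bm{X}=\bm{\Sigma}_0+t(1-t)\bm{\Sigma}_1$, odd case $\bm{X}=t\,\bm{\Sigma}_0+(1-t)\bm{\Sigma}_1$) combined with the Gram encoding $(\bm{b}\otimes\bm{I}_m)^\top\bm{Q}(\bm{b}\otimes\bm{I}_m)$ of a matrix sum of squares, followed by the size count that matches the stated $k$ and $r$. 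Your identification of the one nontrivial ingredient --- the spectral factorization theorem guaranteeing that a matrix polynomial positive semidefinite on $\R$ is a matrix sum of squares of the expected degree --- is also accurate; everything else is indeed bookkeeping.
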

The maps $\alpha$ and $\beta$ in Proposition \ref{prop:sos_matrix} are remarkably simple and freely available software tools for sum-of-squares programming allow for simple, concise implementation. The interested reader is referred to \citet{ahmadi2018time} for an explicit description of $\alpha$ and $\beta$ alongside a simple proof of Proposition \ref{prop:sos_matrix}. 

Proposition \ref{prop:sos_matrix} allows to construct a tractable restriction of \eqref{eq:OCP} on a compact horizon. The following theorem which may be regarded as a special case of the results of \citet{ahmadi2018time} formalizes this claim. 
\begin{theorem}\label{thm:tvsdp}
	Let $d \in \mathbb{Z}_+$. Then, the following semi-infinite optimization problem
	\begin{align}
	\inf_{\bm{y} \in \R_d^{n_L+n_H}[t]} \quad &y_{\bm{j}}(t_f) \tag{pOCP} \label{eq:pOCP}\\
	\text{s.t.} \quad &\frac{d\bm{y}_L}{dt}(t) = \bm{A}_L\bm{y}_L(t) + \bm{A}_H\bm{y}_H(t), \quad \forall t \in [0, t_T], \nonumber \\
	&\bm{y}(0) = \bm{y}_0, \nonumber \\
	&\bm{y}(t) \in \G, \quad \forall t \in [0, t_T]. \nonumber 
	\end{align}
	is equivalent to a finite SDP. 
\end{theorem}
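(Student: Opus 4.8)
The plan is to reduce \eqref{eq:pOCP} to a standard-form SDP by parametrizing the polynomial trajectory through its coefficients and checking that, under this parametrization, the objective and each constraint become either an affine (in)equality or a positive semidefinite constraint in finitely many real and matrix variables. Concretely, I would write $\bm{y}(t) = \sum_{k=0}^{d} \bm{y}^{(k)} t^k$ with coefficient vectors $\bm{y}^{(0)},\dots,\bm{y}^{(d)} \in \R^{n_L+n_H}$, which are the $(d+1)(n_L+n_H)$ structural decision variables. Since $t_f$ is fixed, the objective $y_{\bm{j}}(t_f) = \sum_{k=0}^{d} t_f^{k}\, y^{(k)}_{\bm{j}}$ is a linear functional of these variables.

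Next I would handle the two equality-type constraints. The residual $\bm{p}(t) := \frac{d\bm{y}_L}{dt}(t) - \bm{A}_L\bm{y}_L(t) - \bm{A}_H\bm{y}_H(t)$ is a vector polynomial in $t$ of degree at most $d$ whose coefficients depend linearly on the $\bm{y}^{(k)}$. Because a nonzero univariate polynomial has only finitely many roots, $\bm{p}(t) = \bm{0}$ for all $t$ in the interval $[0,t_T]$ if and only if every coefficient of $\bm{p}$ vanishes; equating coefficients gives a finite system of linear equalities in the $\bm{y}^{(k)}$. The initial condition $\bm{y}(0) = \bm{y}_0$ is simply $\bm{y}^{(0)} = \bm{y}_0$, again linear.

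The semi-infinite part is the support constraint $\bm{y}(t)\in\G$ for all $t\in[0,t_T]$. Recalling Definition \ref{def:cone}, $\G = \{\bm{y} : \bm{M}_{f_i}(\bm{y}) \succeq \bm{0},\ i=0,\dots,n_p\}$ with each $\bm{M}_{f_i}$ linear, so for each $i$
\begin{align*}
	\bm{X}_i(t) := \bm{M}_{f_i}(\bm{y}(t)) = \sum_{k=0}^{d} \bm{M}_{f_i}\!\big(\bm{y}^{(k)}\big)\, t^k
\end{align*}
is a symmetric matrix polynomial of degree at most $d$, of some fixed size $m_i$, whose matrix coefficients are linear in the $\bm{y}^{(k)}$. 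Thus $\bm{y}(t)\in\G$ on $[0,t_T]$ is equivalent to $\bm{X}_i(t)\succeq\bm{0}$ on $[0,t_T]$ for every $i$. After the affine reparametrization $t = t_T s$, which preserves polynomial degrees and the linear dependence on the coefficients and maps $[0,1]$ onto $[0,t_T]$, I would invoke Proposition \ref{prop:sos_matrix} (with matrix size $m_i$ and degree $d$): $\bm{X}_i\succeq\bm{0}$ on $[0,1]$ if and only if there exist $\bm{Q}^i_\alpha \in \Sym^{k_i}_+$, $\bm{Q}^i_\beta \in \Sym^{r_i}_+$ with $\bm{X}_i = \alpha_i(\bm{Q}^i_\alpha) + \beta_i(\bm{Q}^i_\beta)$, where $\alpha_i,\beta_i$ and $k_i,r_i$ are the fixed data supplied by the proposition. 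Adjoining $\bm{Q}^i_\alpha,\bm{Q}^i_\beta$ to the decision variables, this identity is a linear matrix equality.

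Collecting the pieces, \eqref{eq:pOCP} becomes the minimization of the linear functional $\sum_k t_f^{k} y^{(k)}_{\bm{j}}$ over the $\bm{y}^{(k)}$ and the auxiliary matrices $\bm{Q}^i_\alpha,\bm{Q}^i_\beta$, subject to finitely many linear equality constraints (coefficient matching for $\bm{p}\equiv\bm{0}$, the initial condition, and the identities $\bm{X}_i=\alpha_i(\bm{Q}^i_\alpha)+\beta_i(\bm{Q}^i_\beta)$) and the psd constraints $\bm{Q}^i_\alpha\succeq\bm{0}$, $\bm{Q}^i_\beta\succeq\bm{0}$ — a finite SDP; since the feasible sets are in exact bijection, the optimal values agree. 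The step requiring the most care, and effectively the only nontrivial one, is the support constraint: one must check that $\bm{X}_i$ truly meets the hypotheses of Proposition \ref{prop:sos_matrix} (its degree is at most $d$, so any absent top-order coefficients are just set to zero) and, crucially, that the certificate there is an exact biconditional rather than a one-sided sum-of-squares relaxation — which is precisely why that particular result is used. Everything else rests on the elementary fact that a polynomial vanishing on an interval vanishes identically.
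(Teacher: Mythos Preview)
Your argument is correct and follows exactly the approach of the paper's proof: coefficient matching handles the polynomial equality constraints, and Proposition~\ref{prop:sos_matrix} converts the semi-infinite LMI constraint $\bm{y}(t)\in\G$ on $[0,t_T]$ into a finite SDP representation. You have simply spelled out details (the explicit coefficient parametrization, the rescaling to $[0,1]$, and the biconditional nature of the certificate) that the paper leaves implicit.
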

\begin{proof}
	First, note that all equality constraints in the above optimization problem require equality of polynomials of fixed maximum degree. Accordingly, equality can be enforced by matching the coefficients of the polynomials when expressed in a common basis which in turn can be done via finitely many affine equality constraints. Additionally, recall that $\G$ is described in terms of finitely many LMIs. Thus, the constraint $\bm{y}(t) \in \G$, $\forall t \in [0, t_T]$ is as well by Proposition \ref{prop:sos_matrix}.
\end{proof}
Unfortunately, \eqref{eq:pOCP} may be a strong restriction and often even infeasible. However, the formulation of \eqref{eq:pOCP} can be further relaxed without giving up too much relevant information. Specifically, we propose to restrict the solution space to piecewise polynomial functions in analogy to the collocation approach to optimal control \cite{cuthrell1987optimization}. The following corollary to Theorem \ref{thm:tvsdp} formalizes this approach.
\begin{corollary}
	Let $d,n_{\mathsf{T}} \in \mathbb{Z}_+$ and consider $n_{\mathsf{T}}+1$ time points $t_0,\dots,t_{n_{\mathsf{T}}}$ such that $0=t_0<t_1<\dots<t_{n_{\mathsf{T}}} \leq t_T$. Further suppose $t_f \in [t_k,t_{k-1}]$ for some $k$. Then, the following semi-infinite optimization problem
	\begin{align}
	\inf_{\bm{y}^i \in \R_d^{n_L+n_H}[t]} \quad &y^k_{\bm{j}}(t_f) \tag{pwpOCP} \label{eq:pwpOCP}\\
	\text{s.t.} \quad & \frac{d\bm{y}^i_L}{dt}(t) = \bm{A}_L\bm{y}^i_L(t) + \bm{A}_H\bm{y}^i_H(t),  \nonumber \\
	&\qquad \qquad \forall t \in [t_{i-1}, t_{i}], \ \forall i \in \St{1,\dots,n_{\mathsf{T}}}, \nonumber \\
	&\bm{y}^i(t_{i}) = \bm{y}^{i+1}(t_{i}), \quad \forall i \in \St{1,\dots,n_{\mathsf{T}}-1}, \nonumber \\
	&\bm{y}^1(0) = \bm{y}_0, \nonumber \\
	&\bm{y}^i(t) \in \G, \quad \forall t \in [t_{i-1}, t_{i}],\ \forall i \in  \St{1,\dots,n_{\mathsf{T}}}  \nonumber 
	\end{align}
	is equivalent to a finite SDP. Further, \eqref{eq:pwpOCP} is a valid restriction of \eqref{eq:SDP}.
\end{corollary}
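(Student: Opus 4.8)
The statement has two parts. The first — equivalence to a finite SDP — follows by replaying the proof of Theorem~\ref{thm:tvsdp} separately on each subinterval. Each $\bm{y}^i$ is determined by the finitely many coefficients of its $n_L+n_H$ component polynomials of degree at most $d$. The dynamic constraint on $[t_{i-1},t_i]$ is a polynomial identity of fixed degree, hence equivalent to finitely many affine equations on those coefficients; the matching conditions $\bm{y}^i(t_i)=\bm{y}^{i+1}(t_i)$, the initial condition $\bm{y}^1(0)=\bm{y}_0$, and the objective $y^k_{\bm{j}}(t_f)$ are all affine in the coefficients because polynomial evaluation is linear. For the conic constraints, recall that $\G$ is cut out by finitely many LMIs $\bm{M}_{f}(\bm{y})\succeq\bm{0}$ with $\bm{M}_f$ linear; composing with $\bm{y}^i$ and rescaling $[t_{i-1},t_i]$ affinely onto $[0,1]$ yields, for each generator $f$, a symmetric matrix polynomial of fixed degree that must be positive semidefinite on $[0,1]$, which Proposition~\ref{prop:sos_matrix} converts into two auxiliary positive semidefinite matrices plus a polynomial identity, i.e.\ finitely many LMIs and affine equations. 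Assembling these pieces over $i=1,\dots,n_{\mathsf{T}}$ gives a finite SDP equivalent to \eqref{eq:pwpOCP}.

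For the second part — that \eqref{eq:pwpOCP} is a valid restriction of \eqref{eq:SDP} — I would take any feasible $\{\bm{y}^i\}_{i=1}^{n_{\mathsf{T}}}$ of \eqref{eq:pwpOCP} and concatenate it into $\bm{y}(t):=\bm{y}^i(t)$ for $t\in[t_{i-1},t_i]$. The matching conditions make $\bm{y}$ well defined and continuous on $[0,t_{n_{\mathsf{T}}}]$, hence absolutely continuous; moreover $\bm{K}\frac{d\bm{y}}{dt}(t)=\bm{A}\bm{y}(t)$ for almost every $t$ and $\bm{y}(t)\in\G$ for every $t$. These are precisely the three facts used in the derivation of \eqref{eq:affine_dynamics} and in the proof of Proposition~\ref{prop:integral_hierarchy}: the only integration-by-parts step requires nothing beyond $\bm{y}\in\mathcal{AC}([0,t_{n_{\mathsf{T}}}])$, and the higher-level conditions then follow by ordinary iterated integration of continuous functions. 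Hence the iterated generalized moments $\bm{z}^l(g;\cdot)$ built from $\bm{y}$ via Definition~\ref{def:iterated_generalized_moments} satisfy Conditions (i) and (ii) of Proposition~\ref{prop:integral_hierarchy}, and therefore all the defining conditions of $\mathsf{S}(\mathsf{F},\mathsf{T},n_I)$ in the equivalent form recorded in the SI; since $\bm{y}(0)=\bm{y}_0$ and $\bm{z}^l(g;0)=\bm{0}$ for $l\geq1$, the point also lies in $\mathsf{S}_0(\mathsf{F},n_I)$. Evaluating at the points of $\mathsf{T}$ (taken inside $[0,t_{n_{\mathsf{T}}}]$ and augmented by $t_f$ if $t_f$ is not already a knot, which the concatenated trajectory handles seamlessly since it is smooth across any such interior point) thus produces a point feasible for \eqref{eq:SDP}, for any admissible $\mathsf{F}$ and $n_I$, whose objective equals $y^k_{\bm{j}}(t_f)$; with the index $\bm{j}$ of \eqref{eq:pwpOCP} playing the role of $\bm{i}$ in \eqref{eq:SDP}, this gives $y^*_{\bm{i},t_f}$ no larger than the optimal value of \eqref{eq:pwpOCP}.

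The only genuinely delicate point is the regularity bookkeeping in the second part: one must check that the concatenated $\bm{y}$, which need not be $\mathcal{C}^1$ across the knots in its ``control'' block $\bm{y}_H$ (only $\bm{y}_L$ is, thanks to the dynamics and the matching conditions), nonetheless carries just enough regularity — absolute continuity together with a.e.\ differentiability satisfying the dynamics — for every integration step in the derivations of \eqref{eq:affine_dynamics} and Proposition~\ref{prop:integral_hierarchy} to apply verbatim. Once this is confirmed, the remainder is a direct transcription of the proofs of Theorems~\ref{thm:tvsdp} and \ref{thm:sdp}.
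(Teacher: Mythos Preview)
Your proposal is correct and follows essentially the same approach as the paper: for the first part you apply Theorem~\ref{thm:tvsdp} piecewise on each subinterval, and for the second you concatenate a feasible family $\{\bm{y}^i\}$ into a single trajectory satisfying \eqref{eq:mCME} and $\bm{y}(t)\in\G$, then feed its iterated generalized moments into \eqref{eq:SDP}. Your version is in fact more careful than the paper's---the explicit regularity bookkeeping (absolute continuity of the concatenated $\bm{y}$, potential non-$\mathcal{C}^1$ behavior of $\bm{y}_H$ at the knots) and the verification of membership in $\mathsf{S}_0(\mathsf{F},n_I)$ are details the paper's proof leaves implicit.
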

\begin{proof}
	That \eqref{eq:pwpOCP} is equivalent to a finite SDP follows immediately from Theorem \ref{thm:tvsdp}. Further, let $\St{\bm{y}^i}$ be feasible for \eqref{eq:pwpOCP} and consider the piecewise polynomial obtained by parsing the $\bm{y}^i$ together like
	\begin{align*}
		\tilde{\bm{y}}(t) = \bm{y}^i(t), \ \forall t \in (t_{i-1},t_i] \text{ and } \forall i \in \St{1,\dots,n_{\mathsf{T}}}.
 	\end{align*}
 	By construction $\tilde{\bm{y}}$ satisfies \eqref{eq:mCME} and $\tilde{\bm{y}}(t) \in \G$, $\forall t \in [0,t_f]$. Accordingly, the iterated generalized moments obtained from $\tilde{\bm{y}}$ satisfy Conditions (i) and (ii) in Proposition \ref{prop:integral_hierarchy}. Thus, it is straightforward to generate a feasible point for \eqref{eq:SDP} from $\tilde{\bm{y}}$.
\end{proof}
Since \eqref{eq:pwpOCP} is fully independent of the choice of $\mathsf{F}$, $\mathsf{T}$ and $n_I$, it provides a way to check rigorously the approximation quality of \eqref{eq:SDP} against the baseline of \eqref{eq:OCP}. This can guide the user choice of the truncation order $m$ and the parameters $\mathsf{F}$, $\mathsf{T}$ and $n_I$. Specifically, the difference of optimal values of \eqref{eq:pwpOCP} and \eqref{eq:SDP} quantifies the potential for improvements by adding elements to $\mathsf{F}$ and $\mathsf{T}$ versus moving to a higher level in the proposed hierarchy. 

\section{Examples}\label{sec:examples}
In this section, we present several case studies that demonstrate the effectiveness of the proposed bounding hierarchy. We put special emphasis on showcasing that the proposed method enables the computation of substantially tighter bounds than can be obtained by the method of \citet{dowdy2018dynamic}. Throughout, we use the subscripts $DB$ to indicate any results obtained with Dowdy and Barton's method \cite{dowdy2018dynamic} and the subscript $HB$ for those generated with the method presented in this paper.

\subsection{Preliminaries} \label{sec:test_functions}
\subsubsection{Reaction Kinetics} 
The reaction networks in all considered examples are assumed to follow mass action kinetics. 

\subsubsection{State Space \& LMIs}
Following \citet{dowdy2018dynamic}, we reduce the state space of every reaction network explicitly to the minimum number of independent species by eliminating reaction invariants. Further, we employ the set of LMIs suggested by \citet{dowdy2018dynamic}. These comprise LMIs of the form \eqref{eq:LMI} which reflect non-negativity of the underlying probability measure as well as non-negativity of molecular counts of all species including those eliminated via reaction invariants.

\subsubsection{Hierarchy Parameters}
Applying the proposed bounding scheme requires the user to specify a range of parameters, namely the truncation order $m$, the hierarchy level $n_I$, the test function set $\mathsf{F}$ and the set of time points $\mathsf{T}$ used to discretize the time domain. While all these hierarchy parameters can in principle be chosen arbitrarily (assuming the test functions satisfy the hypotheses of Propositions \ref{prop:integral_hierarchy} and \ref{prop:test_functions}) and independently, a careful choice is essential to achieve a good trade-off between bound quality and computational cost. We discuss this issue in greater detail in Section \ref{sec:bounding_mechanisms}. While we are at present not aware of a systematic way of choosing the hierarchy parameters optimally in the aforementioned sense, we found that the following set of simple heuristics performs well in practice:
\begin{itemize}
    \item The set of time points $\mathsf{T}$ is chosen by equidistantly discretizing the entire time horizon $[0,t_f]$, where $t_f$ denotes the time point at which the bounds are to be evaluated, into $n_{\mathsf{T}}$ intervals. 
    \item In line with Dowdy and Barton's original work \cite{dowdy2018dynamic}, we employ exponential test functions of the form $g(t) = e^{\rho(t_T-t)}$. As argued in Example \ref{ex:test_functions}, any set of test functions of this form satisfies the hypotheses of Propositions \ref{prop:integral_hierarchy} and \ref{prop:test_functions}. Throughout, we choose $t_T$ to coincide with the end of the time horizon on which the bounds are to be evaluated. As $t_T$ merely controls the scale of the generalized moments generated by $g$, this choice is somewhat arbitrary but contributes in our experience to improved numerical conditioning of \eqref{eq:SDP}. For the choice of the parameters $\rho$, we draw motivation from linear time-invariant systems theory and choose $\rho$ based on the singular values of the coefficient matrix $\bm{A}$ of the moment dynamics \eqref{eq:mCME}. Concretely, we choose the test function set $\mathsf{F} = \St{e^{-\sigma_i(t_T-t)}}_{i=1}^{n_{\mathsf{F}}}$ assembled from the smallest $n_{\mathsf{F}}$ unique singular values $\sigma_1,\dots,\sigma_{n_{\mathsf{F}}}$ of $\bm{A}$. 
    \item Motivated by the scaling of the size of the bounding SDP (see Table \ref{tab:scaling}), we use the following greedy procedure to ultimately choose $m$, $n_{I}$, $n_{\mathsf{T}}$ and $n_{\mathsf{F}}$
    \begin{enumerate}
        \item Fix $m = 2$, $n_{I} = 2$, $n_{\mathsf{F}} = 1$ and successively increase $n_{\mathsf{T}}$ until no significant bound tightening effect is observed. 
        \item Increase $n_{\mathsf{F}}$ successively until no significant bound tightening effect is observed.
        \item Increase $m$ until bounds are sufficiently tight or computational cost exceeds a tolerable amount.
    \end{enumerate}
    Note that the above procedure fixes the hierarchy Level $n_I$ at 2. While increasing $n_I$ generally also has a bound tightening effect, in our experience it promotes numerical ill-conditioning and is rarely significantly more efficient than the other bound tightening mechanisms.
\end{itemize}
In all following case studies, we employ the above heuristics to choose the hierarchy parameters. As the time point and test function sets are systematically generated once the parameters $n_{\mathsf{T}}$ and $n_{\mathsf{F}}$ are chosen, we instead report these parameters in place of $\mathsf{T}$ and $\mathsf{F}$.

\subsubsection{Numerical Considerations}
Sum-of-squares and moment problems are notorious for poor numerical conditioning and the problems \eqref{eq:SDP} and \eqref{eq:pwpOCP} are no exception to this issue. While the specific reasons for this problem remain largely unclear, it is widely suspected to originate from the fact that moments and coefficients of polynomials expressed in the monomial basis often vary over several orders of magnitude. To circumvent this deficiency, we employ a simple scaling strategy for the decision variables in the bounding problems. This strategy is applicable whenever bounds are computed at multiple time points $t_1 < t_2 < \cdots$ along a trajectory and can be summarized as follows: we solve the SDPs in chronological order and scale the decision variables in the bounding problem corresponding to the time point $t_k$ by the values attained at the solution of the bounding problem associated with the previous time point $t_{k-1}$. For the initial problem, we perform scaling based on the moments of the distribution of the initial state of the system.
While an appropriate scaling of the decision variables is crucial to avoid numerical issues, it is not always sufficient to achieve convergence of the solver to a desired degree of accuracy with respect to optimality. We hedge against potentially inaccurate, suboptimal solutions and ensure validity of the computed bounds by verifying that the solver converged to a dual feasible point and reporting the associated dual objective value.

\subsubsection{Implementation}
All semidefinite programs solved for the case studies presented in this section were assembled using JuMP \cite{dunning2017jump} and solved with MOSEK v9.0.97 \cite{andersen2000mosek}. Our implementation is openly available at \url{https://github.com/FHoltorf/StochMP}. 

\subsection{Generic Examples}
To contrast the performance of the proposed methodology with its predecessor, we first consider three generic reaction networks that were studied by \citet{dowdy2018dynamic}. 

\subsubsection{Simple Reaction Network}
First, we study the bound quality for means and variances of the molecular counts of the species $A$ and $C$ following the simple reaction network
\begin{align}
	\ce{A} + \ce{B} \stackrel{c_1}{\rightarrow} \ce{C} \stackrel[c_2]{c_3}{\leftrightharpoons}\ce{D}.\label{sys:toy}
\end{align} 
Figure \ref{fig:toy_system} shows a comparison between the bounds obtained by both methods. For reference, also the trajectories obtained with Gillespie's Stochastic Simulation Algorithm are provided. The results showcase that the presented necessary moment conditions have the potential to tighten the obtained bounds significantly. In particular bounds on the variance of both species are dramatically improved at the relatively low truncation order of $m=4$. 
\begin{figure}
	\centering
	\begin{subfigure}{0.49\textwidth}
		\includegraphics[width=1.0\textwidth]{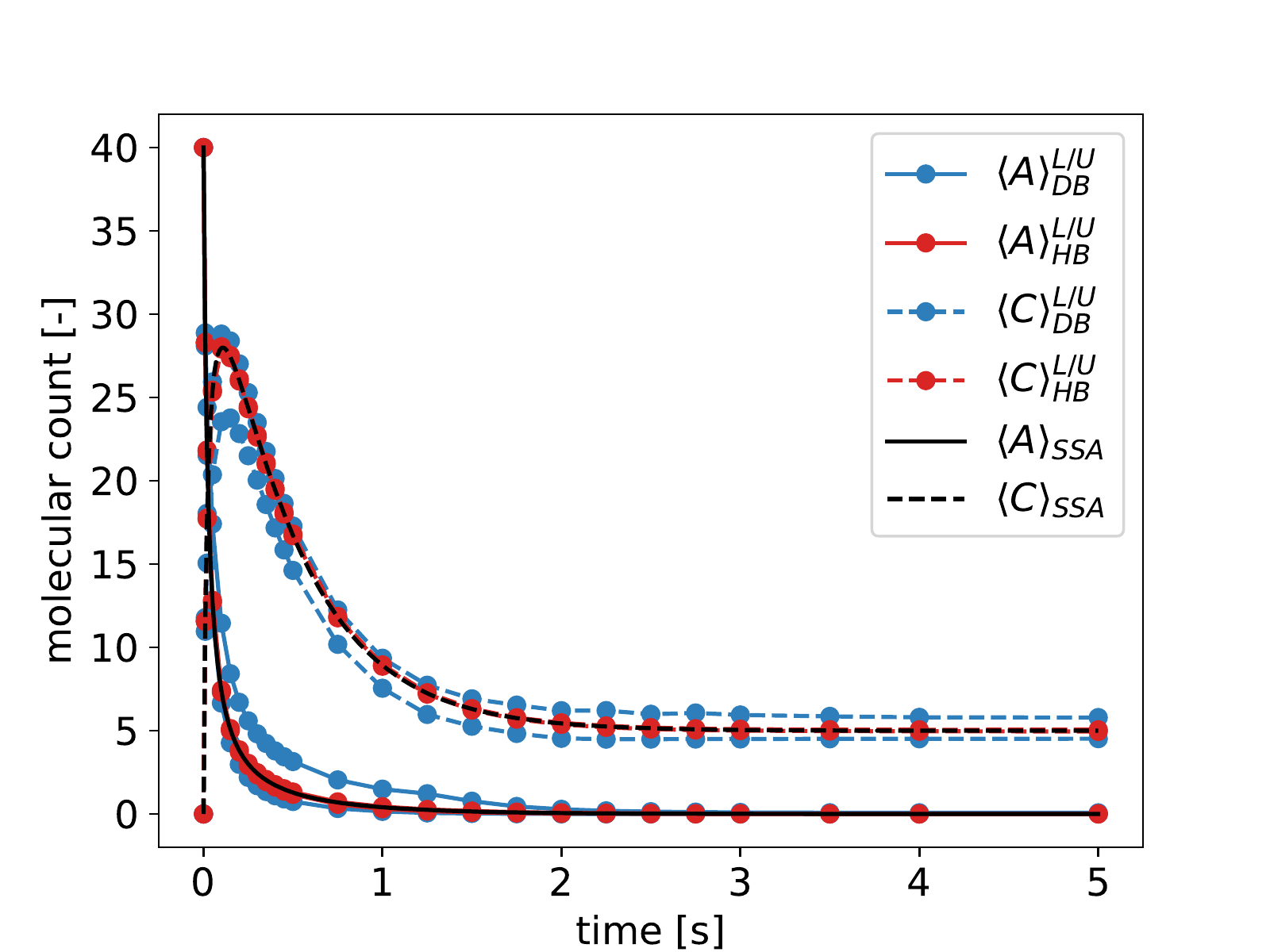}
		\caption{}
	\end{subfigure}
	\begin{subfigure}{0.49\textwidth}
		\includegraphics[width=1.0\textwidth]{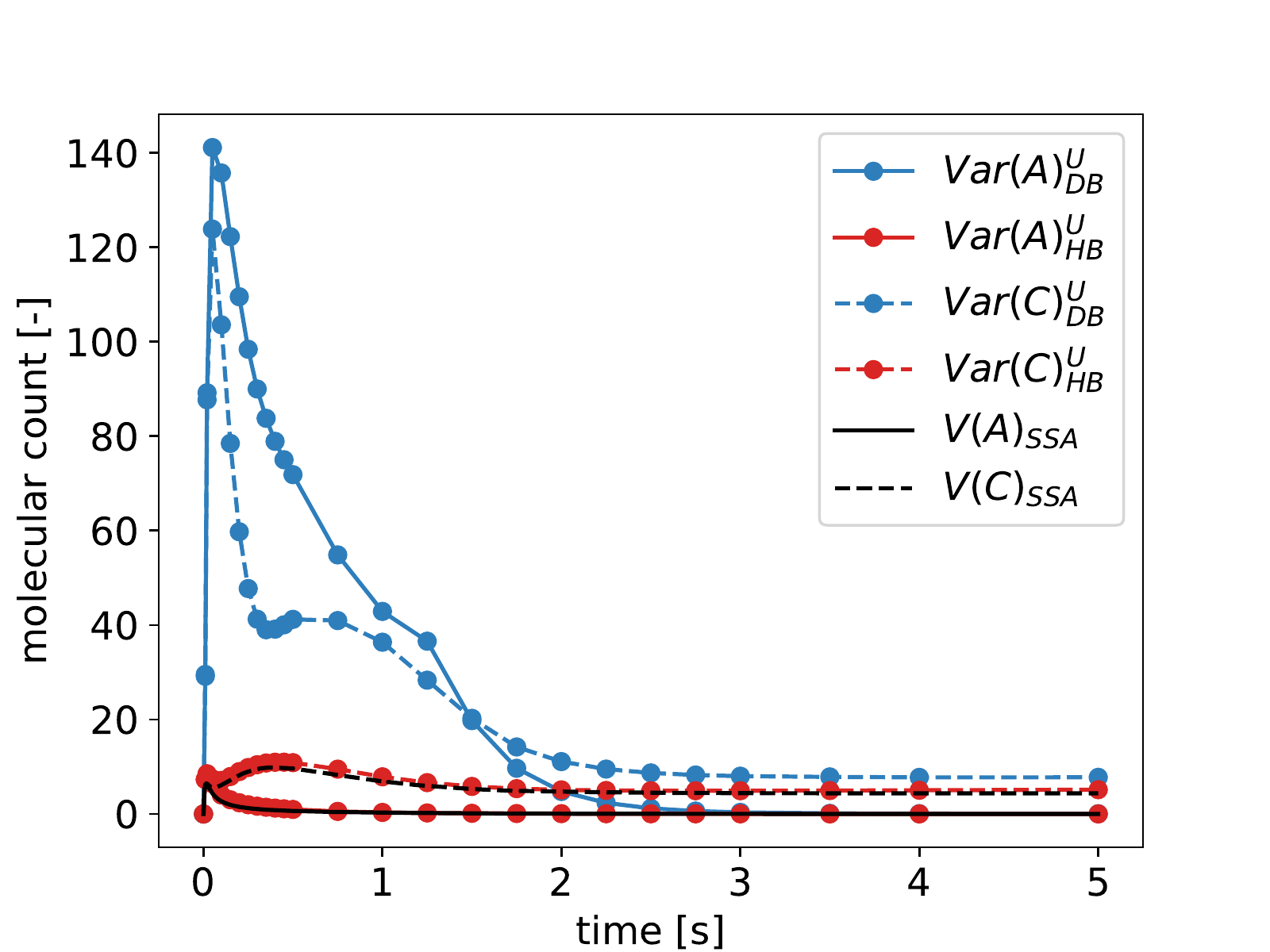}
		\caption{}
	\end{subfigure}
	\begin{subfigure}{0.49\textwidth}
		\includegraphics[width=1.0\textwidth]{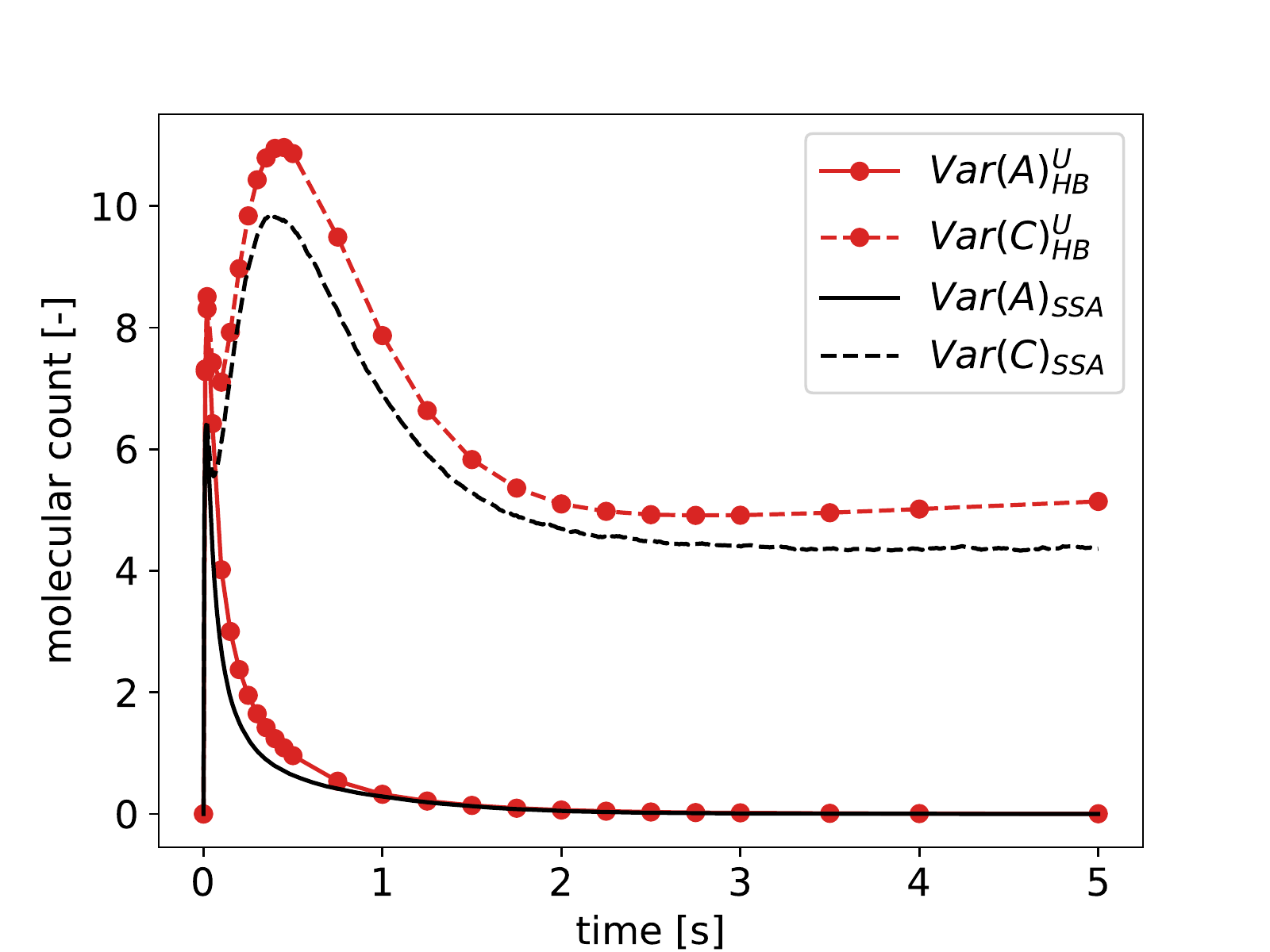}
		\caption{}
	\end{subfigure}
	\caption[Bounds on means and variances of species \ce{A} and \ce{B} in reaction network \eqref{sys:toy}]{Bounds on (a) means and (b,c) variances of molecular counts of species \ce{A} and \ce{C} in reaction network \eqref{sys:toy}; initial state: $x_{A,0} = 40$, $x_{B,0} = 41$, and $x_{C,0}=x_{D,0}=0$; kinetic parameters: $\bm{c} = (1, 2.1, 0.3) \, \si{\per \second}$; hierarchy parameters: $m = 4$, $n_{\mathsf{F}}=3$, $n_{\mathsf{T}} = 10$.} \label{fig:toy_system}
\end{figure}

\subsubsection{Cyclic Reaction Network}
Second, we investigate the cyclic reaction network illustrated in Figure \ref{sys:cyclic_network}. 
\begin{figure}
	\centering
	\begin{tikzpicture}[scale=0.5]
	\draw[-Latex] (-30+25:1) arc (-30+25:90-50:1) node[midway, right]{$c_3$};
	\draw[-Latex] (90+50:1) arc (90+50:210-25:1) node[midway, left]{$c_1$};
	\draw[-Latex] (210+25:1) arc (210+25:330-25:1) node[midway, below]{$c_2$};
	
	\node (A+B) at (90:1) {$\ce{A}+\ce{B}$};
	\node (C) at (-30:1) {$\ce{D}$};
	\node (D) at (210:1) {$\ce{C}$};
	\end{tikzpicture}
	\caption{Cyclic reaction network from \citet{dowdy2018dynamic}}\label{sys:cyclic_network}
\end{figure}
This reaction network exhibits similar characteristics as the simple network studied in the previous section, i.e., a bounded state space and two independent species. In contrast to the simple network, however, the situation is a bit more complex since the reaction is fully reversible. Therefore, the reachable set of the cyclic reaction network is at least as large as that of the simple network given an identical initial state. 

Figure \ref{sys:cyclic_network} shows a comparison between the bounds obtained from \eqref{eq:SDP} and Dowdy and Barton's method. Although the bounds are slightly looser than before, the results demonstrate again the potential of the proposed methodology. 
\begin{figure}
	\centering
	\begin{subfigure}{0.49\textwidth}
		\includegraphics[width=0.99\textwidth]{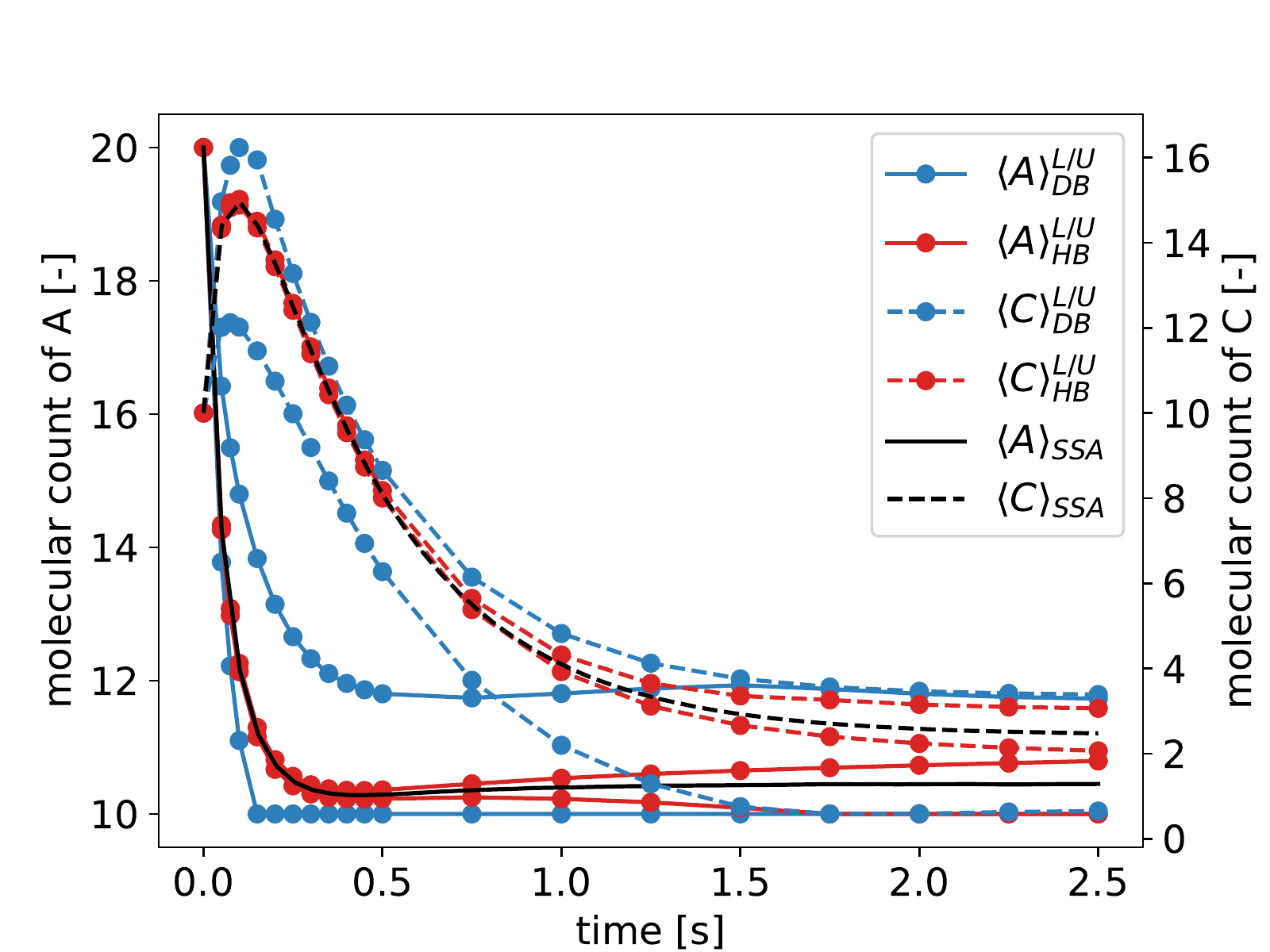}
		\caption{}
	\end{subfigure}
	\begin{subfigure}{0.49\textwidth}
		\includegraphics[width=0.99\textwidth]{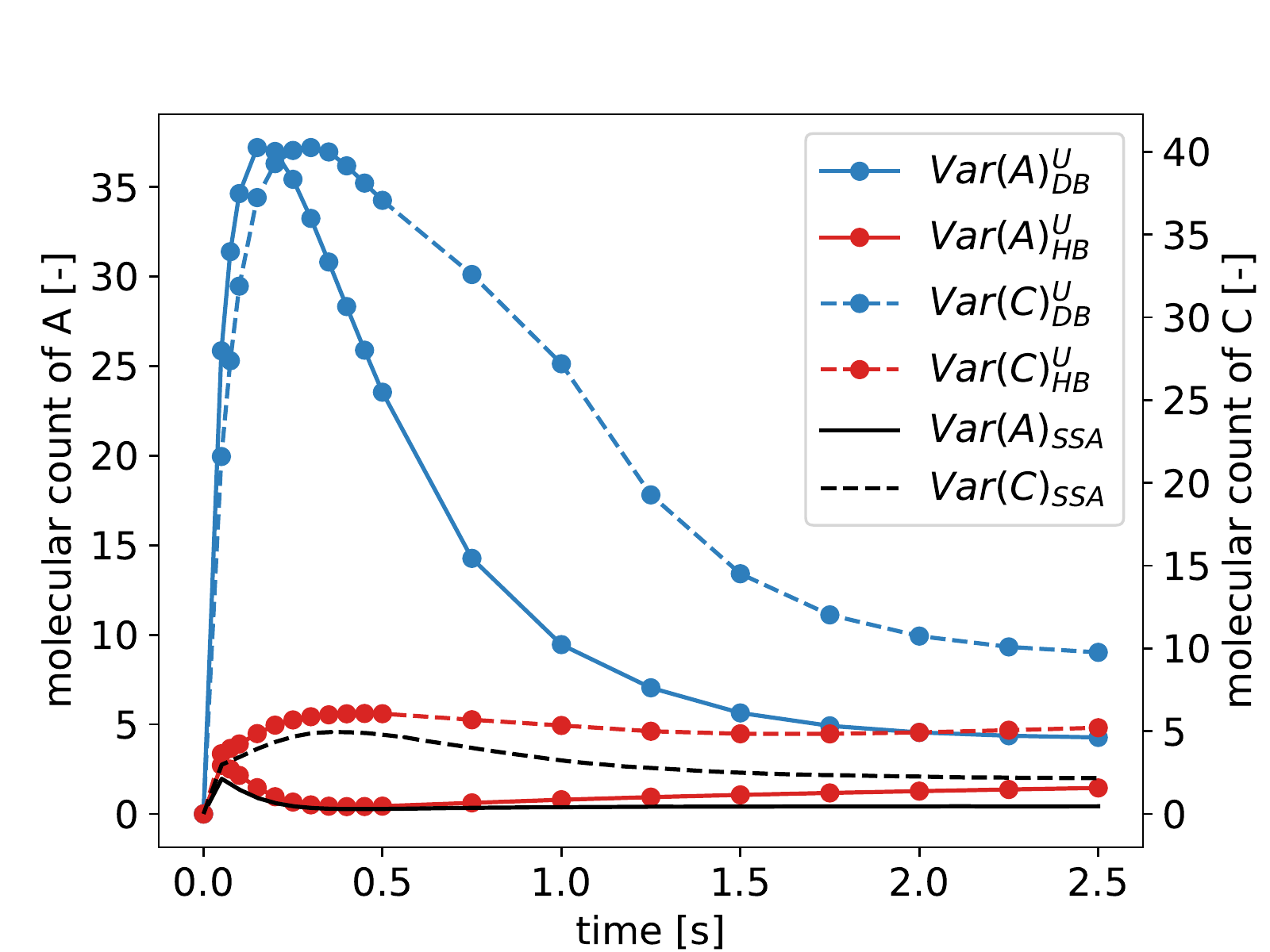}
		\caption{}
	\end{subfigure}
	\caption[Bounds on means and variances of the molecular counts of species \ce{A} and \ce{C} in the cyclic reaction network shown in Figure \ref{sys:cyclic_network}]{Bounds on means (a) and variances (b) of the molecular counts of species \ce{A} and \ce{C} in the cyclic reaction network shown in Figure \ref{sys:cyclic_network}; initial state: $x_{A,0} = 20$, $x_{B,0} = 10$, and $x_{C,0}=10$, $x_{D,0}=0$; kinetic parameters: $\bm{c} = (1, 2.1, 0.3) \, \si{\per \second}$; hierarchy parameters: $m = 2$, $n_{\mathsf{F}} = 3$, $n_{\mathsf{T}} = 10$.} \label{fig:cyclic_network}
\end{figure}

\subsubsection{Large Reaction Network}
Last, we investigate the reaction network illustrated in Figure \ref{sys:large_network}. In contrast to the previous networks, this large reaction network poses a challenge for sampling based analysis techniques. The underlying reason for that is two-fold. On the one hand, the network is characterized by a large, 7-dimensional state space. On the other hand, the system is extremely stiff. These properties frustrate sampling based techniques as they exacerbate the need for large sample sizes and render each sample evaluation expensive.

Figure \ref{fig:large_network} shows bounds on the mean molecular counts of species $A$ and $H$. In line with the results of the previous sections, the bounds obtained by the proposed method are again considerably tighter. In this example, however, this result carries more weight as increasing the truncation order leads to a prohibitive increase in problem size for the method of \citet{dowdy2018dynamic}. Accordingly, the proposed method offers bounds at a quality that was previously not attainable for problems of such complexity.  
\begin{figure}
	\centering
	\includegraphics[scale=0.9]{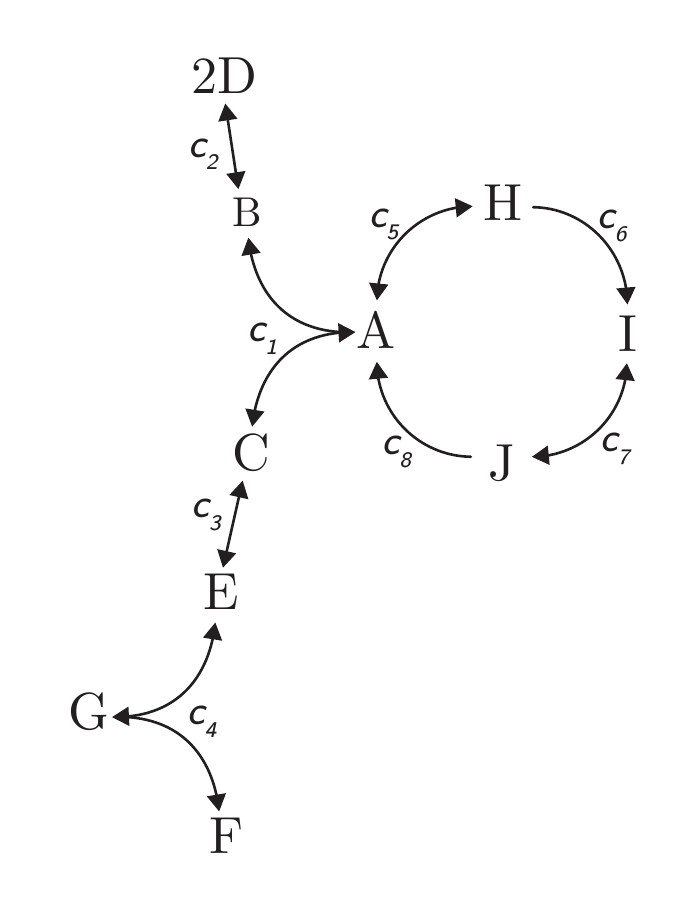}
	\caption{Large reaction network from \citet{dowdy2018dynamic}} \label{sys:large_network}
\end{figure}
\begin{figure}
	\centering
	\includegraphics[scale=0.5]{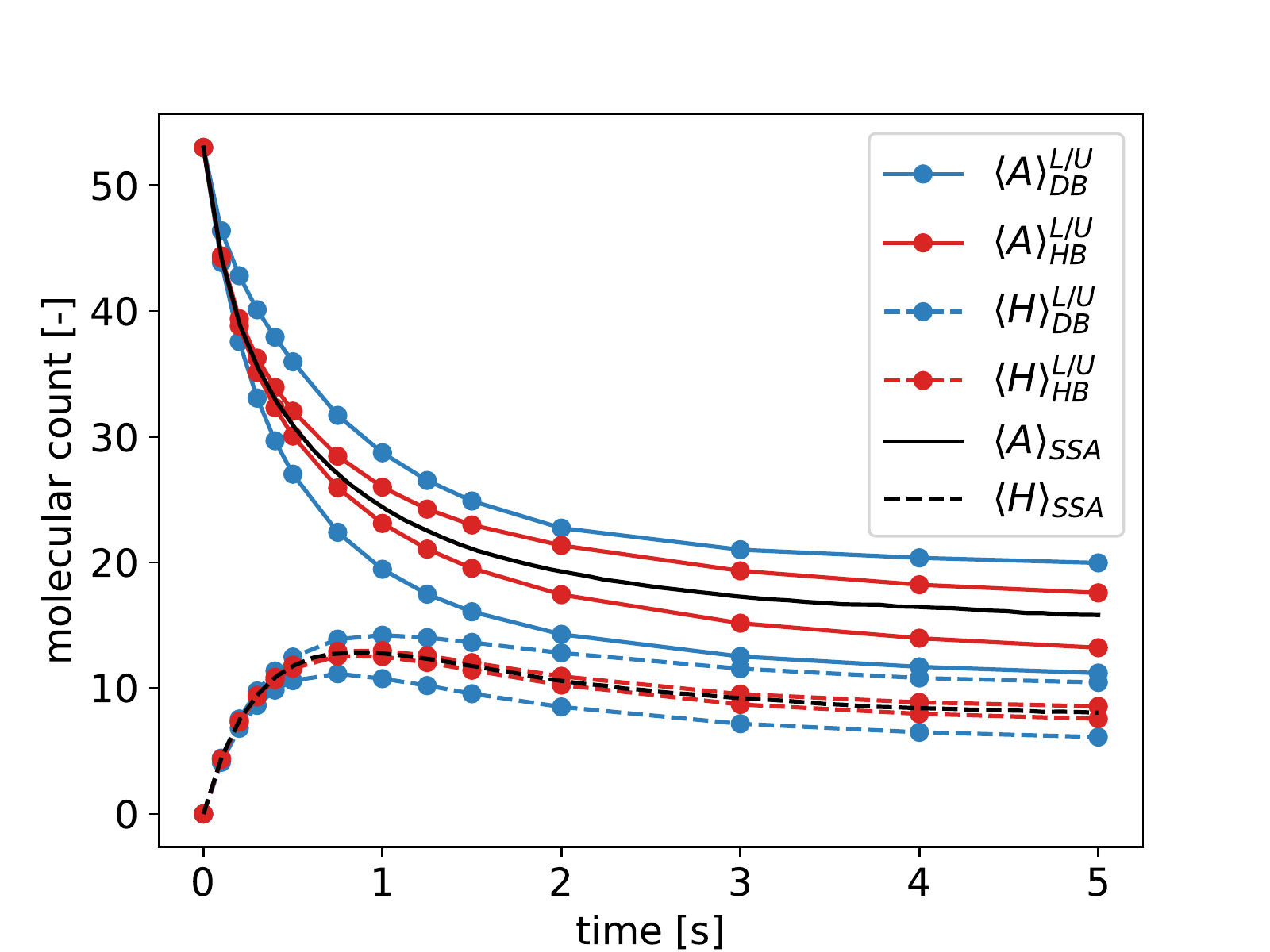}
	\caption[Bounds on the mean molecular counts of species \ce{A} and \ce{H} in the large reaction network shown in Figure \ref{sys:large_network}]{Bounds on the mean molecular counts of species \ce{A} and \ce{H} in the large reaction network shown in Figure \ref{sys:large_network}; initial state: $x_{\ce{A},0}=x_{\ce{F},0} = 53$, $x_{\ce{B},0} = x_{\ce{C},0} = x_{\ce{D},0} = x_{\ce{E},0} = x_{\ce{G},0} = x_{\ce{H},0} = x_{\ce{I},0} = x_{\ce{J},0} = 0$; kinetic parameters: $\bm{c} = (1, 1, 1, 10^4, 1, 1, 10^5, 1) \, \si{\per \second}$; hierarchy parameters: $m=2$, $n_{\mathsf{F}}=5$, $n_{\mathsf{T}} = 5$.} \label{fig:large_network}
\end{figure}

\subsection{Open Systems}
We now depart from systems that have a bounded state space and turn to open systems. Such systems are of particular interest as their moments can rarely be found analytically and the corresponding CME gives rise to a countably infinite system of coupled ODEs precluding a direct numerical integration. To showcase the ability to compute tight moment bounds for systems of this type, we study two birth-death processes with different ``degrees of nonlinearity''.

The first system we investigate is the following simple nonlinear birth-death process:
\begin{align}
	\emptyset \stackrel{c_1}{\rightarrow}\ce{A}, \qquad 2\ce{A} \stackrel{c_2}{\rightarrow} \emptyset \label{sys:birth_death} 
\end{align}
Figure \ref{fig:birth_death} draws a comparison between the bounds for the mean and variance of the molecular count of species \ce{A}. The proposed method again yields substantially tighter bounds than its predecessor. In fact, the bounds on the mean are even tight enough to essentially recover the true solution.
\begin{figure}
	\centering
	\begin{subfigure}{0.49\textwidth}
		\includegraphics[width=1.0\textwidth]{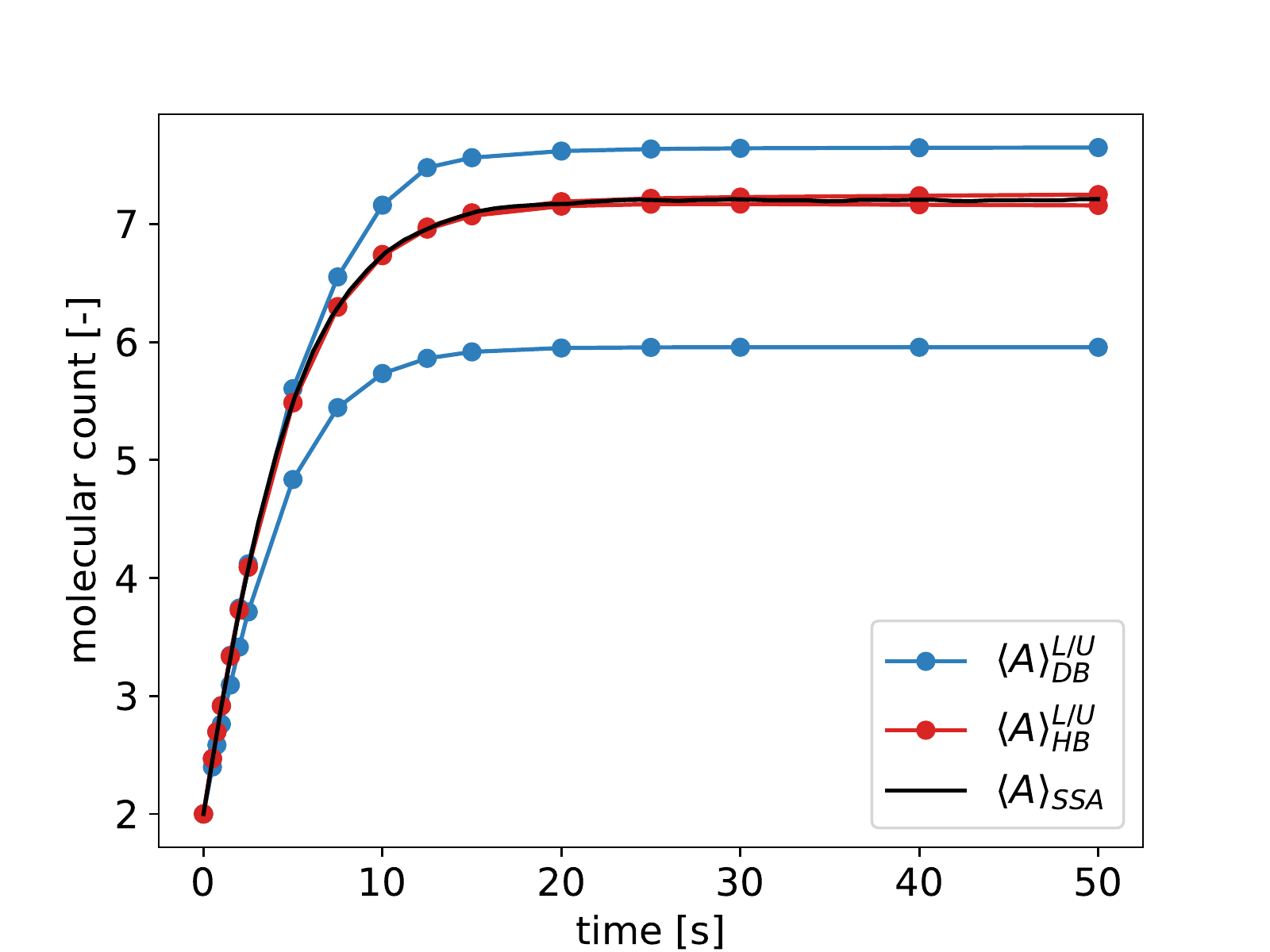}
		\caption{}
	\end{subfigure}
	\begin{subfigure}{0.49\textwidth}
		\includegraphics[width=1.0\textwidth]{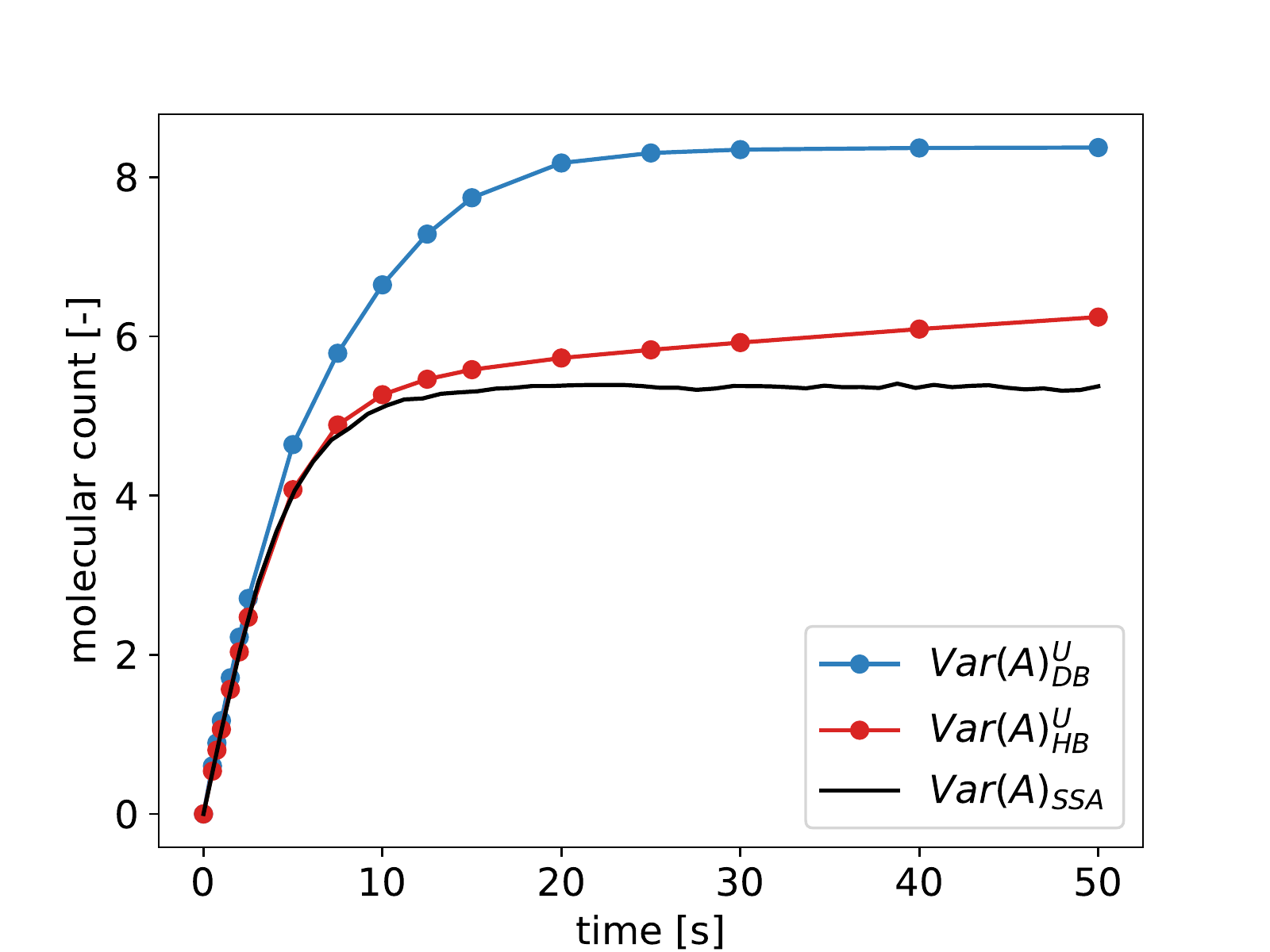}
		\caption{}
	\end{subfigure}
	\caption[Bounds on means and variances of the molecular count of species \ce{A} undergoing the birth-death process \eqref{sys:birth_death}]{Bounds on means (a) and variances (b) of the molecular count of species \ce{A} undergoing the birth-death process \eqref{sys:birth_death}; initial state: $x_{A,0} = 2$; kinetic parameters: $\bm{c} = (1, 0.01) \, \si{\per \second}$; hierarchy parameters: $m = 8$, $n_{\mathsf{F}}=3$, $n_{\mathsf{T}} = 20$.} \label{fig:birth_death}
\end{figure}

Next, we will consider Schl{\"o}gl's system \cite{schlogl1972chemical}: 
\begin{align}
 2\ce{X} \stackrel[c_1]{c_2}{\leftrightharpoons} 3\ce{X}, \qquad \ce{X} \stackrel[c_3]{c_4}{\leftrightharpoons} \emptyset \label{sys:schloegl}.
\end{align}
As a canonical example for a chemical bifurcation, Schl{\"o}gel's system was previously studied by different authors \cite{dowdy2018bounds,kuntz2019bounding} to illustrate bounding methods for the moments of stationary solutions of the CME. We provide the first analysis for the dynamic case here. 

Figure \ref{fig:schloegl} illustrates the results for Schl{\"o}gl's system. 
Although the proposed methodology again strongly outperforms its predecessor, the obtained bounds are rather loose. We emphasize, however, that we could not reproduce bounds of similar quality by increasing the truncation order in Dowdy and Barton's \cite{dowdy2018dynamic} method. The bounds started stalling before eventually poor numerical conditioning prohibited solution of the SDPs altogether.
\begin{figure}
	\centering
	\begin{subfigure}{0.49\textwidth}
		\includegraphics[width=\textwidth]{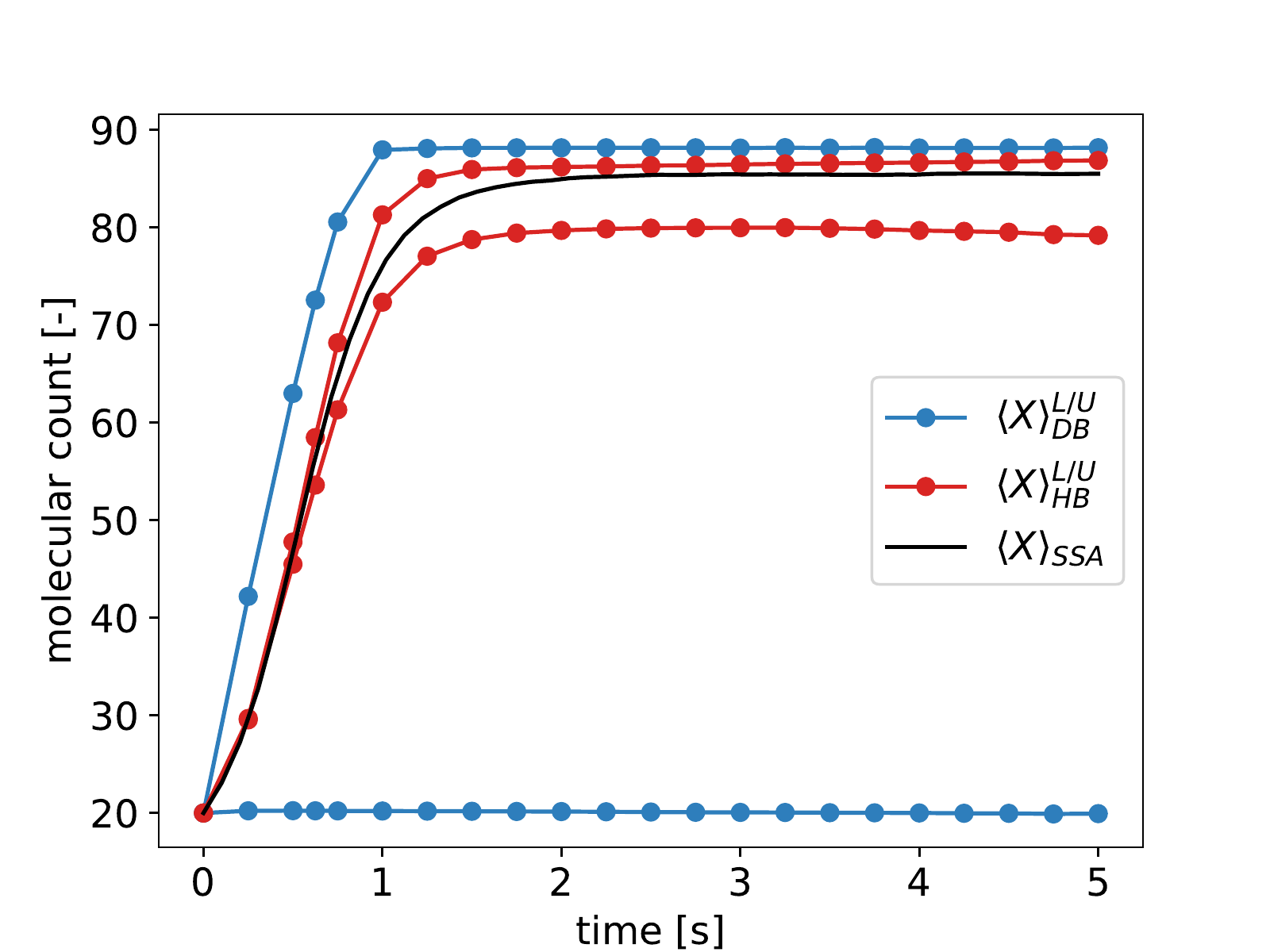}
		\caption{}
	\end{subfigure}
	\begin{subfigure}{0.49\textwidth}
		\includegraphics[width=\textwidth]{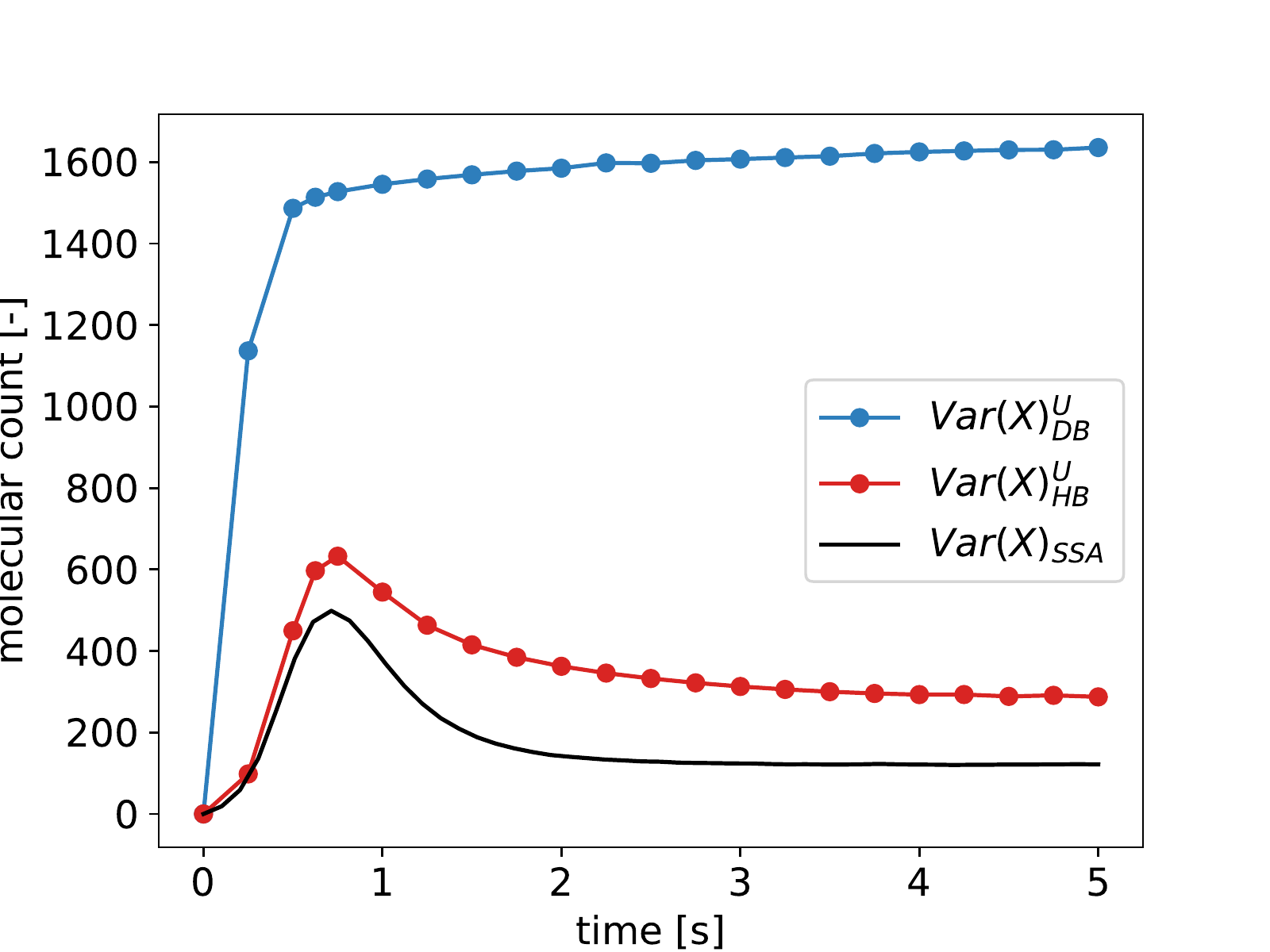}
		\caption{}
	\end{subfigure}
	\caption[Bounds on means and variances of the molecular counts of species \ce{X} in Schl{\"o}gl's system \eqref{sys:schloegl}]{Bounds on means (a) and variances (b) of molecular counts of species \ce{X} in Schl{\"o}gl's system \eqref{sys:schloegl}; initial state: $x_{X,0} = 20$; kinetic parameters: $\bm{c} = (0.15, \SI{1.5e-3}, 20, 2) \, \si{\per \second}$; hierarchy parameters: $m = 6$, $n_{\mathsf{F}} = 3$, $n_{\mathsf{T}} = 20$.} \label{fig:schloegl}
\end{figure}

\subsection{Biochemical Reaction Networks}
To finally demonstrate that the proposed methodology may be useful in practice, we examine three reaction networks drawn from biochemical application. In these applications, molecular counts are often present in the order of 10s to 100s necessitating the consideration of stochasticity.

\subsubsection{Michaelis-Menten Kinetics}
Michaelis-Menten kinetics underpin a vast range of metabolic processes. Understanding the behavior and noise present in the associated reaction networks is of particular value for the investigation of the metabolic degradation of trace substances in biological organisms. We examine the basic Michaelis-Menten reaction network:
\begin{align}
	&\ce{S} + \ce{E} \stackrel[c_1]{c_2}{\leftrightharpoons} \ce{S}:\ce{E}\stackrel{c_3}{\rightarrow} \ce{P} + \ce{E} \nonumber\\
	&\ce{S}:\ce{E} \stackrel{c_3}{\rightarrow} \ce{P} + \ce{E} \label{sys:michaelis_menten}\\
	&\ce{P} \stackrel{c_4}{\rightarrow} \ce{S} \nonumber
\end{align}
The reaction network features a two-dimensional state space. Accordingly, we bound the means and variances of the molecular counts of the product \ce{P} and substrate \ce{S}. The results are illustrated in Figure \ref{fig:michaelis_menten}. For the sake of completeness, Figure \ref{fig:michaelis_menten} also features a comparison with the bounds obtained by Dowdy and Barton's \cite{dowdy2018dynamic} method. The proposed method reproduces essentially the exact solution for the means while providing reasonably tight bounds for the variances. Further, it again outperforms its predecessor, especially for bounds on the variances.  
\begin{figure}
	\centering
	\begin{subfigure}{0.49\textwidth}
		\includegraphics[width=\textwidth]{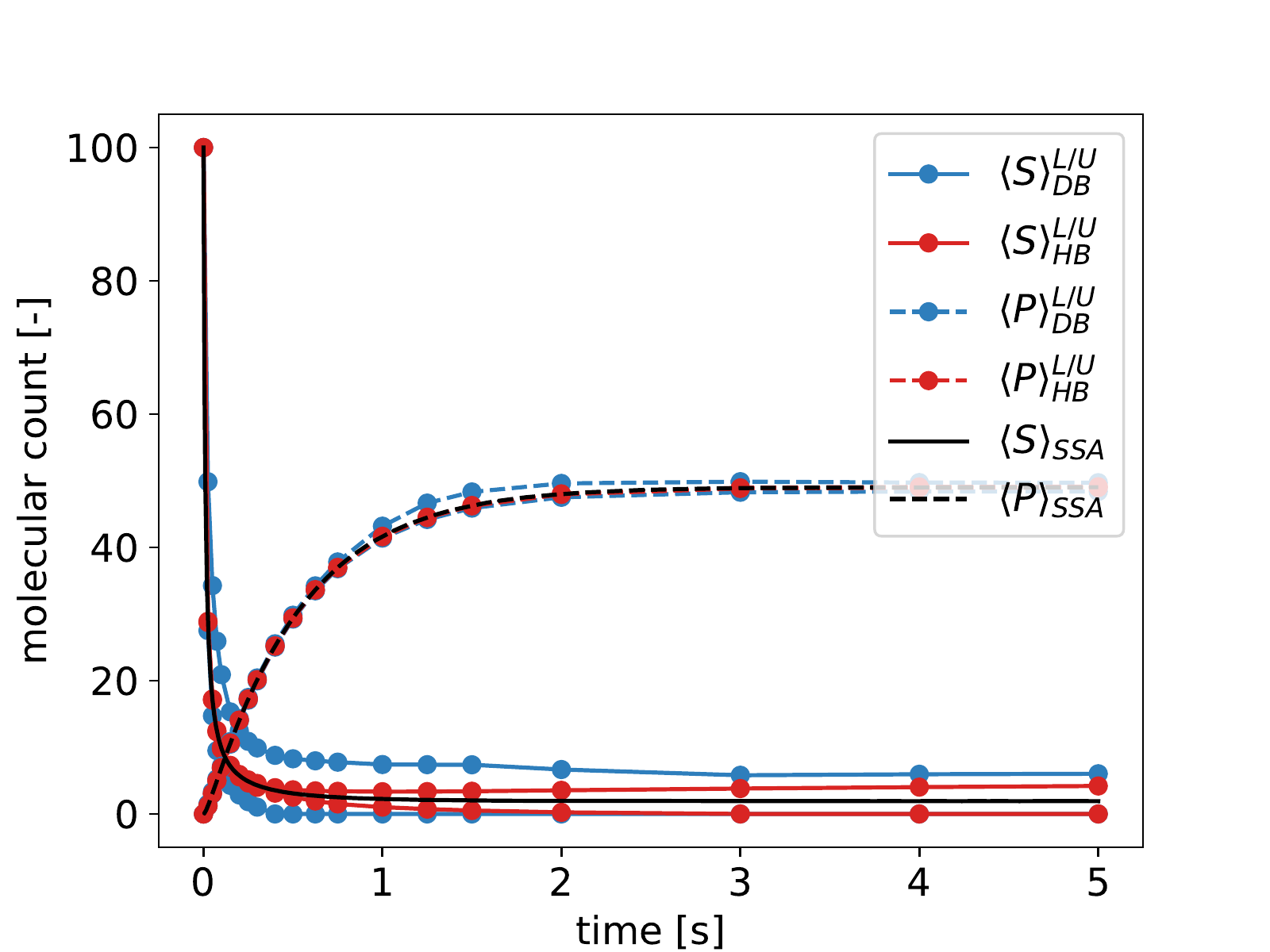}
		\caption{}
	\end{subfigure}
	\begin{subfigure}{0.49\textwidth}
		\includegraphics[width=\textwidth]{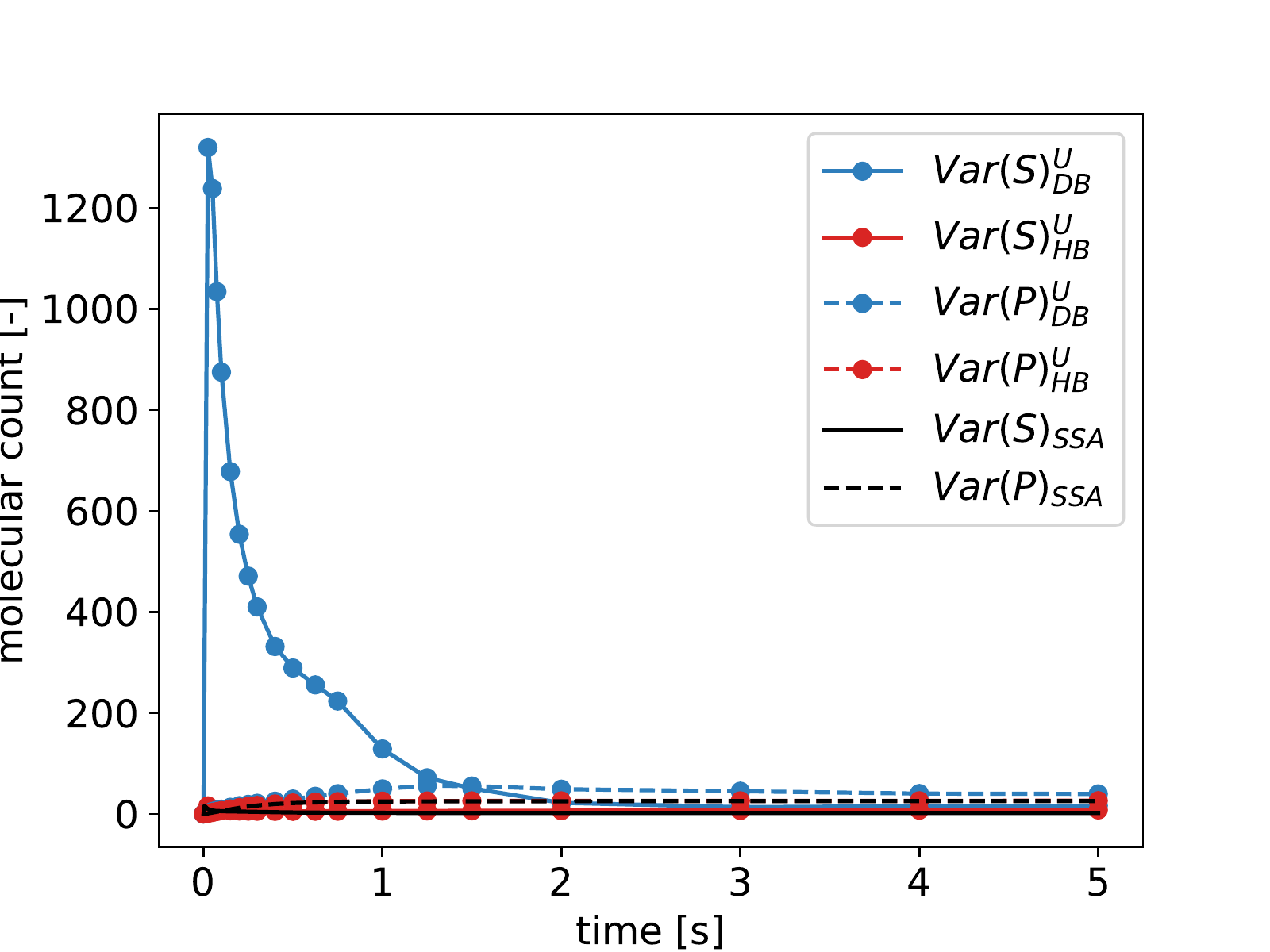}
		\caption{}
	\end{subfigure}
	\begin{subfigure}{0.49\textwidth}
		\includegraphics[width=\textwidth]{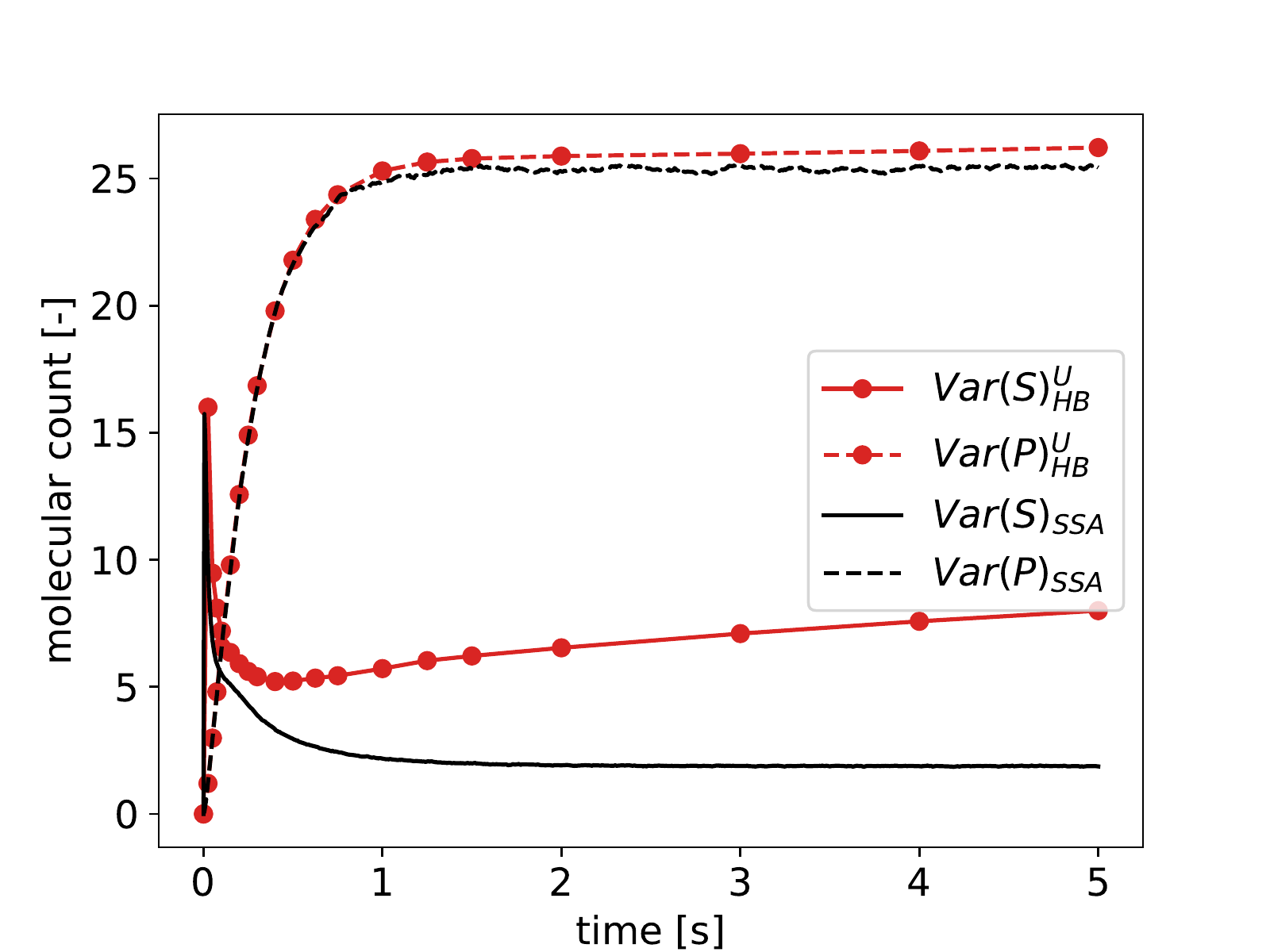}
		\caption{}
	\end{subfigure}
	\caption[Bounds on means and variances of the molecular counts of species \ce{S} and \ce{P} in the metabolic reaction network \eqref{sys:michaelis_menten}]{Bounds on means (a) and variances (b,c) of the molecular counts of species \ce{S} and \ce{P} in the metabolic reaction network \eqref{sys:michaelis_menten}; initial state: $x_{\ce{S},0} = x_{\ce{E},0} = 100$, $x_{\ce{P},0} = x_{\ce{S}:\ce{E},0}=0$; kinetic parameters: $\bm{c} = (1, 1, 1, 1) \, \si{\per \second}$; hierarchy parameters: $m = 4$, $n_{\mathsf{F}} =3$, $n_{\mathsf{T}} = 20$.} \label{fig:michaelis_menten}
\end{figure}

\subsubsection{Negative Feedback Biocircuit}
Many efforts of modern synthetic biology culminate in the design of biocircuits subject to stringent constraints on robustness and performance. Upon successful design, the implications of such tailored biocircuits are often far reaching, even addressing global challenges such as water pollution \cite{sinha2010reprogramming} and energy \cite{peralta2012microbial}. Accordingly, in recent years the use of systems theoretic techniques has received considerable attention to conceptualize, better understand and speed up the design process of biocircuits  \cite{del2016control}. In this context, \citet{sakurai2018optimization} demonstrated the utility of stationary moment bounds for the design of biocircuits subject to robustness constraints. Here, we demonstrate that the proposed methodology could enable an extension of their analysis to the dynamic case. 

We examine the negative feedback biocircuit illustrated in Figure \ref{sys:neg_feedback_biocircuit} studied by \citet{sakurai2018optimization}. The corresponding reaction network is given by 
\begin{align}
\begin{array}{l}
	\ce{DNA} \stackrel{c_1}{\rightarrow} \ce{DNA} + \ce{mRNA} \\
	\ce{mRNA} \stackrel{c_2}{\rightarrow} \emptyset \\
	\ce{mRNA} \stackrel{c_3}{\rightarrow} \ce{mRNA} + \ce{P} \\
	\ce{P} \stackrel{c_4}{\rightarrow} \emptyset \\
	\ce{P} + \ce{DNA}  \stackrel[c_6]{c_5}{\rightleftharpoons} \ce{P}:\ce{DNA} 
\end{array}. \label{eq:neg_feedback_biocircuit}
\end{align}
Figure \ref{fig:neg_feedback_biocircuit} illustrates the obtained bounds on means and variance of the molecular counts of the species \ce{mRNA} and \ce{P}. The bounds are of high quality and may provide useful information for robustness analysis as the noise level measured by the variance changes significantly over the time horizon until the steady-state value is reached.   
\begin{figure}
	\centering
	\includegraphics[scale=0.8]{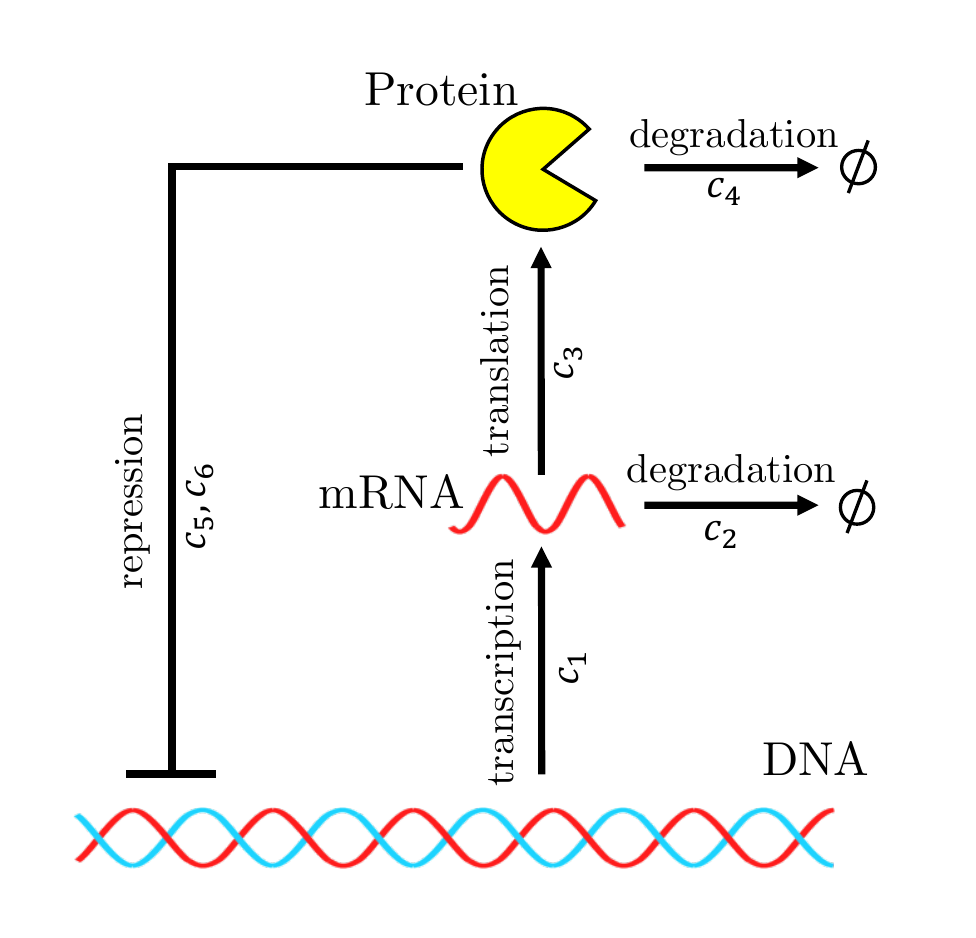}
	\caption{Negative feedback biocircuit from \citet{sakurai2018optimization}}\label{sys:neg_feedback_biocircuit}
\end{figure}
\begin{figure}
	\centering
	\begin{subfigure}{0.49\textwidth}
		\includegraphics[width=0.99\textwidth]{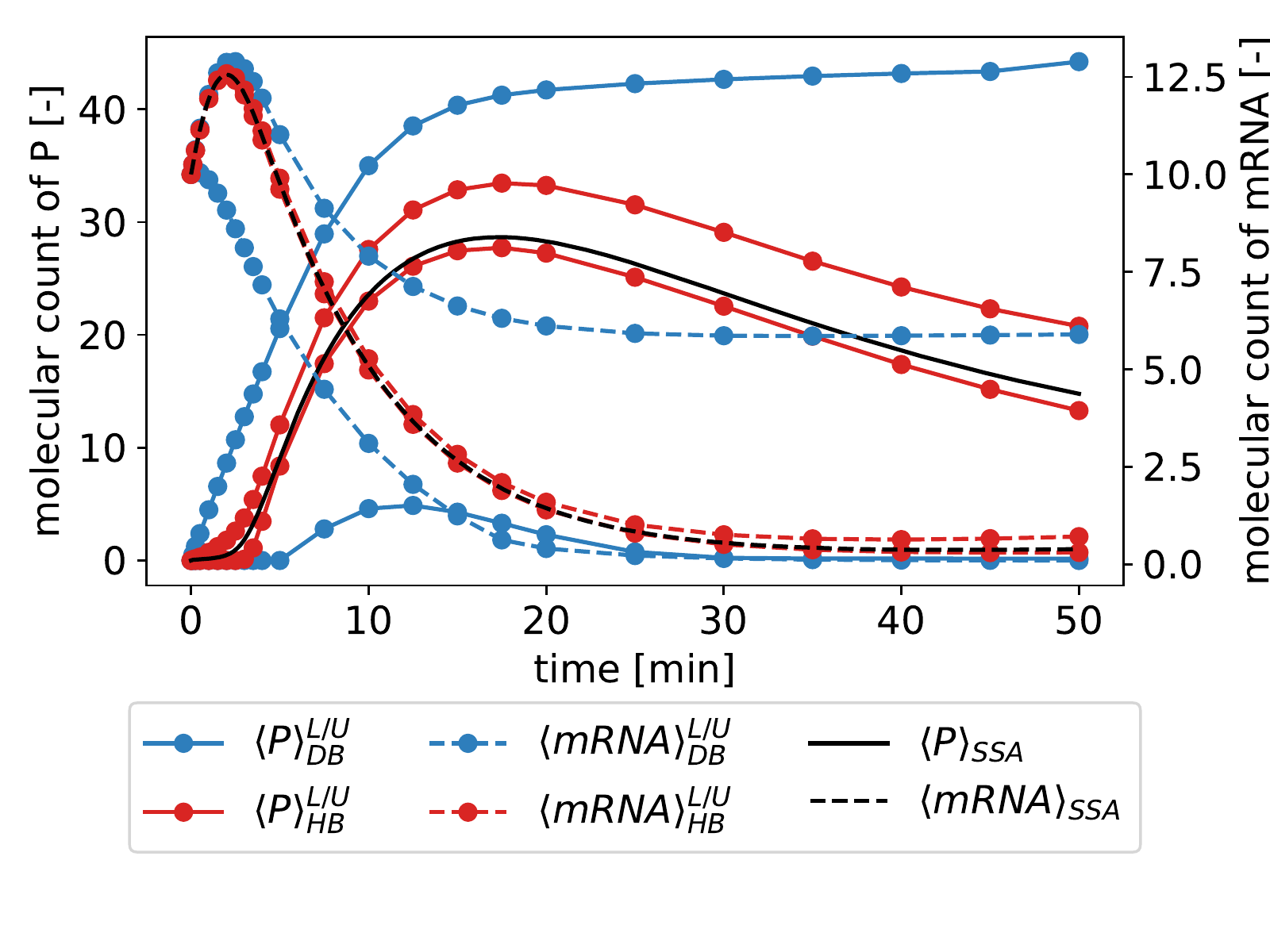}
		\caption{}
	\end{subfigure}
	\begin{subfigure}{0.49\textwidth}
		\includegraphics[width=0.99\textwidth]{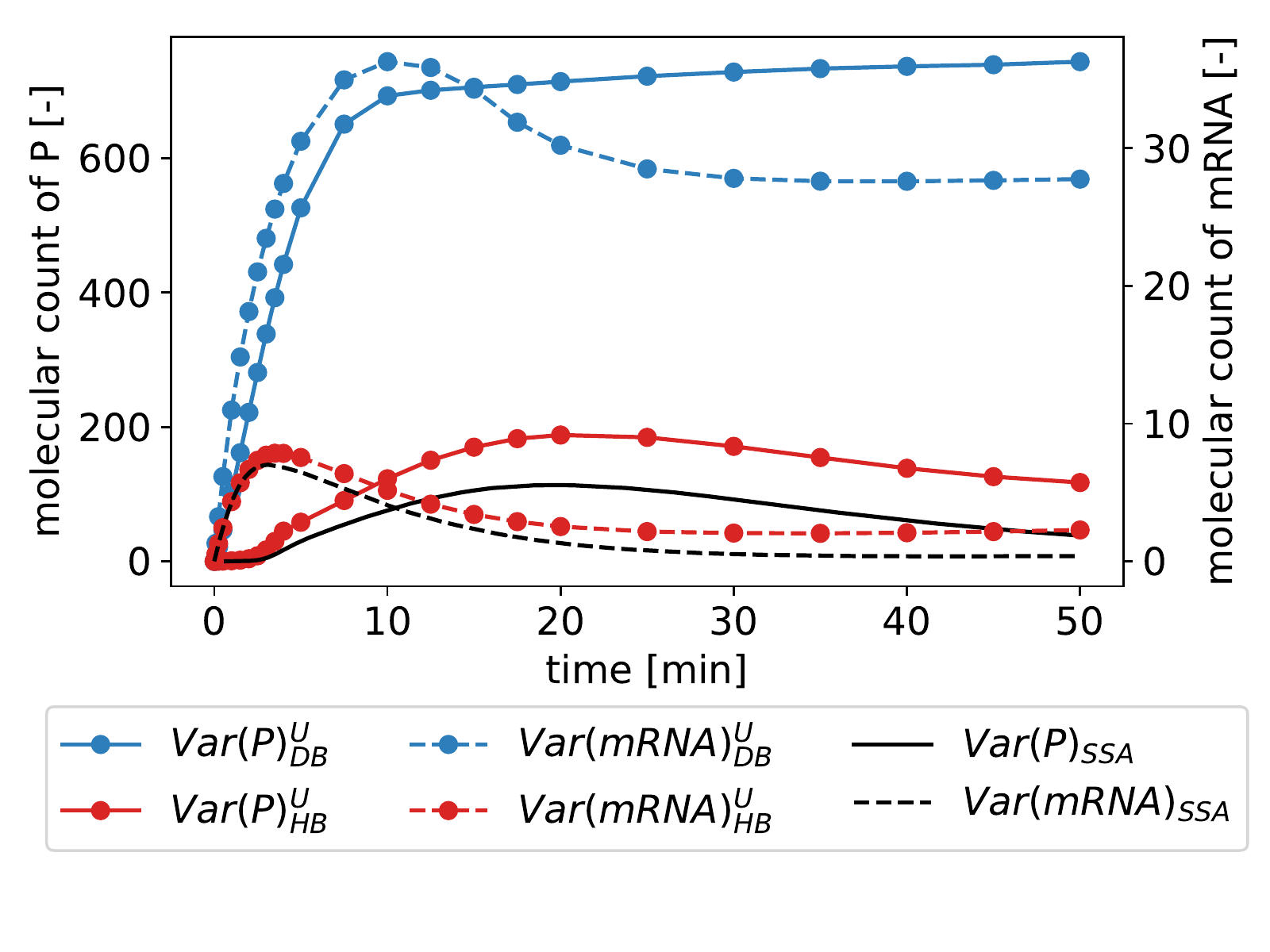}
		\caption{}
	\end{subfigure}
	\caption[Bounds on means and variances of the molecular counts of $\ce{mRNA}$ and $P$ in the negative feedback biocircuit illustrated in Figure \ref{sys:neg_feedback_biocircuit}]{Bounds on means (a) and variances (b) of the molecular counts of $\ce{mRNA}$ and $P$ in the negative feedback biocircuit illustrated in Figure \ref{sys:neg_feedback_biocircuit}; initial state: $x_{\ce{mRNA},0} = 10$, $x_{\ce{P},0} = x_{\ce{P}:\ce{DNA},0} = 0$, $x_{\ce{DNA},0} = 20$; kinetic parameters: $\bm{c} = (0.2, \ln(2)/5, 0.5, \ln(2)/20, 5, 1) \, \si{\per \minute}$; hierarchy parameters: $m = 4$, $n_{\mathsf{F}} = 3$, $n_{\mathsf{T}} = 20$.} \label{fig:neg_feedback_biocircuit}
\end{figure}

\subsubsection{Viral Infection}
As our last example we consider the following reaction network from \citet{srivastava2002stochastic} used to model for the intracellular kinetics of a virus:
\begin{align}
	\begin{array}{lll}
		\ce{G} \stackrel{c_1}{\rightarrow} \ce{T},\quad &\ce{T} \stackrel{c_2}{\rightarrow} \emptyset, \quad &\ce{T} \stackrel{c_3}{\rightarrow} \ce{T} + \ce{G} \\
		\ce{G} + \ce{S} \stackrel{c_4}{\rightarrow} \emptyset, \quad &\ce{T} \stackrel{c_5}{\rightarrow}  \ce{T} + \ce{S},  \quad &\ce{S} \stackrel{c_6}{\rightarrow} \emptyset 
	\end{array}. \label{sys:viral_infection}
\end{align}
In this network, \ce{S} represents the viral structural protein while \ce{T} and \ce{G} represent the viral nucleic acids categorized as template and genomic, respectively. 

Studying a system undergoing the reactions as described by the above network with sampling based approaches is computationally expensive. This is due to two distinct reasons. On the one hand, the state space is infinite. On the other hand, the molecular counts of \ce{S}, \ce{T} and \ce{G} vary over several orders of magnitude so that they evolve at different time scales and noise levels. This characteristic is found in many biochemical reaction networks and has motivated the development of hybrid simulation techniques; see for example \citet{haseltine2002approximate}. Such hybrid approaches combine the use of deterministic models for the dynamics of species present at large counts with stochastic models for rare events and species present at low counts. Although such methods accelerate simulation substantially, they generally introduce a range of assumptions accompanied by an unknown error. The methodology presented in this article provides a framework to quantify this error. Figure \ref{fig:viral_infection} shows the bounds obtained for the mean molecular counts of all three species. Although the bounds are not tight, we argue that they may be informative enough to assess whether approximate solutions are reasonable. This in stark contrast to Dowdy and Barton's method \cite{dowdy2018dynamic} which in this case provides extremely loose bounds that could not be substantially improved due to numerical difficulties and prohibitive computational cost at high truncation orders. 

\begin{figure}
	\centering
	\begin{subfigure}{0.49\textwidth}
		\includegraphics[width=\textwidth]{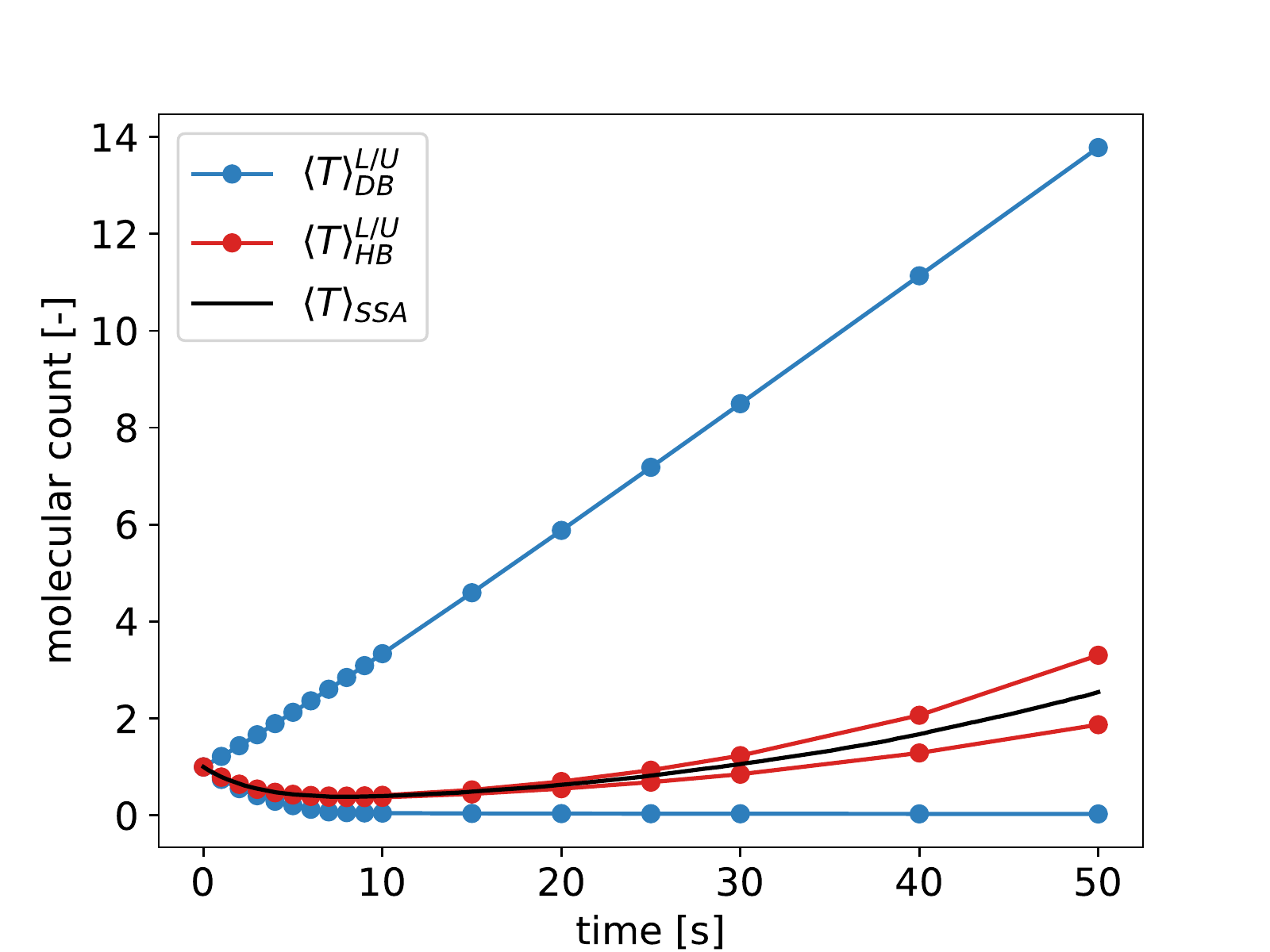}
		\caption{}
	\end{subfigure}
	\begin{subfigure}{0.49\textwidth}
		\includegraphics[width=\textwidth]{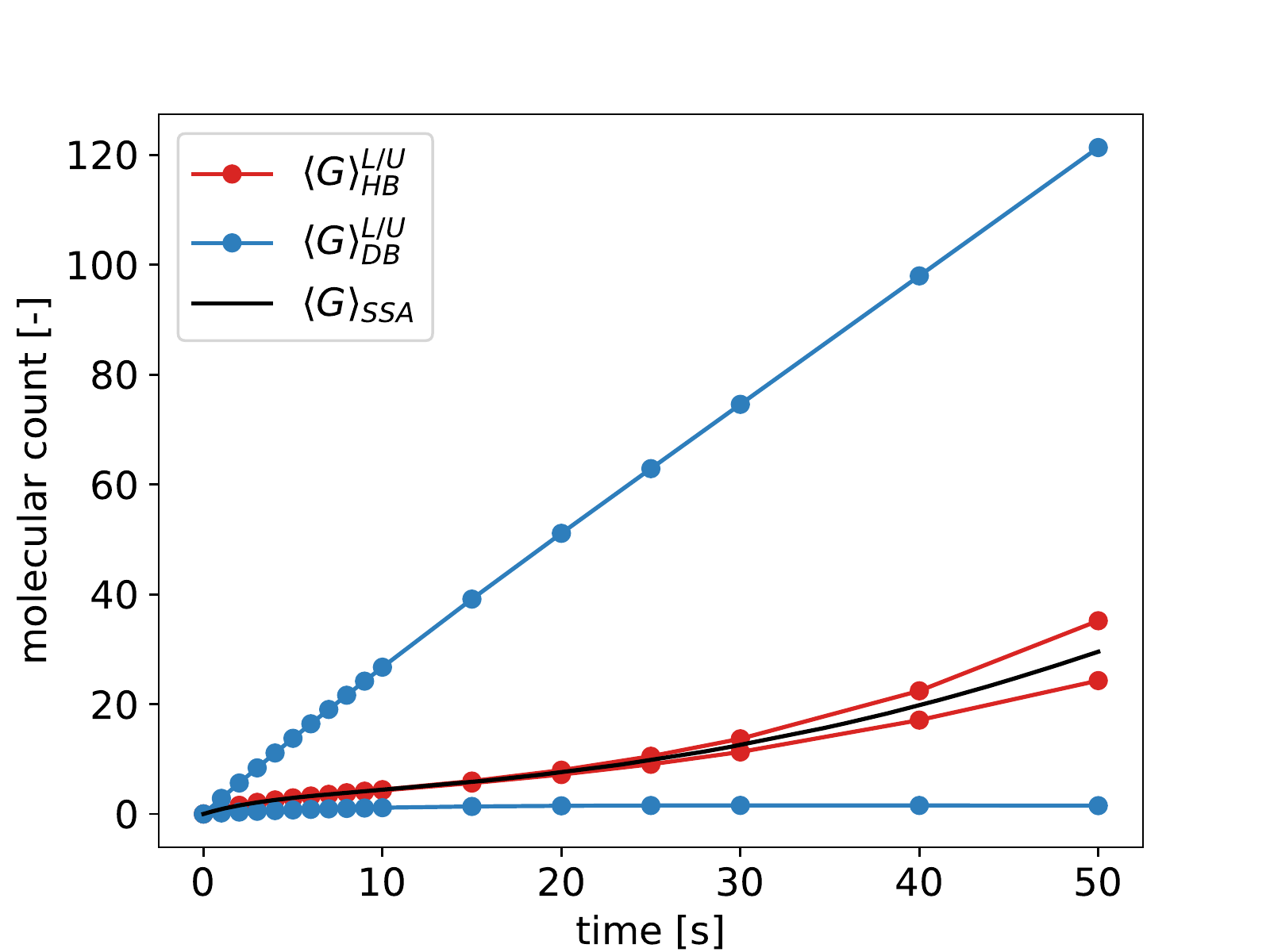}
		\caption{}
	\end{subfigure}
	\begin{subfigure}{0.49\textwidth}
		\includegraphics[width=\textwidth]{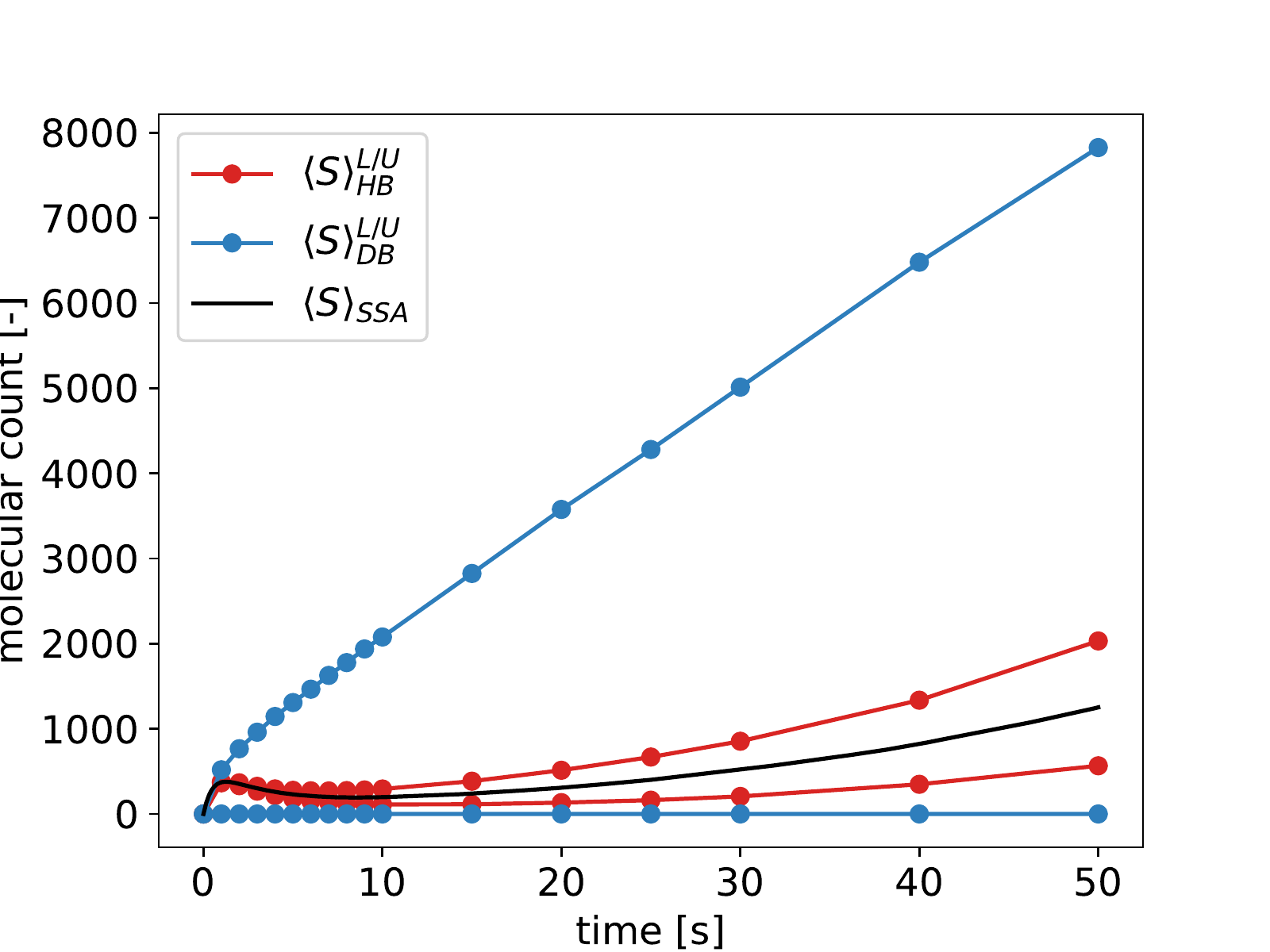}
		\caption{}
	\end{subfigure}
	\caption[Bounds on mean molecular counts of the species \ce{Template} (a), \ce{Genome} (b) and \ce{Struct} (c) in the viral infection network \eqref{sys:viral_infection}]{Bounds on mean molecular counts of \ce{Template} (a), \ce{Genome} (b) and \ce{Struct} (c) species in the viral infection network \eqref{sys:viral_infection}; initial state: $x_{\ce{T},0} = 1$, $x_{\ce{G},0} = x_{\ce{S},0}=0$;	kinetic parameters: $\bm{c} = (0.025, 0.25, 1, \SI{7.5e-6}, \SI{1e3}, 1.99) \, \si{\per \day}$; hierarchy parameters: $m=4$, $n_{\mathsf{F}}=3$, $n_{\mathsf{T}} = 20$.} \label{fig:viral_infection}
\end{figure}

\section{Bound Tightening Mechanisms}\label{sec:bounding_mechanisms}
In this section, we briefly assess the effect of the different bound tightening mechanisms provided by the proposed bounding hierarchy. We conduct this empirical analysis on the basis of the birth-death process \eqref{sys:birth_death}. Furthermore, we restrict our considerations here to studying the effect of increasing the truncation order $m$, the hierarchy level $n_I$ and the number of time points $n_{\mathsf{T}}$ used to discretize the horizon; throughout, we only use the constant test function ($n_{\mathsf{F}} = 1$). 

Figure \ref{fig:isolated_tightening} shows the effect of isolated changes in the different hierarchy parameters on the bounds obtained for the mean molecular count and its variance. The results indicate that all bound tightening mechanisms, when used in isolation, appear to suffer from diminishing returns, eventually causing the bounds to stall. Moreover, solely increasing the truncation order $m$ appears insufficient to provide informative bounds over a long time horizon in this example; increasing either the number of time points or the hierarchy level $n_I$ are significantly more effective in comparison.

Figure \ref{fig:joint_tightening} shows the effect of joint changes in the considered hierarchy parameters on the tightness of bounds on the mean molecular count. The figure indicates that jointly changing the hierarchy parameters effectively mitigates stalling of the bounds in this example such that significantly tighter bounds are obtained overall. While the general trends illustrated by Figures \ref{fig:isolated_tightening} and \ref{fig:joint_tightening} align well with our experiences for a range of other examples, we wish to emphasize that it is in general hard to predict which combination of hierarchy parameters provides the best trade-off between computational cost and bound quality; when the choice of test functions is added to the equation, the situation becomes even more complicated. Moreover, as Figure \ref{fig:projections} illustrates, the feasible region of the bounding SDPs shrinks anisotropically and, more importantly, with different intensity along different directions for different bound tightening mechanisms. As a consequence, the optimal choice of the hierarchy parameters is in general not only dependent on the system under investigation but also on the statistical quantity to be bounded. 

In summary, the results presented in this section underline the value of the additional bound tightening mechanisms offered by the proposed hierarchy; however, they also emphasize the need for better guidelines to enable an effective use of the tightening mechanisms in practice. 

\begin{figure*}
    \centering
	\begin{subfigure}{0.4\textwidth}
		\includegraphics[width=1.0\textwidth]{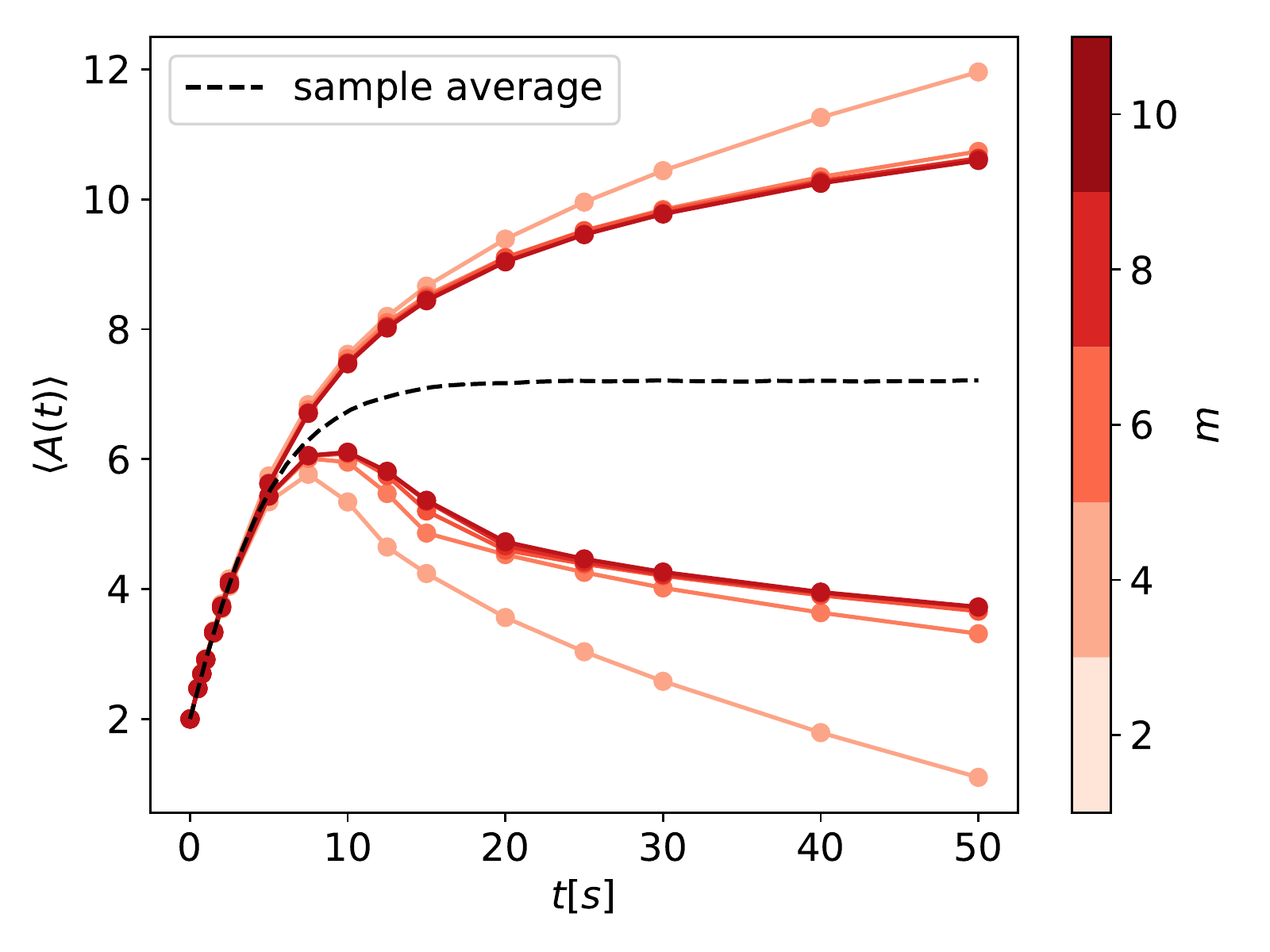}
		\caption{$n_{I} = 2$, $n_{\mathsf{T}} = 2$}
	\end{subfigure}
	\begin{subfigure}{0.4\textwidth}
		\includegraphics[width=1.0\textwidth]{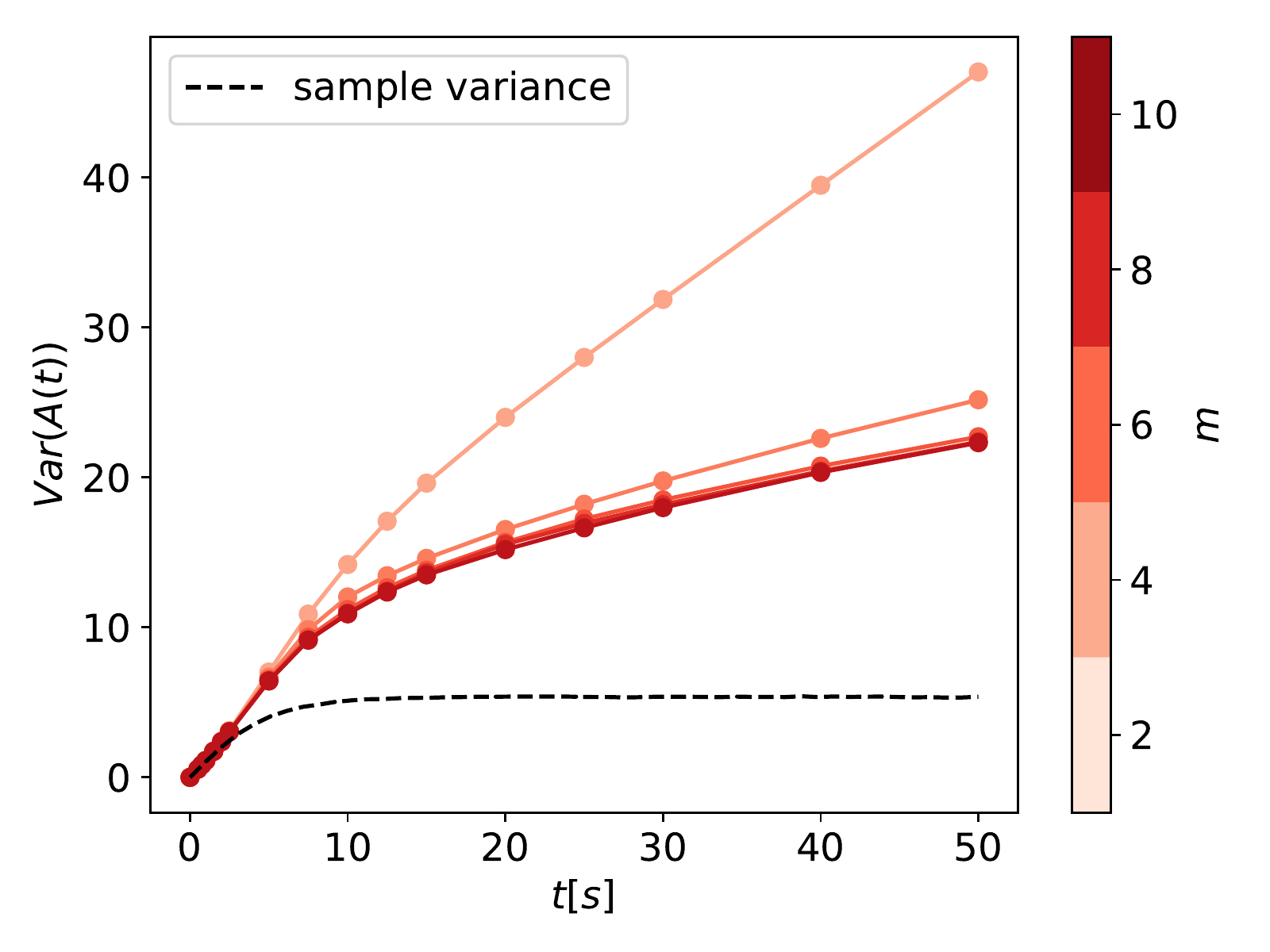}
		\caption{$n_{I} = 2$, $n_{\mathsf{T}} = 2$}
	\end{subfigure}
	\begin{subfigure}{0.4\textwidth}
		\includegraphics[width=1.0\textwidth]{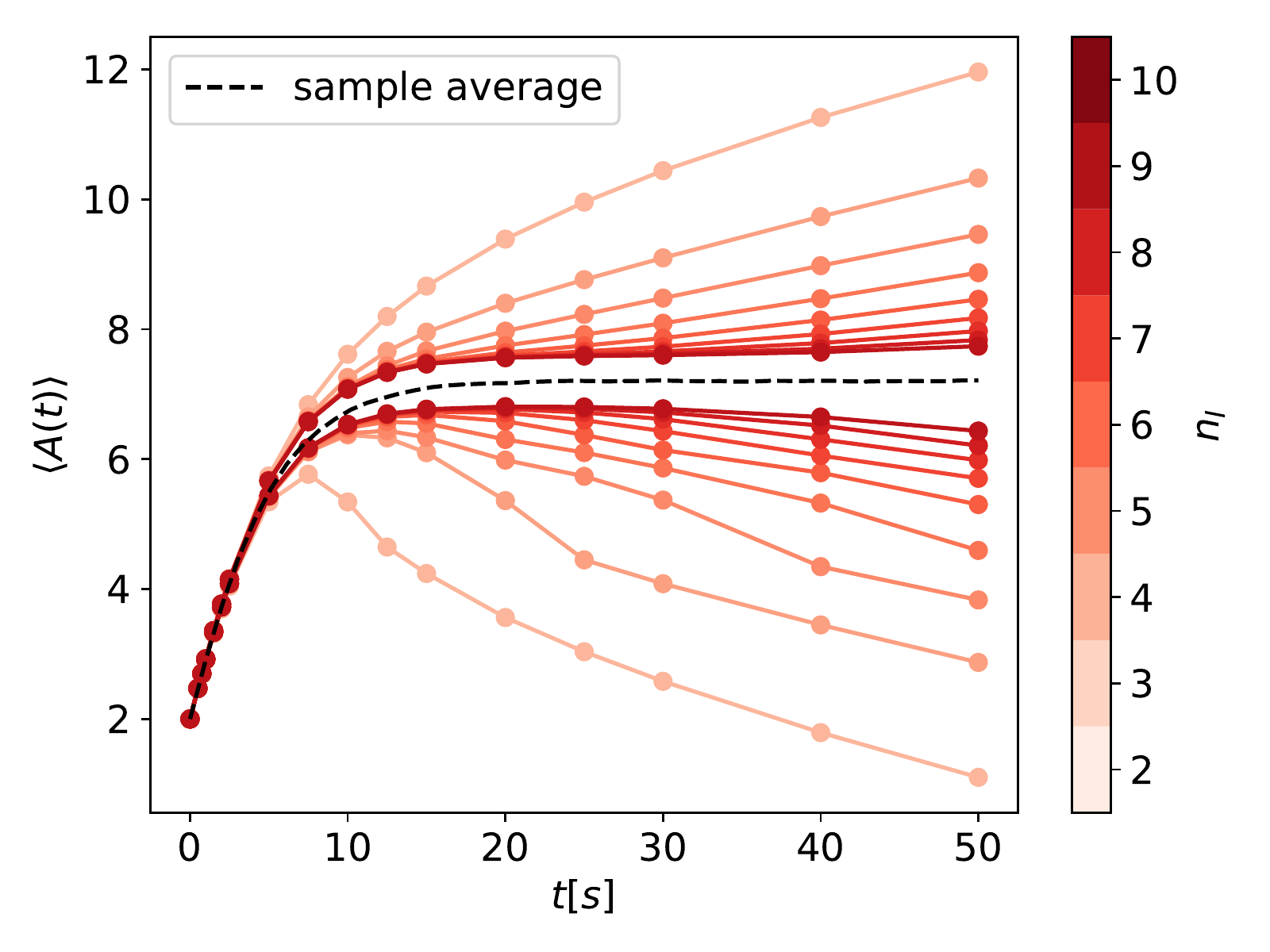}
		\caption{$m = 2$, $n_{\mathsf{T}} = 2$}
	\end{subfigure}
	\begin{subfigure}{0.4\textwidth}
		\includegraphics[width=1.0\textwidth]{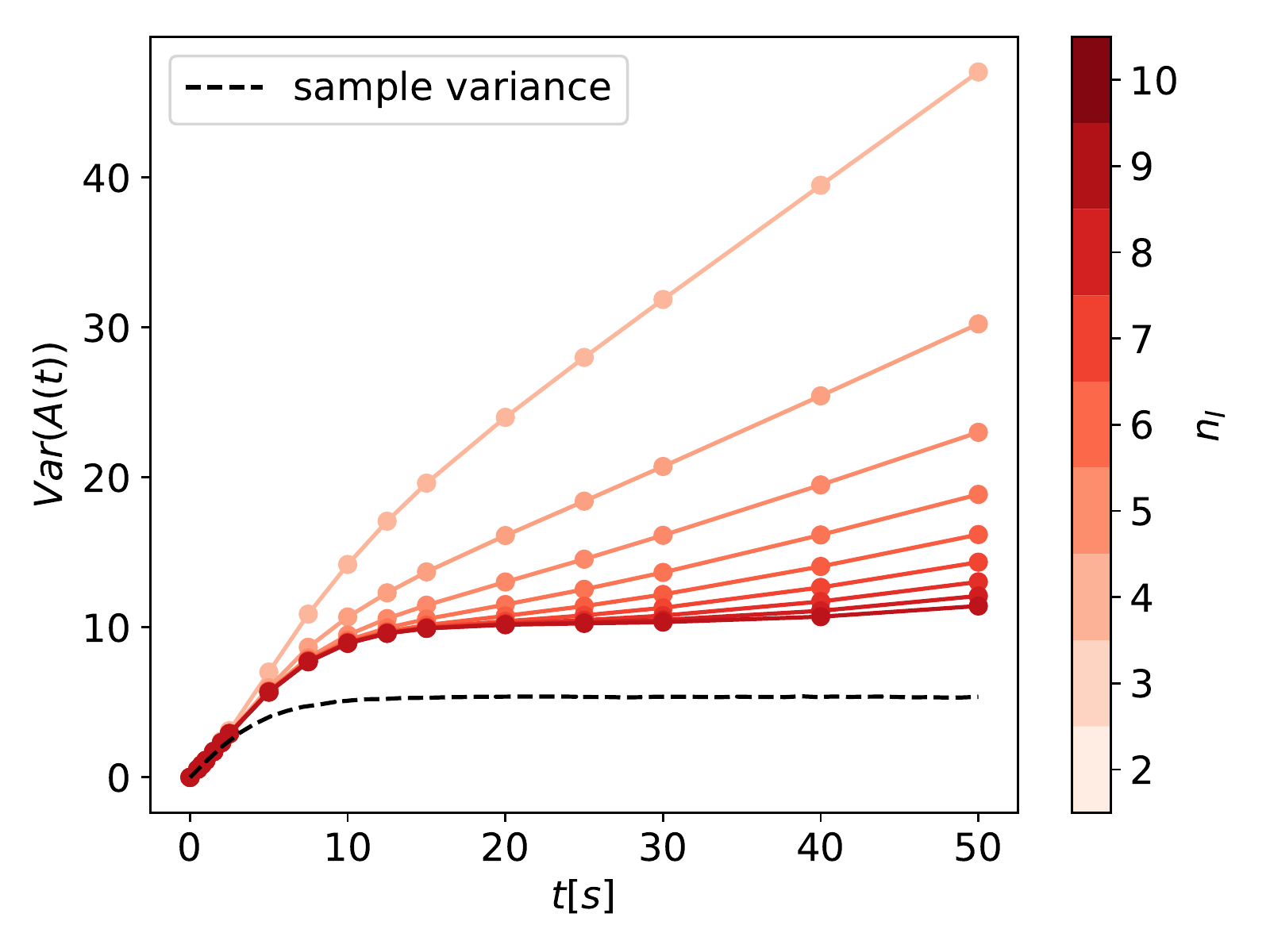}
			\caption{$m = 2$, $n_{\mathsf{T}} = 2$}
	\end{subfigure}
	\begin{subfigure}{0.4\textwidth}
		\includegraphics[width=1.0\textwidth]{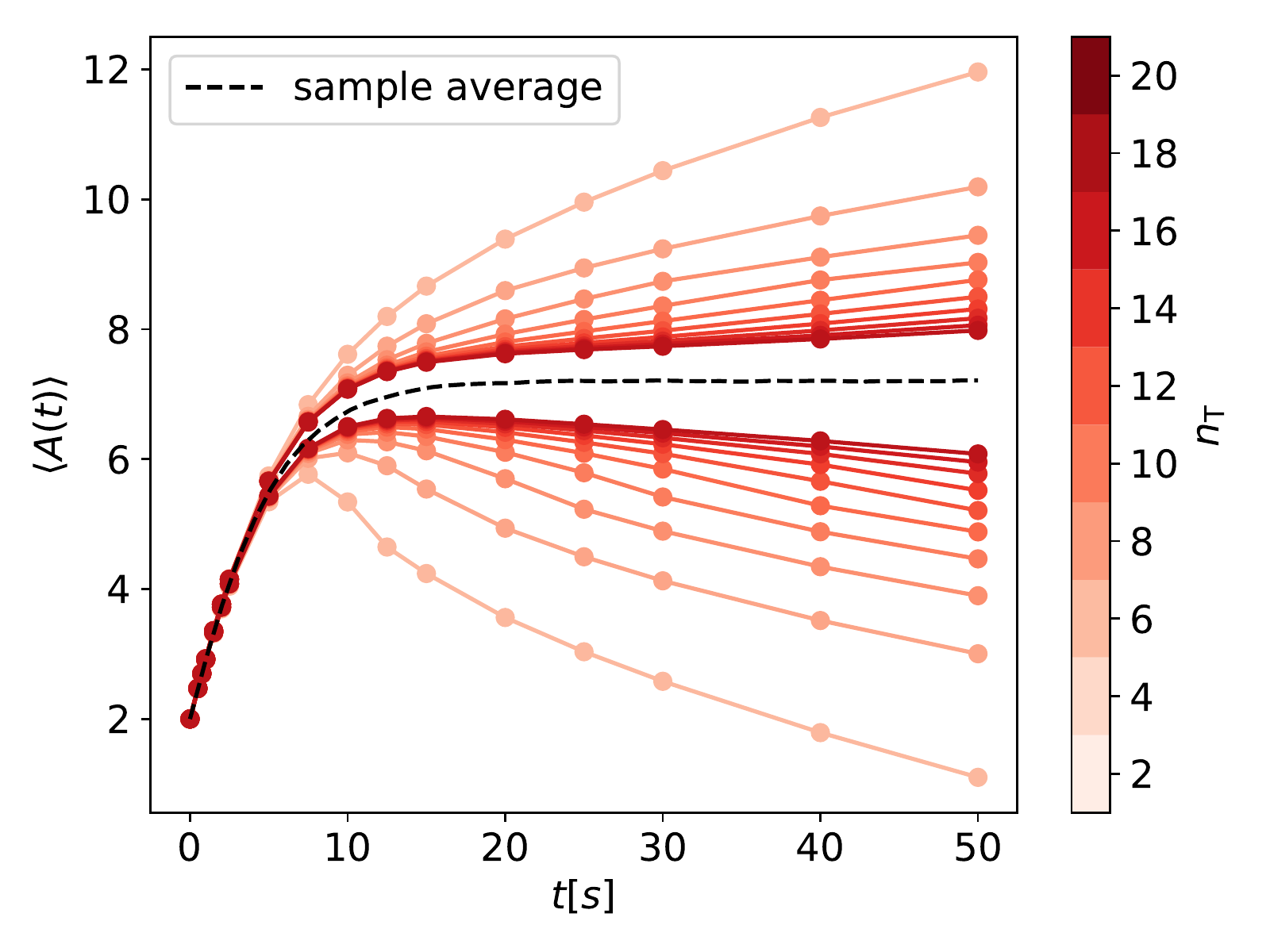}
		\caption{$m=2$, $n_{I} = 2$}
	\end{subfigure}
	\begin{subfigure}{0.4\textwidth}
		\includegraphics[width=1.0\textwidth]{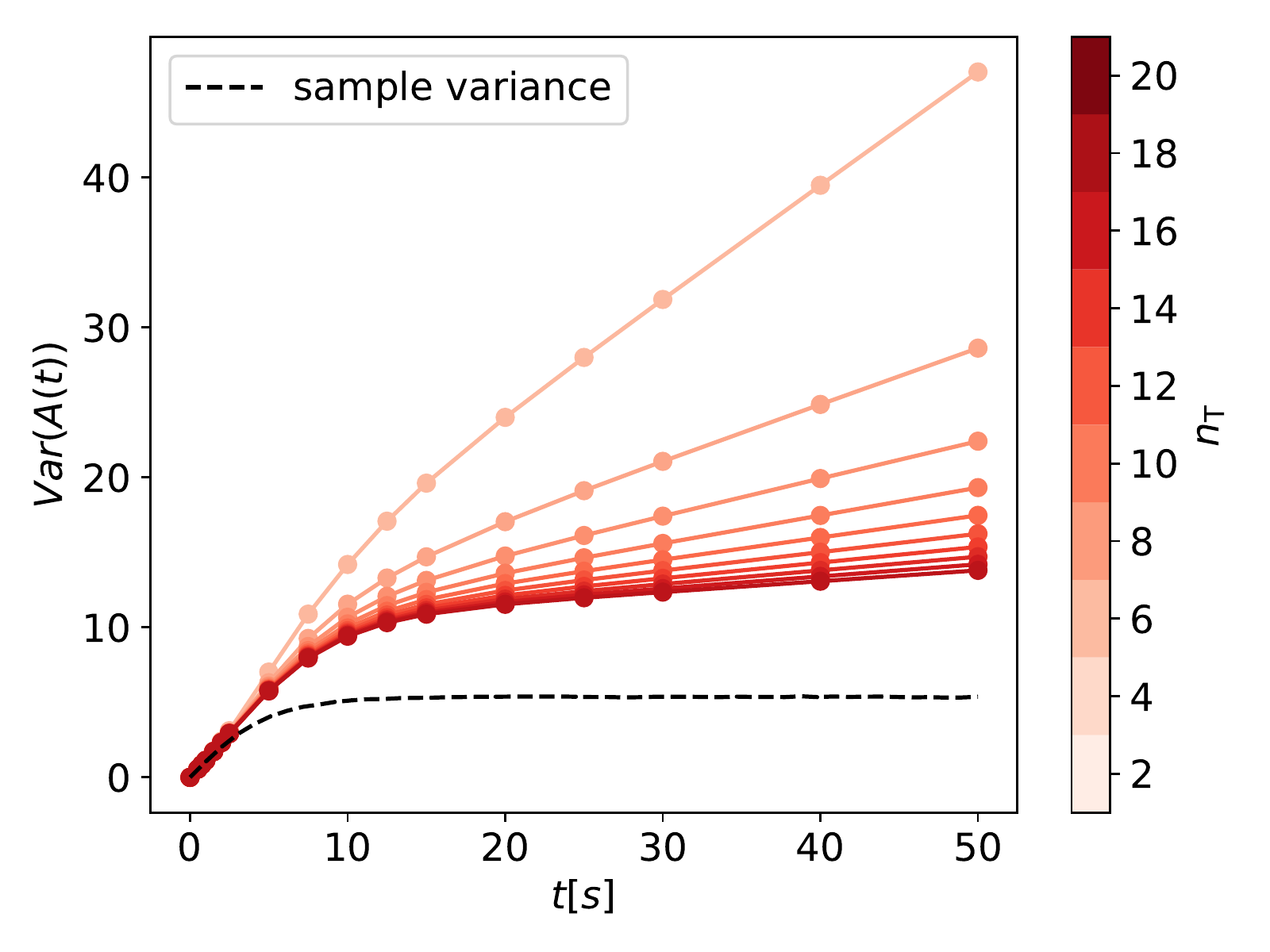}
		\caption{$m=2$, $n_{I} = 2$}
	\end{subfigure}
    \caption{Bounds on the trajectories of the mean molecular count and variance of the birth-death process \eqref{sys:birth_death} for increasing $m$ (a,b), $n_I$ (c,d), and $n_{\mathsf{T}}$ (e,f) compared against the empiric sample mean and variance generated with Gillespie's SSA. In each figure only one parameter is varied while the others are held constant at the level indicated in the subcaptions.}\label{fig:isolated_tightening}
\end{figure*}

\begin{figure*}
    \centering
    \begin{subfigure}{0.49\textwidth}
        \centering
        \includegraphics[width=1.0\textwidth]{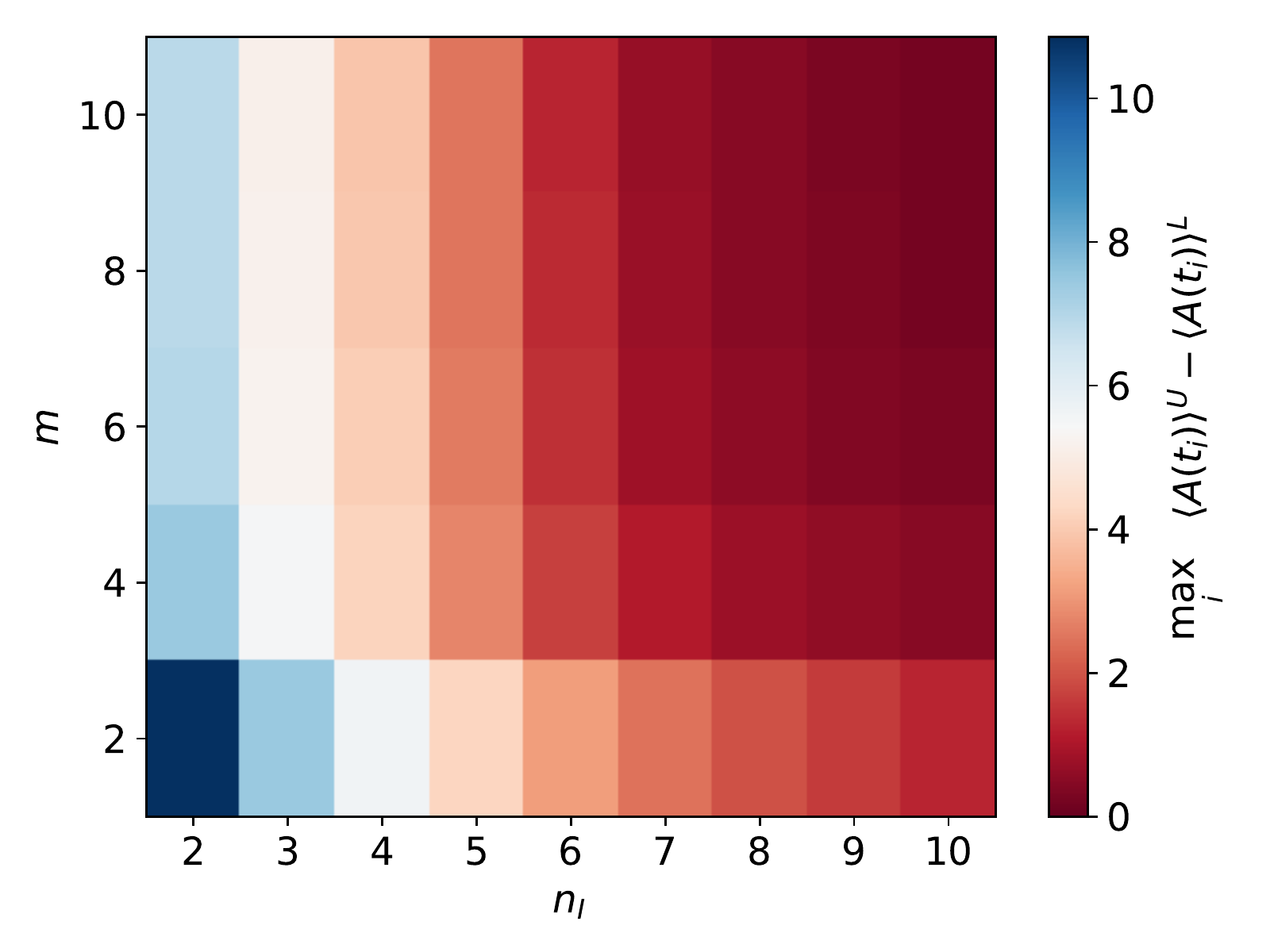}
        \caption{$n_{\mathsf{T}}=2$}    
    \end{subfigure}
    \begin{subfigure}{0.49\textwidth}
        \centering
        \includegraphics[width=1.0\textwidth]{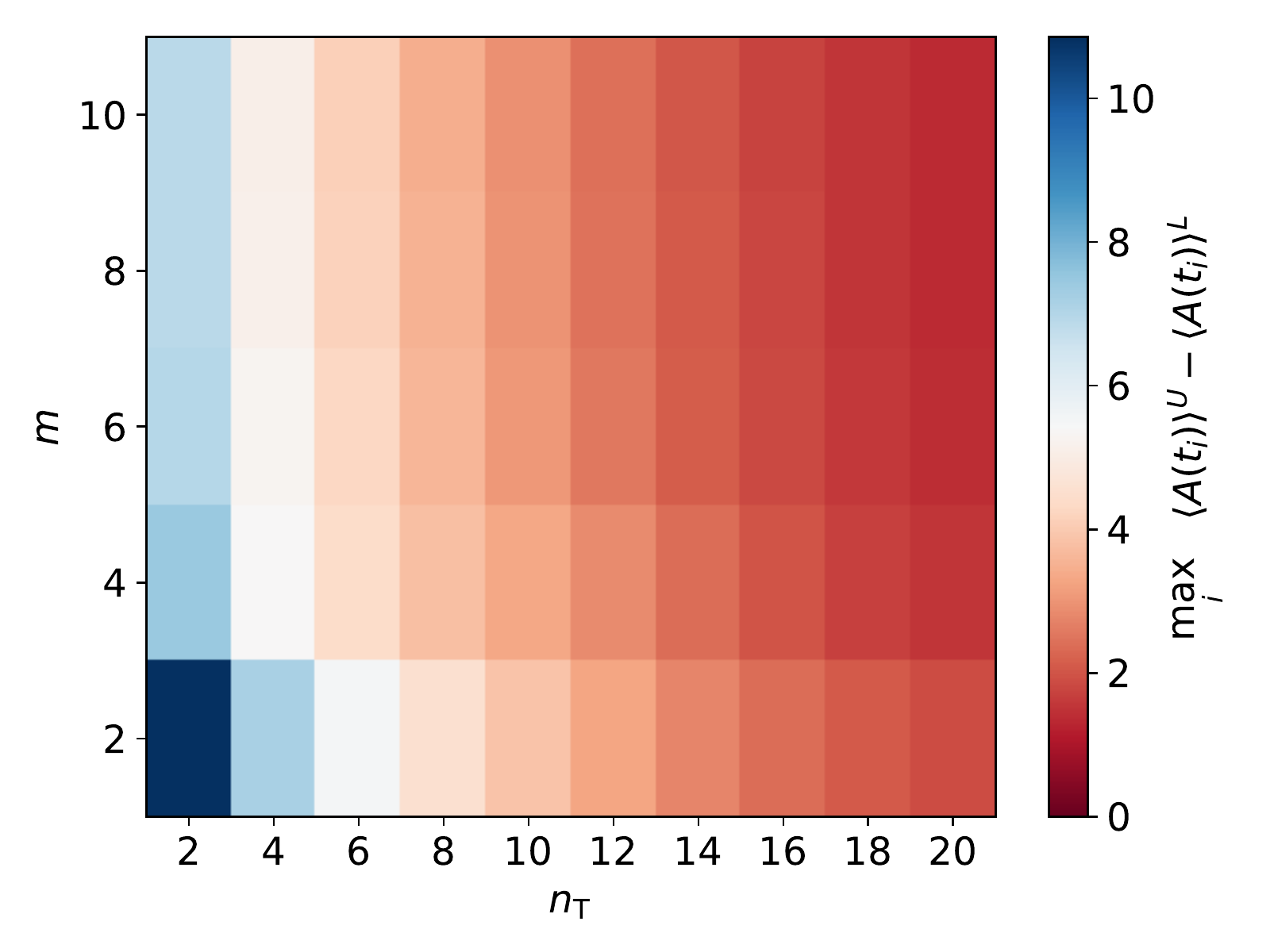}
        \caption{$n = 2$}    
    \end{subfigure}
    \begin{subfigure}{0.49\textwidth}
        \centering
        \includegraphics[width=1.0\textwidth]{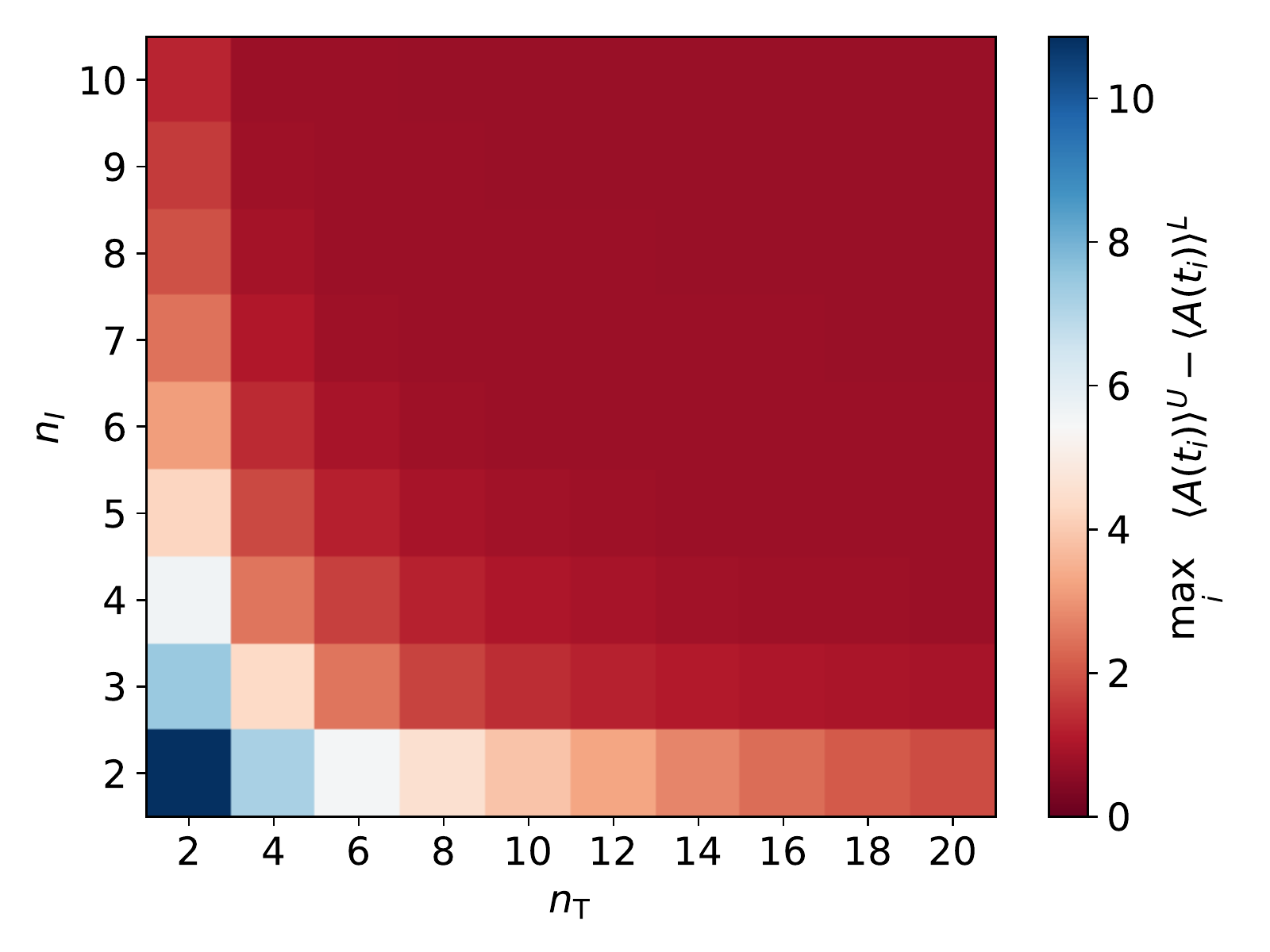}
        \caption{$m = 2$}    
    \end{subfigure}
    \caption[Effect of joint changes in hierarchy parameters $m$, $n_I$ and $n_{\mathsf{T}}$ on tightness of bounds]{Maximum gap between upper and lower bounds on mean molecular counts among the time points probed along the time horizon for joint changes in $m$ and $n_I$ (a), $m$ and $n_{\mathsf{T}}$ (b), and $n_I$ and $n_{\mathsf{T}}$ (c).} \label{fig:joint_tightening}
\end{figure*}

\begin{figure*}
    \centering
    \begin{subfigure}{0.49\textwidth}
        \centering
        \includegraphics[width=1.0\textwidth]{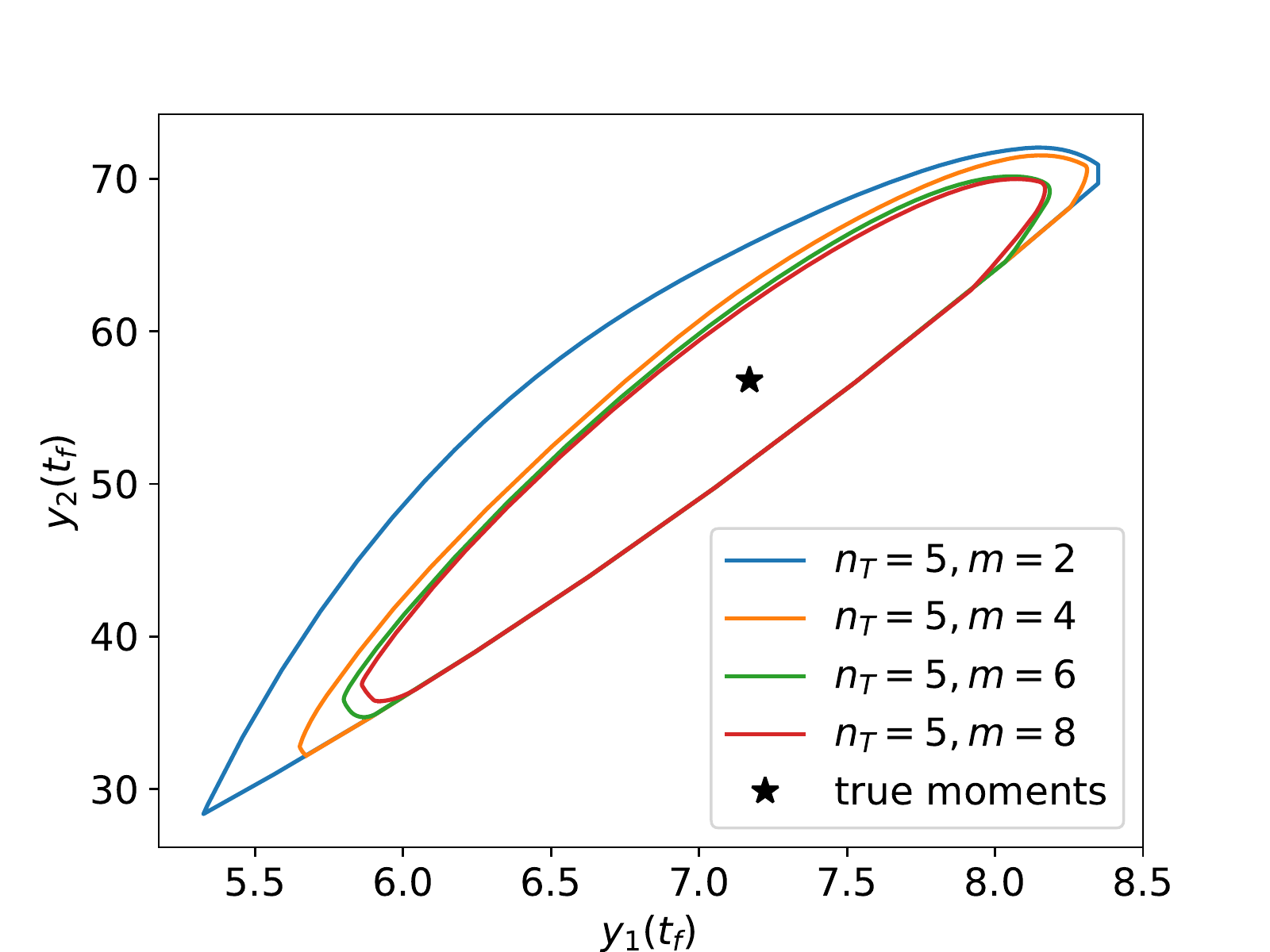}
        \caption{}    
    \end{subfigure}
    \begin{subfigure}{0.49\textwidth}
        \centering
        \includegraphics[width=1.0\textwidth]{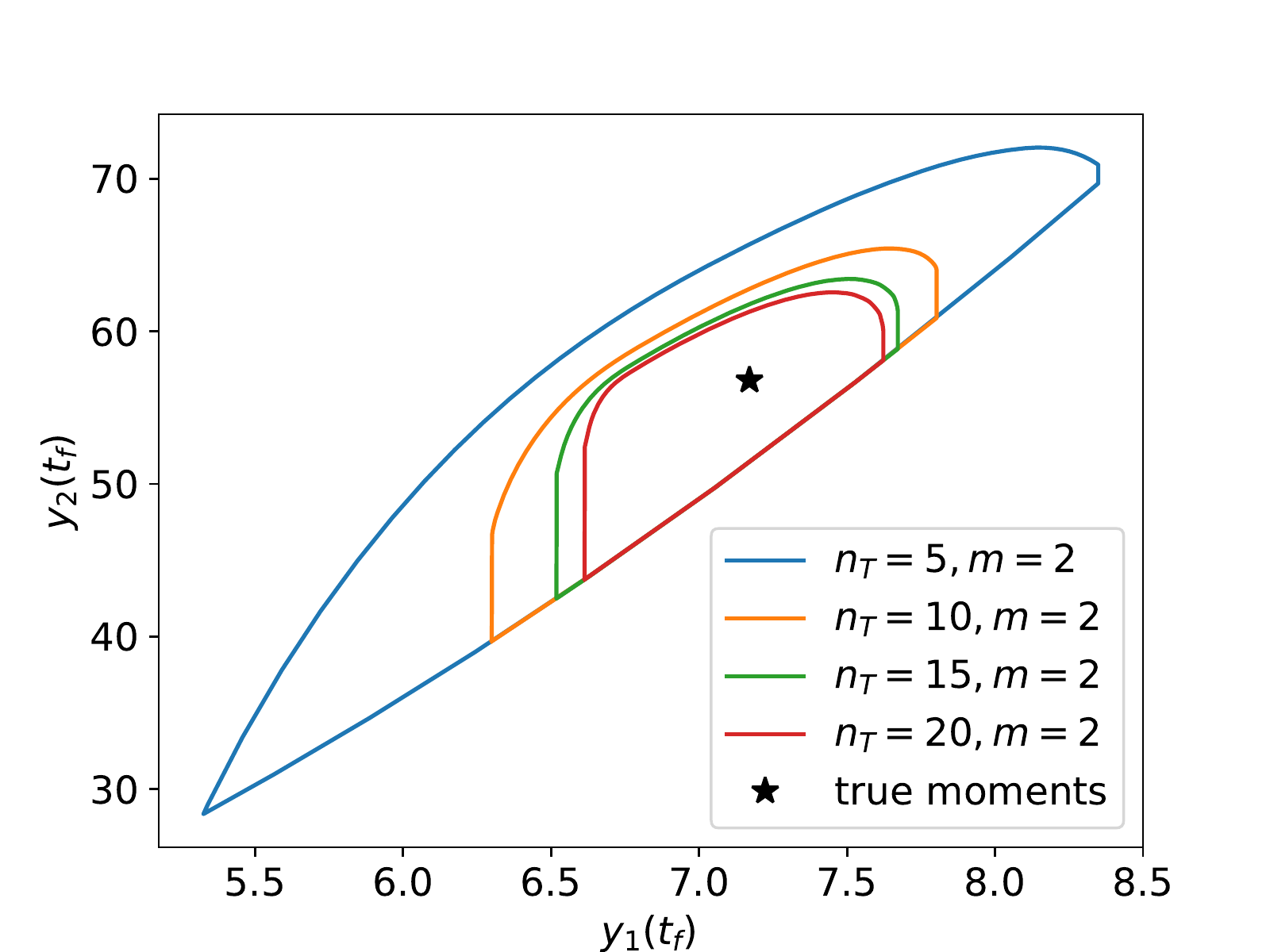}
        \caption{}    
    \end{subfigure}
    \begin{subfigure}{0.49\textwidth}
        \centering
        \includegraphics[width=1.0\textwidth]{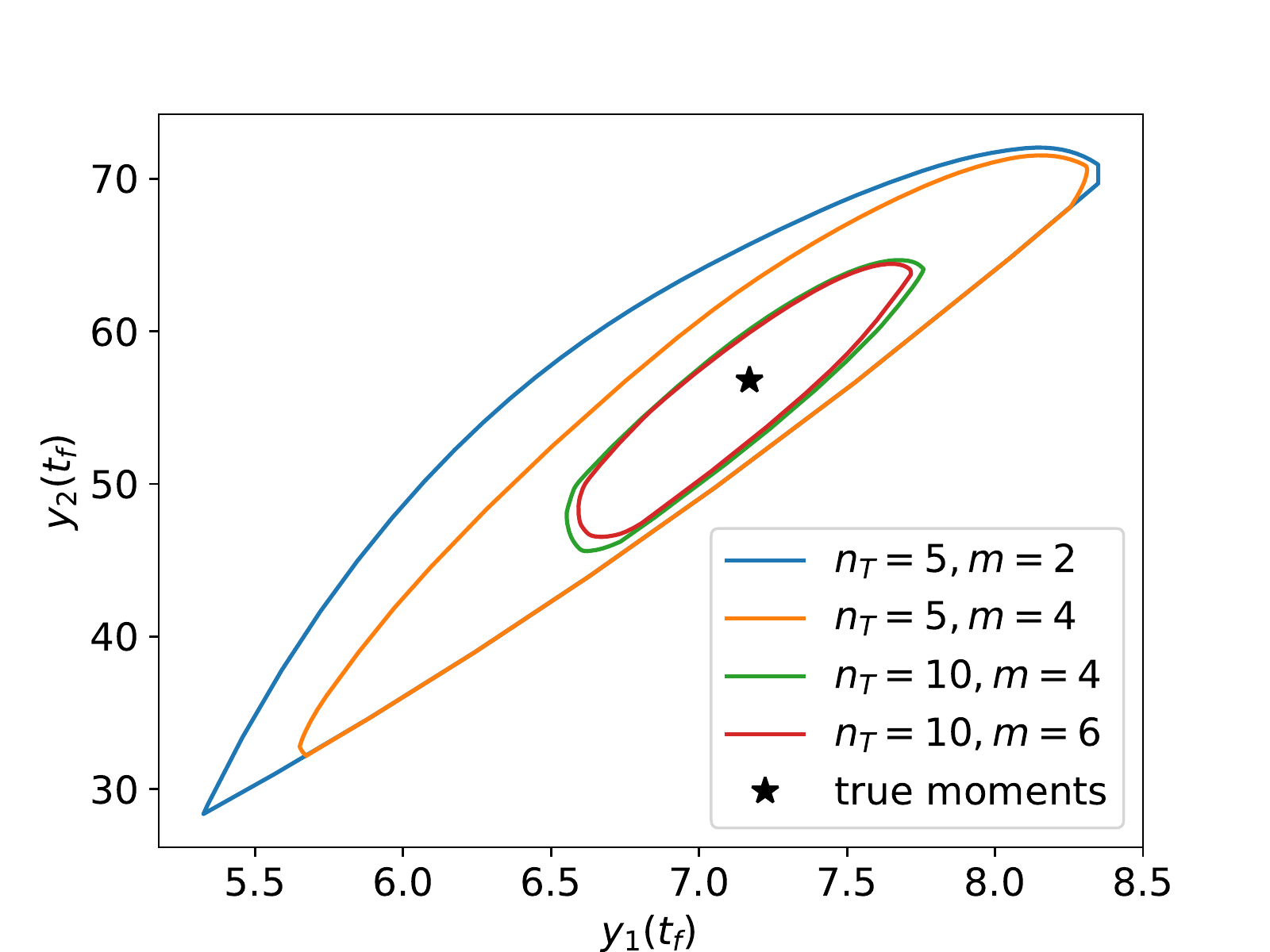}
        \caption{}    
    \end{subfigure}
    \caption{Projection of the feasible set of \eqref{eq:SDP} corresponding to the birth-death process \eqref{sys:birth_death} for (a) increasing truncation order, (b) increasing number of time points $n_{\mathsf{T}}$, and (c) jointly increasing truncation order and number of time points. All projections are obtained for $n_{I} = 2$ and $t_f = 20$.}
    \label{fig:projections}
\end{figure*}

\section{Conclusion}\label{sec:conclusion}
\subsection{Summary}
We have extended the results of \citet{dowdy2018dynamic} by constructing a new hierarchy of convex necessary moment conditions for the moment trajectories of stochastic chemical systems described by the CME. Building on a discretization of the time domain of the problem, the conditions reflect temporal causality and regularity properties of the true moment trajectories. It is proved that the conditions give rise to a hierarchy of highly structured SDPs whose optimal values form a sequence of monotonically improving bounds on the true moment trajectories. Furthermore, the conditions provide new mechanisms to tighten the obtained bounds when compared to the original conditions proposed by \citet{dowdy2018dynamic}. These tightening mechanisms are often a substantially more scalable alternative to the primary tightening mechanism of increasing the truncation order in Dowdy and Barton's approach \cite{dowdy2018dynamic}; most notably, refining the time discretization results in linearly increasing problem sizes, independent of the state space dimension of the system. As an additional advantage, this bound tightening mechanism provides a way to sidestep the poor numerical conditioning of moment-based SDPs featuring high-order moments. Finally, it is demonstrated with several examples that the proposed hierarchy provides bounds that may indeed be useful in practice.

\subsection{Open Questions}\label{sec:questions} 
We close by stating some open questions motivated by our results.
\begin{enumerate} 
	\item In the presented case studies, we naively chose the time points at which the proposed necessary moment conditions were imposed as equidistant. Several results from numerical integration, perhaps most notably Gau{\ss} quadrature, suggest that this is likely not the optimal choice. It would be interesting to examine if and how results from numerical integration can inform improvements of this choice. 
	\item The choice of the hierarchy parameters in the proposed bounding scheme is crucial to achieve a good trade-off between bound quality and computational cost. As indicated by the discussion in Section \ref{sec:bounding_mechanisms}, however, the interplay between the bound tightening mechanisms associated with the different hierarchy parameters and their effect on the bound quality remains poorly understood. Accordingly, we believe that assessing the trade-offs offered by the different bound tightening mechanisms in greater detail and developing more rigorous guidelines on how to utilize them effectively constitutes an important step towards improving the practicality of the proposed method.
	
	\item The ideas discussed in Section \ref{sec:practicalities} constitute promising research avenues towards improving practicality of the proposed method. Specifically, there are several open questions pertaining to the concrete implementation of Algorithm \ref{alg:successive_oa} and the way Problem \eqref{eq:pwpOCP} can be used to inform an effective use of the different bound tightening mechanisms. Furthermore, the decomposable, weakly coupled structure of the bounding SDPs motivates other forms of exploitation than Algorithm \ref{alg:successive_oa}; in particular the use of distributed optimization techniques such as ADMM \cite{boyd2011distributed} or Schwarz-like approaches \cite{shin2020schwarz,na2020overlapping} appears promising. 

\end{enumerate} 

\section*{Supplementary Information}
The Supplementary Information to this article can be found at the end of this document.
\section*{References}
\bibliography{references}

\end{document}


\maketitle 
	In this document, we state detailed arguments for claims made in the main article. 
\section{Proof of Proposition 1}\label{app:prop_increment_conditions}
\begin{namedthm}{Proposition 1}
Let $0 \leq t_1 \leq t_2 < + \infty$ and $n_I \leq \floor{\frac{m}{q}}$. Further, consider test functions of the form $g_l(t) = \mathds{1}_{[t_1,t_2]}(t)(t_2-t)^l$. If $\bm{y}_{t_1}, \bm{y}_{t_2} \in \R^{n_L+n_H}$ and $\bm{z}_{g_l,t_2} \in \R^{n_L+n_H}$ satisfy 
		\begin{align}
		\bm{K}\left(g_l(t_2) \bm{y}_{t_2} - g_l(t_1)\bm{y}_{t_1} \right) = \bm{A} \bm{z}_{g_l,t_2} - l\bm{K}\bm{z}_{g_{l-1},t_2} \label{eq:taylor_refined_appendix}
		\end{align}
		for $l = 0,\dots,n_I$, then $\bm{y}_{t_1}, \bm{y}_{t_2}$ and $\bm{z}_{g_l,t_2}$ also satisfy 
		\begin{align}
		%
		y_{\bm{j},t_2} = \bm{c}_{l,\bm{j}}(t_1,t_2)^\top\bm{y}_{t_1} + \bm{d}_{l,\bm{j}}(t_1,t_2)^\top\bm{z}_{g_l,t_2} \label{eq:taylor_appendix}
		\end{align} 
		for $l=0,\dots,n_I$ and $\bm{j}$ such that $|\bm{j}|\leq m-lq$, where $\bm{c}_{l,\bm{j}}$ and $\bm{d}_{l,\bm{j}}$ are defined as in $(3)$. 
\end{namedthm}

\begin{proof}
	We will first introduce a convenient formalism to described the conditions in the form of \eqref{eq:taylor_appendix} in terms of the coefficient matrix $\bm{A}$ of the moment dynamics
	\begin{align*}
	    \bm{K}\frac{d\bm{y}}{dt}(t) = \bm{A} \bm{y}(t).
	\end{align*}
	To that end, we assume that the moments are ordered according to their corresponding multi-indices, i.e., the first entry corresponds to $y_{\bm{0}}$, the following $n$ entries correspond to the moments $y_{\bm{j}}$ with $|\bm{j}| \leq 1$ and so on. Then, we define the matrices $\bm{K}_l = [\bm{I}_{n_L(l)\times n_L(l)} \ \bm{0}_{n_L(l)\times n_H(l)}]$ with $n_L(l) = {m - lq + n \choose n}$ and $n_H(l) = {m+q+n \choose n} - n_L(l)$. In words, applying $\bm{K}_l$ to $\bm{y}$ extracts all moments up to order $m-lq$. Now we can express the higher derivatives $\frac{d^k}{dt^k}\left(\bm{K}_l \bm{y}\right)$ for any $k \leq l+1$ in the following way: let $\bm{A}_i$ be the coefficient matrix such that $\frac{d}{dt}\left( \bm{K}_{i+1} \bm{y} \right) = \bm{A}_i \bm{K}_{i} \bm{y}$. By induction, we then arrive at the convenient expression
	\begin{align*}
	\frac{d^k}{dt^k}\left(\bm{K}_l \bm{y}\right) = \prod_{i=1}^{k} \bm{A}_{l-i} \bm{K}_{l-k} \bm{y}.
	\end{align*}
	Recalling the definition of Equation \eqref{eq:taylor_appendix} in terms of the Taylor series expansion of the true moment trajectories, it follows that the equation system 
	\begin{align}
	\bm{K}_l \bm{y}_{t_2} = \sum_{k=0}^l \frac{(t_2-t_1)^k}{k!} \prod_{i=1}^{k} \bm{A}_{l-i} \bm{K}_{l-k} \bm{y}_{t_1} + \frac{1}{l!} \prod_{i=1}^{l+1} \bm{A}_{l-i} \bm{z}_{g_{l},t_2} \tag{$2_l$} \label{eq:prop_1_to_show}
	\end{align}
	corresponds to all equations of the form of $(5)$ with $|\bm{j}| \leq m - lq$ for fixed $l \in \{0,\dots,n_I\}$. Note that we already used the fact that $\bm{K}_{-1}$ reduces to the identity in the above relation. 
	
	Now notice that Equation \eqref{eq:prop_1_to_show} holds for all $l =0,\dots,n_I$ if and only if 
	\begin{align}
	\bm{K}\left(\bm{y}_{t_2} - \bm{y}_{t_1}\right) = \bm{A}\bm{z}_{g_0,t_2} \label{eq:prelim_1}
	\end{align}
	holds alongside
	\begin{align}
	\frac{1}{(l-1)!} \prod_{i=1}^{l} \bm{A}_{l-i} \bm{K}_{0}\bm{z}_{g_{l-1},t_2} = \frac{(t_2-t_1)^l}{l!}\prod_{i=1}^{l} \bm{A}_{l-i} \bm{K}_{0} \bm{y}_{t_1} + \frac{1}{l!} \prod_{i=1}^{l+1} \bm{A}_{l-i} \bm{z}_{g_{l},t_2} \label{eq:prelim_2} 
	\end{align}
	for $l=1,\dots,n_I$ since Equation \eqref{eq:prelim_1} is equivalent to Equation $(2_0)$ and Equation \eqref{eq:prelim_2} is obtained simply by matching coefficients of the Equations $(2_l)$ and $(2_{l-1})$ for each $l \in \{1,\dots,n_I\}$.
	
	Factoring out $\prod_{i=1}^l A_{l-i}$ in Equation \eqref{eq:prelim_2} and noting that $\bm{A}_{-1} = \bm{A}$ as well as $\bm{K}_0 = \bm{K}$ shows that Equation \eqref{eq:prelim_2} is implied by
	\begin{align}
	l \bm{K} \bm{z}_{g_{l-1},t_2} = (t_2-t_1)^l\bm{K} \bm{y}_{t_1} + \bm{A} \bm{z}_{g_{l},t_2} \label{eq:prelim}.
	\end{align}
	Thus, by using that $g_l(t_2) = 0$ for $l=1,\dots,n_I$, it follows that \eqref{eq:prelim} is equivalent to \eqref{eq:taylor_refined_appendix} for $l=1,\dots,n_I$. We finally conclude the proof by noting that \eqref{eq:prelim_1} is equivalent to \eqref{eq:taylor_refined_appendix} for $l=0$.
	
\end{proof}

\section{Commuting Property of $I_L$ and $I_R$}\label{app:commuting_operators}
\begin{lemma}
    The left and right integral operators $I_L, I_R: \mathcal{C}(\mathbb{R}^2) \to \mathcal{C}(\mathbb{R}^2)$ defined by
    \begin{align*}
    	(I_L f)(t_1,t_2) = \int_{t_1}^{t_2} f(t_1, t) \, dt\\
    	(I_R f)(t_1,t_2) = \int_{t_1}^{t_2} f(t, t_2) \, dt.
    \end{align*}
    commute.
\end{lemma}
\begin{proof}
    Using the definition of $I_L$ and $I_R$ we obtain 
    \begin{align*}
        (I_L I_R f)(t_1,t_2) = \int_{t_1}^{t_2} (I_R f)(t_1, t) \, dt = \int_{t_1}^{t_2}  \int_{t_1}^{t} f(s,t) \, ds\, dt  
    \end{align*}
    By defining $\Omega_{LR} = \{(s,t) \in \mathbb{R}^2 :  t_1 \leq  s \leq t,  t_1 \leq t \leq t_2\}$ we can write more concisely
    \begin{align*}
        (I_L I_R f)(t_1,t_2) = \int_{ \Omega_{LR} } f(s,t) \, d(s,t) 
    \end{align*}
    Likewise, we have that 
    \begin{align*}
        (I_R I_L f)(t_1,t_2) = \int_{t_1}^{t_2} (I_L f)(s, t_2) \, ds = \int_{t_1}^{t_2} \int_{s}^{t_2} f(s,t) \, dt\, ds  = \int_{\Omega_{RL}} f(s,t) \, d(s,t) 
    \end{align*}
    where $\Omega_{RL} = \{ (s,t) \in \mathbb{R}^2 : t_1 \leq s \leq t_2, s\leq t \leq t_2 \}$. Observing that $\Omega_{RL} = \Omega_{LR}$ concludes the proof.  
\end{proof}

\section{Proof of Corollary 1}\label{app:cor_implications}
\begin{corollary} \label{cor:app_implications}
	Let $n_I \in \Z_+$ and $t_T > 0$ be fixed. Further suppose $g \in \mathcal{AC}([0,t_T])$ is non-negative, and let $\bm{y}$ and $\bm{z}^l(g;\cdot)$ be arbitrary functions such that $\bm{z}^l(g;\cdot)$ is linear in the first argument and $\bm{z}^0(g;t)= g(t)\bm{y}(t)$ holds. Fix $0 \leq t_1 \leq t_2 \leq t_T$ and define $h_l(t) = \mathds{1}_{[t_1,t_2]}(t)(t_2-t)^l$ for $l=0,1,\dots,n_I$. If Conditions (i) and (ii) of Proposition 2 are satisfied by $\St{\bm{z}^l(g;t_i)}_{l=0}^{n_I+1}$ for $i=1,2$, then there exist functions $\bm{z}(h_lg;\cdot)$ that are linear in the first argument, and satisfy
	\begin{align}
		\bm{K}\left(h_l(t_2)g(t_2)\bm{y}(t_2) - h_l(t_1)g(t_1)\bm{y}(t_1) \right) =\bm{A} \bm{z}(h_l g;t_2)  + \bm{K}\bm{z}((h_lg)';t_2) \label{eq:cor_1_toshow_1} 
	\end{align}
	and
	\begin{align}
		\bm{z}(h_l g; t_2) \in C(X)  \label{eq:cor_1_toshow_2}
	\end{align}
	for all $l \in \St{0,\dots,n_I}$. 
\end{corollary}
\begin{proof}
	Let $l\in \St{0,\dots, n_I}$ be fixed. We show that 
    	\begin{align}
    	\bm{z}(h_l g;t_2) =  l! \left(\bm{z}^{l+1}(g;t_2) - \sum_{k=0}^l \frac{(t_2-t_1)^{l-k}}{(l-k)!} \bm{z}^{k+1}(g;t_1) \right) \label{eq:moment_expansion}
	\end{align}
	satisfies the desired conditions. 
	First note that \eqref{eq:moment_expansion} is equivalent to 
	\begin{align*}
	    \bm{z}(h_l g;t_2) = l! \Omega_{l+1,0}\left(\St{\bm{z}^i(g;t_1)}_{i=1}^{l+1},\St{\bm{z}^i(g;t_2)}_{i=1}^{l+1},t_1,t_2\right).
	\end{align*}
	Since by assumption 
	\begin{align*}
	    \Omega_{l+1,0}\left(\St{\bm{z}^i(g;t_1)}_{i=1}^{l+1},\St{\bm{z}^i(g;t_2)}_{i=1}^{l+1},t_1,t_2\right) \in C(X),
	\end{align*}
	it follows that \eqref{eq:cor_1_toshow_2} holds for the definition in \eqref{eq:moment_expansion}. 
	
	Next, note that \eqref{eq:cor_1_toshow_1} can be equivalently written as
	\begin{align*}
    	\bm{A} \bm{z}(h_{l}g;t_2) = \bm{K} \left( l \bm{z}(h_{l-1}g;t_2) - \bm{z}(h_lg';t_2) -(t_2-t_1)^l g(t_1)\bm{y}(t_1) \right)
	\end{align*}
	using the definition of $h_l$ and that by assumption $\bm{z}((h_lg)';t_2) = \bm{z}(h_l'g;t_2) + \bm{z}(h_lg';t_2)$.	Thus, we need to show that the following identity holds:
	\begin{align*}
    	\bm{A}\left(\bm{z}^{l+1}(g;t_2) - \sum_{k=0}^l \frac{(t_2-t_1)^{l-k}}{(l-k)!} \bm{z}^{k+1}(g;t_1) \right) =& \bm{K} \left(\bm{z}^{l}(g;t_2) - \sum_{k=0}^{l-1} \frac{(t_2-t_1)^{l-1-k}}{(l-1-k)!} \bm{z}^{k+1}(g;t_1) \right)\\
    	& - \bm{K} \left(\bm{z}^{l+1}(g';t_2) - \sum_{k=0}^{l} \frac{(t_2-t_1)^{l-k}}{(l-k)!} \bm{z}^{k+1}(g';t_1) \right)\\
    	& -\bm{K} \frac{(t_2-t_1)^{l}}{l!} g(t_1)\bm{y}(t_1).
	\end{align*}
	Reordering the terms in this relation yields
	\begin{align*}
		&\sum_{k=0}^l \frac{(t_2-t_1)^{l-k}}{(l-k)!} \left(\bm{A} \left[ \bm{z}^{k+1}(g;t_2) - \bm{z}^{k+1}(g;t_1) \right] + \bm{K} \left[\bm{z}^{k+1}(g';t_2) - \bm{z}^{k+1}(g';t_1)\right]\right)\\
		& -\sum_{k=1}^{l} \frac{(t_2-t_1)^{l-k}}{(l-k)!} \bm{K}\left(\bm{z}^{k}(g;t_2) - \bm{z}^{k}(g;t_1)\right)\\
		=&  \sum_{k=0}^{l-1} \frac{(t_2-t_1)^{l-k}}{(l-k)!} \left(\bm{A}\bm{z}^{k+1}(g;t_2) + \bm{K} \bm{z}^{k+1}(g';t_2)\right) - \bm{K} \sum_{k=1}^{l-1} \frac{(t_2-t_1)^{l-k}}{(l-k)!} \bm{z}^{k+1}(g;t_2) \\
		& -  \frac{(t_2-t_1)^l}{l!} g(t_1) \bm{y}(t_1)
	\end{align*}
	and finally using that
	\begin{align*}
	    \bm{A}\bm{z}^{k+1}(g;t_i) + \bm{K}\bm{z}^{k+1}(g';t_i) = \bm{K} \left(\bm{z}^k(g;t_i) - \frac{(t_2-t_1)^k}{k!}g(0) \bm{y}(0) \right) 
	\end{align*}
	holds for all $k \in \St{0, \dots, n_I}$ and $i \in \St{1,2}$ with $\bm{z}^{0}(g;t_i) = g(t_i)\bm{y}(t_i)$, it is easily verified that \eqref{eq:cor_1_toshow_1} is indeed satisfied for $\bm{z}(h_lg;t_2)$ as defined in \eqref{eq:moment_expansion}. 
\end{proof}

\section{Proof of Corollary 2}\label{app:cor_linear_scaling}
		Our proof of Corollary 2 relies in large part on explicit computations. To that end, it is first essential to provide an explicit algebraic relation for $\Omega_{l,k}(\Set{\bm{z}^s(g;t_{1})}_{s=1}^{l}, \Set{\bm{z}^s(g;t_{2})}_{s=1}^{l}, t_1, t_{2})$. Recall that $\Omega_{k,l}$ was introduced to compactly denote expressions of the form $(I^n_L I^m_R \bm{f})(t_1,t_2)$ with $\bm{f}(x,y) = \bm{z}^1(g;y) - \bm{z}^1(g;x)$. To avoid unnecessary clutter of notation, we will give an explicit algebraic expression for the latter; adjusting the indices to obtain an expression for $\Omega_{l,k}$ is straightforward and can be found in the proof of Corollary 2. It is easily verified by induction that for any $m,n \in \Z_+$, the following identity holds 	
\begin{multline}	
	(I^n_L I^m_R \bm{f})(t_1,t_2) = \sum_{k=0}^m (-1)^k {n + k \choose n} \frac{(t_2-t_1)^{m-k}}{(m-k)!} \bm{z}^{n+1+k}(g;t_2) + (-1)^{m+1} \sum_{k=0}^n {m+k \choose m} \frac{(t_2- t_1)^{n-k}}{(n-k)!} \bm{z}^{m+1+k}(g;t_1) \label{eq:omega}	
\end{multline}
With this in hand, we will first proof the following Lemma which answers the essential question behind Corollary 2.
\begin{lemma}\label{lemma:scaling}
	Let $\bm{z}^l$ be a sequence of functions that satisfies $\bm{z}^{l}(t) =\int_0^t \bm{z}^{l-1}(\tau)d\tau$ for $l\geq 2$. Further, define $\bm{f}(x,y)= \bm{z}^1(y)- \bm{z}^1(x)$ and consider a convex cone $C$. Then, for any $0\leq t_1 \leq t_2 \leq t_3 < +\infty$ and $n,m \in \Z_+$, the following implications hold:
	\begin{align}
    	\begin{array}{l}
    	(I^{p}_L I^{s}_R \bm{f})(t_1,t_2),(I^{p}_L I^{s}_R \bm{f})(t_2,t_3) \in C \text{ for all } 0\leq p\leq n+1 \text{ and } 0 \leq s\leq m  \\[1em]
    	\implies (I^{n+1}_L I^{m}_R \bm{f})(t_1,t_3) \in C
    	\end{array} \label{eq:to_show_cor2_a}
	\end{align}
	and 
	\begin{align}
    	\begin{array}{l}
    	(I^{p}_L I^{s}_R \bm{f})(t_1,t_2),(I^{p}_L I^{s}_R \bm{f})(t_2,t_3) \in C  \text{ for all } 0\leq p\leq n \text{ and } 0 \leq s\leq m +1 \\[1em]
    	\implies (I^{n}_L I^{m+1}_R \bm{f})(t_1,t_3) \in C
    	\end{array} \label{eq:to_show_cor2_b}
	\end{align}
\end{lemma}
\begin{proof}
	We focus on proving \eqref{eq:to_show_cor2_a}. 
	To that end, note that in order to establish \eqref{eq:to_show_cor2_a}, it suffices to show that
	\begin{align}
	\int_{t_2}^{t_3} (I^n_L I^m_R \bm{f})(t_1,t) - (I^n_L I^m_R \bm{f})(t_2,t)\, dt \in C \label{eq:toshow}
	\end{align}
	since
	\begin{align*}
	(I^{n+1}_L I^{m}_R \bm{f})(t_1,t_3) = \int_{t_1}^{t_2} (I^n_L I^m_R \bm{f})(t_1,t)\, dt + \int_{t_2}^{t_3} (I^n_L I^m_R \bm{f})(t_1,t) \, dt.
	\end{align*}
	We will thus show that \eqref{eq:toshow} can be expressed as a conic combination of $(I^p_L I^s_R \bm{f})(t_1,t_2)$ and $(I^p_L I^s_R \bm{f})(t_2,t_3)$ for all $0 \leq p \leq n +1$ and $0 \leq s \leq m$. From the definition we obtain
	\begin{align*}
	(I^n_L I^m_R \bm{f})(t_1,t) - (I^n_L I^m_R \bm{f})(t_2,t) =& \sum_{k=0}^m (-1)^{m-k} {n+m-k \choose n} \frac{(t-t_1)^k}{k!} \bm{z}^{n+m+1-k}(t)\\
	&- \sum_{k=0}^m (-1)^{m-k} {n+m-k\choose n} \frac{(t-t_2)^k}{k!} \bm{z}^{n+m+1-k}(t) \\
	&+ (-1)^{m+1} \sum_{k=0}^n { m+n-k \choose m} \frac{(t-t_1)^k}{k!} \bm{z}^{n+m+1-k}(t_1)\\
	&- (-1)^{m+1} \sum_{k=0}^n { m+n-k \choose m} \frac{(t-t_2)^k}{k!} \bm{z}^{n+m+1-k}(t_2).
	\end{align*}
	Using the binomial formula, we further obtain  
	\begin{align*}
	&\sum_{k=0}^m (-1)^{m-k} {n+m-k \choose n} \frac{(t-t_1)^k}{k!} \bm{z}^{n+m+1-k}(t) \\
	=&\sum_{k=0}^m (-1)^{m-k} {n+m-k \choose n} \bm{z}^{n+m+1-k}(t) \sum_{l=0}^k \frac{(t-t_2)^{k-l}}{(k-l)!}\frac{(t_2-t_1)^{l}}{l!}\\
	=&\sum_{l=0}^m \frac{(t_2-t_1)^{l}}{l!} \sum_{k=0}^{m-l} (-1)^{m-k-l} {n+m-k-l \choose n} \frac{(t-t_2)^{k}}{k!} \bm{z}^{n+m+1-k-l}(t) 
	\end{align*}
	and likewise 
	\begin{align*}
	&(-1)^{m+1} \sum_{k=0}^n { m+n-k \choose m} \frac{(t-t_1)^k}{k!} \bm{z}^{n+m+1-k}(t_1) \\
	=& (-1)^{m+1} \sum_{k=0}^n { m+n-k \choose m}\bm{z}^{n+m+1-k}(t_1) \sum_{l=0}^k \frac{(t_2-t_1)^{k-l}}{(k-l)!} \frac{(t-t_2)^l}{l!} \\
	=& (-1)^{m+1} \sum_{l=0}^n \frac{(t-t_2)^l}{l!} \sum_{k=0}^{n-l} { m+n-k-l \choose m} \frac{(t_2-t_1)^{k}}{k!} \bm{z}^{n+m+1-k-l}(t_1).
	\end{align*}
	
	Combining the above identities with the definition thus yields
	\begin{align*}
	&(I^n_L I^m_R \bm{f})(t_1,t) - (I^n_L I^m_R \bm{f})(t_2,t) \\
	=&\sum_{l=1}^m \frac{(t_2-t_1)^l}{l!} \sum_{k=0}^{m-l} (-1)^{m-k-l} {n + m -l - k \choose n}\frac{(t-t_2)^k }{k!} \bm{z}^{n+m+1-l-k}(t) \\
	&+ \sum_{l=0}^n \frac{(t-t_2)^l}{l!} \left \lbrace (-1)^{m+1} \sum_{k=0}^{n-l} { m+n-k-l \choose m} \frac{(t_2-t_1)^{k}}{k!} \bm{z}^{n+m+1-k-l}(t_1)\right.\\
	& \left.\qquad \qquad \qquad \qquad -(-1)^{m+1}  { m+n-l \choose m}  \bm{z}^{n+m+1-l}(t_2)\right\rbrace.
	\end{align*}
	
	By adding and subtracting
	\begin{align*}
	&\sum_{l=1}^m (-1)^{m-l+1} \frac{(t_2-t_1)^l}{l!} \sum_{k=0}^{n} {n+m-l-k\choose m-l} \frac{(t-t_2)^k}{k!} \bm{z}^{n+m+1-l-k}(t_2)  \\
	=&\sum_{l=0}^{n}\frac{(t-t_2)^l}{l!} \sum_{k=1}^m (-1)^{m-k+1} {n+m-l-k \choose n-l}  \frac{(t_2-t_1)^{k}}{k!} \bm{z}^{n+m+1-l-k}(t_2),
	\end{align*} 
	we get 
	\begin{align*}
	&(I^n_L I^m_R \bm{f})(t_1,t) - (I^n_L I^m_R \bm{f})(t_2,t)\\
	=&\sum_{l=1}^m \frac{(t_2-t_1)^l}{l!} \left\lbrace (-1)^{m-l+1} \sum_{k=0}^n {n+m-l-k \choose m-l} \frac{(t-t_2)^k}{k!} \bm{z}^{n+m+1-l-k}(t_2)  \right.\\
	&\qquad \qquad \qquad \qquad \left. + \sum_{k=0}^{m-l} (-1)^{m-k-l} {n + m -l - k \choose n}\frac{(t-t_2)^k }{k!} \bm{z}^{n+m+1-l-k}(t) \right\rbrace \\
	&+(-1)^{m+1} \sum_{l=0}^n \frac{(t-t_2)^l}{l!}\left \lbrace \sum_{k=0}^{n-l}{m+n-k-l \choose m} \frac{(t_2-t_1)^k}{k!} \bm{z}^{n+m+1-k-l}(t_1) \right.\\
	&\left.\qquad \qquad  \qquad \qquad  \qquad \qquad \qquad \qquad \qquad \qquad  - {m+n-l\choose m} \bm{z}^{n+m+1-l}(t_2) \right.\\
	&\qquad \qquad \qquad \qquad \qquad \left. - \sum_{k=1}^{m} (-1)^{-k} {n+m-l-k \choose m-k} \frac{(t_2-t_1)^k}{k!} \bm{z}^{n+m+1-l-k}(t_2)  \right\rbrace
	\end{align*}
	which can further be simplified to
	\begin{align*}
	&\sum_{l=1}^m \frac{(t_2-t_1)^l}{l!} \left\lbrace (-1)^{m-l+1} \sum_{k=0}^n {n+m-l-k \choose m-l} \frac{(t-t_2)^k}{k!} \bm{z}^{n+m+1-l-k}(t_2)  \right.\\
	&\qquad \qquad \qquad \qquad \left. + \sum_{k=0}^{m-l} (-1)^{m-k-l} {n + m -l - k \choose n}\frac{(t-t_2)^k }{k!} \bm{z}^{n+m+1-l-k}(t) \right\rbrace \\
	&+ \sum_{l=0}^n \frac{(t-t_2)^l}{l!}\left \lbrace (-1)^{m+1} \sum_{k=0}^{n-l}{m+n-k-l \choose m} \frac{(t_2-t_1)^k}{k!} \bm{z}^{n+m+1-k-l}(t_1) \right.\\
	&\qquad \qquad \qquad \qquad \left. + \sum_{k=0}^{m} (-1)^{m-k} {n+m-l-k \choose m-k} \frac{(t_2-t_1)^k}{k!} \bm{z}^{n+m+1-l-k}(t_2)  \right\rbrace.
	\end{align*}
	Close inspection reveals that this is precisely what we set out to show: every term in the first sum is equivalent to $(I^n_LI^{m-l}_R \bm{f})(t_2,t)$ while every term in the second sum is equivalent to $(I^{n-l}_LI^{m}_R \bm{f})(t_1,t_2)$. Thus, integrating the above relation over $[t_2,t_3]$ yields a conic combination of $(I^p_L I^s_R \bm{f})(t_1,t_2)$ and $(I^p_L I^s_R \bm{f})(t_2,t_3)$ with $0 \leq p \leq n + 1$ and $0 \leq s \leq m$. 
	
	The proof of implication \eqref{eq:to_show_cor2_b} is analogous.
\end{proof}

\begin{namedthm}{Corollary 2}
	Let $0 \leq t_1 \leq t_2 \leq t_3 < +\infty$ and $n_I$ be a fixed positive integer. Suppose $\St{\bm{z}^s}_{s=1}^{n_I}$ is a set of functions such that
	\begin{align*}
	     \Omega_{l,k}(\Set{\bm{z}^s(t_{i})}_{s=1}^{l}, \Set{\bm{z}^s(t_{i+1})}_{s=1}^{l}, t_i, t_{i+1}) \in C(X)
	\end{align*}
	for all $i \in \{1,2\}$ and $k,l \in \Z_+$ such that $k < l \leq n_I$. Then, 
	\begin{align*}
	\Omega_{l,k}(\Set{\bm{z}^s(t_{1})}_{s=1}^{l}, \Set{\bm{z}^s(t_{3})}_{s=1}^{l}, t_1, t_{3}) \in C(X)
	\end{align*}
	holds for all $k,l \in \Z_+$ such that $k < l \leq n_I$. 
\end{namedthm}
\begin{proof}
	As argued in the beginning of this section, $\Omega_{l,k}(\Set{\bm{z}^s(t_{i})}_{s=1}^{l}, \Set{\bm{z}^s(t_{i+1})}_{s=1}^{l}, t_i, t_{i+1})$ takes the following algebraic form
	\begin{multline*}
		\Omega_{l,k}(\Set{\bm{z}^s(t_{i})}_{s=1}^{l}, \Set{\bm{z}^s(t_{i+1})}_{s=1}^{l}, t_i, t_{i+1}) = \sum_{s=0}^k (-1)^s {l-1-k + s \choose l-1-k} \frac{(t_2-t_1)^s}{s!} \bm{z}^{l-k+s}(t_{i+1}) \\ + (-1)^{k+1} \sum_{s=0}^{l-1-k} {k+s \choose k} \frac{(t_2- t_1)^{l-1-k-s}}{(l-1-k-s)!} \bm{z}^{k+1+s}(t_i).
	\end{multline*}
	The result therefore follows immediately from Lemma \ref{lemma:scaling}. 
\end{proof}

\section{Reformulation of Condition (i) in Proposition 2}\label{app:reformulation}
In the definition of $\mathsf{S}(\mathsf{F},\mathsf{T},n_I)$ we reformulated Condition (i) of Proposition 2 in order to endow  $\mathsf{S}(\mathsf{F},\mathsf{T},n_I)$ with a decomposable structure. Without modification, Condition (i) in Proposition 2 gives rise to the following necessary moment conditions 
\begin{align}
    \Gamma_g\left(\{\bm{z}^{l}(f;t)\}_{f \in \mathsf{F}}\right) = \bm{K} \left(\bm{z}^{l-1}(g;t) - \frac{t^{l-1}}{(l-1)!}g(0) \bm{y}_0\right), \quad \forall t \in \mathsf{T}. \label{eq:cond_i}
\end{align}
In order to turn this condition into a form that lends itself to be decomposed as described in Section IV, we can recognize that the above conditions for evaluated for different time points are only linked through the term $g(0) \bm{K} \bm{y}_0$. To replace this linkage with a link between adjacent time points as required for decomposability, we can rearrange the above equation to obtain 
\begin{align}
    g(0) \bm{K}  \bm{y}_0 = \frac{(l-1)!}{t^{l-1}}\left(\bm{K} \bm{z}^{l-1}(g;t) - \Gamma_g\left(\{\bm{z}^{l}(f;t)\}_{f \in \mathsf{F}}\right) \right), \quad \forall t \in \mathsf{T} \label{eq:link}.
\end{align} 
We then simply replace the linking term in Equation \eqref{eq:cond_i} by the right-hand-side of Equation \eqref{eq:link} evaluated at an appropriately shifted time point. Concretely, for the first time point in $\mathsf{T}$ we retain the unmodified equation 
\begin{align*}
    &\Gamma_g\left(\{\bm{z}^{l}(f;t_1)\}_{f \in \mathsf{F}}\right) = \bm{K} \left(\bm{z}^{l-1}(g;t_1) - \frac{t_1^{l-1}}{(l-1)!}g(0) \bm{y}_0\right)
\end{align*}
and for any other $t \in \mathsf{T}$ we reformulate Equation \eqref{eq:cond_i} by replacing $g(0)\bm{K}\bm{y}_0$ with the right-hand-side of Equation \eqref{eq:link} evaluated at the adjacent time point $n(t)$. This yields the desired relation
\begin{align*}
	 \Gamma_g\left(\{\bm{z}^{l}(f;t)\}_{f \in \mathsf{F}}\right) - \left(\frac{t}{n(t)}\right)^{l-1}\Gamma_g \left(\{\bm{z}^{l}(f;n(t))\}_{f \in \mathsf{F}}\right) = \bm{K} \left(\bm{z}^{l-1}(g;t) - \left(\frac{t}{n(t)}\right)^{l-1} \bm{z}^{l-1}(g;n(t)) \right), \quad \forall t \in \mathsf{T} \setminus \{t_1\}.
\end{align*}